\newcommand{\mifody}{%
  \renewcommand\rmdefault{wncyr}%
  \renewcommand\sfdefault{wncyss}%
  \renewcommand\encodingdefault{OT2}%
  \normalfont
  \selectfont}
\newcommand{\zeroRT}[1]{\Bigl[0^{\displaystyle^{#1}}\Bigr]}
\newcommand{\zeroLB}[1]{\Bigl[_{\displaystyle^{#1}}0\Bigr]}
\newcommand{\sg}{\mathrm{sg}}
\newcommand{\sbar}{\overline{s}}
\newcommand{\nbar}{\overline{n}}
\newcommand{\ibar}{\overline{i}}
\newcommand{\alphabar}{\overline{\alpha}}
\newcommand{\betabar}{\overline{\beta}}
\newcommand{\qK}{\boldsymbol{\mathrm{A}}} 
\newcommand{\qA}{\qK_{2n}}
\newcommand{\Rep}{\mathrm{Rep}} 
\newcommand{\injection}{\rightarrowtail}
\newcommand{\surjection}{\twoheadrightarrow}
\newcommand{\embedding}{\injection}
\newcommand{\modM}{\mathsf{M}}
\newcommand{\modN}{\mathsf{N}}
\newcommand{\modNbar}{\text{\mifody\sf I}}
\newcommand{\modW}{\mathsf{W}}
\newcommand{\soc}{\mathrm{soc}}
\newcommand{\rad}{\mathrm{rad}}
\renewcommand{\hat}{\widehat}
\newcommand{\idem}{\boldsymbol{e}}
\newcommand{\bref}[1]{\textbf{\ref{#1}}}
\newcommand{\im}{\mathop{\mathrm{im}}\nolimits}
\renewcommand{\ker}{\mathop{\mathrm{ker}}\nolimits}
\renewcommand{\geq}{\,{\geqslant}\,}
\renewcommand{\leq}{\,{\leqslant}\,}
\renewcommand{\ge}{\,{\geqslant}\,}
\renewcommand{\le}{\,{\leqslant}\,}
\newcommand{\FunF}{\mathcal{F}} 
\newcommand{\FunG}{\mathcal{G}} 
\newcommand{\HomC}{\mathrm{Hom}_{\mathbb{C}}}
\newcommand{\justHom}{\mathrm{Hom}}
\newcommand{\Hom}{\mathrm{Hom}_{\rule{0pt}{6.5pt}%
\mathscr{LU}_q}}
\newcommand{\Homm}{\mathrm{Hom}}
\newcommand{\Ext}{\mathrm{Ext}_{\rule{0pt}{6.5pt}%
    \mathscr{LU}_q}^1}
\newcommand{\ExtnU}{\mathrm{Ext}_{\rule{0pt}{6.5pt}%
    \overline{\mathscr{U}}_q}^n}
\newcommand{\ExtUbul}{\mathrm{Ext}_{\rule{0pt}{6.5pt}%
    \overline{\mathscr{U}}_q}^{\bullet}}
\newcommand{\Extn}{\mathrm{Ext}_{\rule{0pt}{6.5pt}%
    \mathscr{LU}_q}^n}
\newcommand{\EXT}{\mathrm{Ext}^\bullet}
\newcommand{\bP}{\mathbf{P}}
\newcommand{\morX}{X}
\newcommand{\modI}{\mathsf{Ex}}
\newcommand{\ob}{\mathrm{Ob}}
\newcommand{\tensor}{\otimes}
\newcommand{\q}{\mathfrak{q}}
\newcommand{\catC}{\mathscr{C}}
\newcommand{\lc}{\catC}
\newcommand{\catCpl}{\mathscr{C}^{+}}
\newcommand{\catCmin}{\mathscr{C}^{-}}
\newcommand{\catSpl}{\mathscr{S}^{+}}
\newcommand{\catSmin}{\mathscr{S}^{-}}
\newcommand{\fusion}{%
  \mathop{{\otimes}\kern-7pt\raisebox{.6pt}{%
      \mbox{\footnotesize ${\bullet}$}}}}
\newcommand{\UresSL}[1]{\overline{\mathscr{U}}_{\q} s\ell(#1)}
\newcommand{\LUresSL}[1]{\mathscr{LU}_{\q} s\ell(#1)}
\newcommand{\LUq}{\mathscr{LU}_{\q}}
\newcommand{\ffrac}[2]{\mbox{\footnotesize $\displaystyle\frac{#1}{#2}$}}
\newcommand{\half}{%
  \mathchoice{\ffrac{1}{2}}{\frac{1}{2}}{\frac{1}{2}}{\frac{1}{2}}}
\newcommand{\LM}{{\mathcal{LM}}}
\newcommand{\repX}{\mathsf{X}}
\newcommand{\modPP}{\mathbb{P}}
\newcommand{\modII}{\mathbb{I}}
\newcommand{\XX}{\mathsf{X}} 
\newcommand{\PP}{\mathsf{P}}
\newcommand{\modPr}{\PP}
\newcommand{\repF}{\rep{F}}
\newcommand{\rep}{\mathscr}  
\newcommand{\voal}{\mathcal} 
\newcommand{\Vir}{\voal V_{p}}
\newcommand{\modR}{\mathcal{R}}
\newcommand{\oC}{\mathbb{C}}
\newcommand{\oN}{\mathbb{N}}
\newcommand{\oZ}{\mathbb{Z}}
\newcommand{\one}{\boldsymbol{1}}
\newcommand{\toppr}{\mathsf{t}}
\newcommand{\botpr}{\mathsf{b}}
\newcommand{\leftpr}{\mathsf{l}}
\newcommand{\rightpr}{\mathsf{r}}
\newcommand{\midpr}{\mathsf{m}}
\newcommand{\stprp}{\mathsf{a}}
\newcommand{\cas}{\boldsymbol{C}}
\numberwithin{equation}{section}
\def\@secnumfont{\bfseries}
\def\subsubsection{\@startsection{subsubsection}{3}%
  \z@{.5\linespacing\@plus.7\linespacing}{-.5em}%
  {\normalfont\bfseries}}
\def\paragraph{\@startsection{paragraph}{4}%
  \z@\z@{-\fontdimen2\font}%
  \normalfont\bfseries}
\def\subparagraph{\@startsection{subparagraph}{5}%
  \z@\z@{-\fontdimen2\font}%
  \normalfont\bfseries}
\newcommand{\rme}{{\rm e}}
\newcommand{\step}{\mathrm{step}}
\newtheorem{Thm}[subsection]{Theorem}
\newtheorem{thm}[subsubsection]{Theorem}
\newtheorem{lemma}[subsubsection]{Lemma}
\newtheorem{Prop}[subsection]{Proposition}
\newtheorem{prop}[subsubsection]{Proposition}
\newtheorem{cor}[subsubsection]{Corollary}
\newtheorem{Conj}[subsection]{Conjecture}
\theoremstyle{definition}
\newtheorem{rem}[subsubsection]{Remark}
\begin{document}
\title[Kazhdan-Lusztig equivalence and fusion of Kac modules]{%
   Kazhdan-Lusztig equivalence and fusion
 of Kac modules in Virasoro 
   logarithmic models}
\author{P.V.~Bushlanov, A.M.~Gainutdinov and I.Yu.~Tipunin}
\address{PVB:Moscow Institute of Physics and Technology, Dolgoprudny, Institutskiy per. 9 , Russia, 141700 }
\email{paulbush@mail.ru} 
\address{AMG:Institut de Physique Th\'eorique, CEA Saclay, Gif Sur Yvette, 91191, France}
\email{azat.gaynutdinov@cea.fr} 
\address{IYuT:Tamm Theory Division, Lebedev Physics Institute, Leninski pr., 53,
Moscow, Russia, 119991}
\email{tipunin@gmail.com} 

\maketitle

\begin{abstract}
The subject of our study is the Kazhdan--Lusztig (KL) equivalence in
the context of a one-parameter family of logarithmic CFTs based on
Virasoro symmetry with the $(1,p)$ central charge.  All
finite-dimensional indecomposable modules of the KL-dual quantum group
-- the ``full'' Lusztig quantum $s\ell(2)$ at the root of unity -- are
explicitly described.  These are exhausted by projective modules and
four series of modules that have a functorial correspondence with any
quotient or a submodule of Feigin--Fuchs modules over the Virasoro
algebra.  Our main result includes calculation of tensor products of
any pair of the indecomposable modules.  Based on the Kazhdan-Lusztig
equivalence between quantum groups and vertex-operator algebras,
fusion rules of Kac modules over the Virasoro algebra in the $(1,p)$
LCFT models are conjectured.
\end{abstract}

\section{Introduction}
Logarithmic conformal field theories (LCFTs) have proven to be one of the
richest subjects in theoretical and mathematical physics with
applications in a ``non-unitary'' world ranging from modeling avalanche
processes~\cite{[PGPR]}, observables in stochastic processes
$\text{SLE}(\kappa,\rho)$~\cite{Kyt}, and surface critical behaviour in
O$(n)$ models and loop models~\cite{[DJS]}, to
percolation probabilities~\cite{[MRperc],[Ridout],[PRZ]}, and edge
states in the quantum Hall effect~\cite{ReadSaleur01,[BJS]}.
Beside the physical applications, LCFTs give rise to subjects of intense
studies from a more formal point of view including a free-field
representation~\cite{[GaberdielKausch3],[GK2],[FHST],[FGST3]},
vertex-operator algebras approach~\cite{[HLZ],[AM1]}, Zhu algebras
aspects and super W-algebras~\cite{[AM]}, quantum-group
dualities~\cite{[FGST],[FGST4],Semikh-q} and Verlinde
algebras~\cite{[FHST],[GT],[GR2],[PRR]}, construction of a new class of
W-algebras extending symmetry in rational CFTs based on affine
$s\ell(2)$~\cite{Semikh}, an interplay between rational boundary LCFTs and
non-semisimple braided finite tensor categories~\cite{[GabRW],[FSch]}, and recently in
defining a wide family of LCFTs parametrized by Dynkin
diagrams~\cite{[FT]}.

One of important achievements made in studying LCFTs has been the
systematic definition of chiral algebras in terms of so-called
screening currents~\cite{[FHST],[FGST3]}.  The idea is to use
screening operators intertwining a Virasoro-module structure on a
lattice VOA, and to define chiral algebras relevant for LCFTs as the
kernel of the screening operators. By contrast, chiral algebras
defining RCFTs are usually defined as cohomologies of the screening
operators~\cite{[F]}.  Subsequently, the ``screening kernel'' approach
in defining LCFTs has led to explicit construction of quantum groups
(at roots of unity) centralizing the chiral
algebras~\cite{[FGST3],[BFGT]}. Such quantum-group symmetry in the
space of states allowed to describe representation categories of the
chiral algebras~\cite{[FGST],[FGST4],[BFGT]} -- a
correspondence between  subquotient structures of modules 
 over the chiral algebra and over its centralizing quantum group, and
 between their fusion rules was stated\footnote{In the context of
 $(p,p')$ models, a  one-to-one correspondence, i.e. an
 equivalence, was stated only in the case $p'=1$ while for coprime $p,p'\ne1$ the
 correspondence was stated up to minimal models contribution.}. In a
 simplest LCFT such as symplectic fermions~\cite{[Kausch]} an
 equivalence on the level of braided tensor categories and modular
 group action was proven in~\cite{[FGST2]}.
Such nice correspondences, in a representation-theoretic (or functorial)
sense, give an extended analogue of the so-called Kazhdan--Lusztig
equivalence established first in a context of affine Lie algebras at
negative integer levels~\cite{[KL]}.

The Kazhdan--Lusztig (KL) equivalence was well tested so far in
several cases of rational LCFTs~\cite{[FGST],[FGST4]}, with a recent
contribution in~\cite{[ST]}. In particular,
a KL equivalence between a restricted (or ``small'') quantum $s\ell(2)$ at the
primitive $2p$th root of unity (based on a ``short'' screening and
denoted by $\UresSL2$) and the triplet $W$-algebra~\cite{[K-first]}
realizing an extended conformal symmetry in $(1,p)$ logarithmic
conformal field models was established in~\cite{[FGST]} while a proof
of this equivalence on the level of abelian representation categories
was given quite recently~\cite{Tsuchiya}.

Another achievement in defining and studying logarithmic theories was
proposed few years ago~\cite{RS2,RS-Q,[PRZ]}, based on a
construction of lattice discretizations of the LCFTs.  The point is
that it is still difficult to compute fusion rules and determine
subquotient structure of indecomposable representations of the chiral
algebra that should appear in continuum logarithmic theories.  Lattice
discretizations, on the other hand, naturally involve well-studied
``lattice'' algebras such as Temperley--Lieb
(TL)~\cite{[TL],[Jones],[GL]}, Brauer~\cite{[Br],[Mart2]}, different
types of blob algebras~\cite{[MarSal],[GL1],[DJS]} and their
centralizing quantum groups~\cite{Mart1,[Good]} as well.  The transfer
matrix and the Hamiltonian operators are particular elements of these
``lattice'' algebras and much intuition as well as rigorous results
can be obtained from the study of these lattice features. In
particular, the blob (or boundary TL) algebras give a quick access to
a description of (integrable) boundary conditions which carry over
rather straightforwardly to Virasoro-symmetric boundary conditions in
the continuum limit~\cite{[PRZ],RS-Q,[DJS]} giving thus examples of
non-rational  LCFTs which involve infinitely many primary
fields and their logarithmic partners.

The purpose of this paper is to accomplish an important step forward
in the study of non-rational LCFTs by using the KL equivalence. The
subject of our study is a one-parameter family of chiral logarithmic
models with the $(1,p)$ central charge $c_{1,p}=13-6/p-6p$, 
 with integer
$p\geq2$. These models were originally formulated as the continuum
limit of XXZ spin-chains at appropriate roots of unity~\cite{RS-Q} and
as the limit of integrable lattice face models~\cite{[PRZ]}. Both are based on the TL
algebra which morally gives a regularization of the stress-energy
tensor modes (Virasoro generators) on a finite
system~\cite{KooSaleur}. The chiral algebra for these models in the
continuum is (a representation of) the Virasoro algebra of the
central charge $c_{1,p}$. We will denote these LCFTs as~$\LM(1,p)$, following
notations in~\cite{[PRZ]}. Fusion rules in these conformal models
were originally studied using an implementation of the
Gaberdiel--Kausch--Nahm (GKN) algorithm~\cite{GKfus} and then
investigated in a more systematic way in~\cite{RS-Q} and
in~\cite{RPfus,JR}, combining lattice computations with the GKN
algorithm.
 
  We conjectured in~\cite{[BFGT]} that a KL-type equivalence exists
between a ``long'' screening extension of $\UresSL2$ -- the Lusztig
limit $\LUresSL2$ of the full quantum $s\ell(2)$ as~$\q\to
e^{\imath\pi/p}$ -- and the Virasoro vertex-operator algebra $\Vir$ defined
by the $SL(2)$-invariant subspace in the vacuum module of the triplet
$W$-algebra.  Moreover, the fact that  the TL
algebra and a representation of the quantum group $\LUresSL2$ centralize each
other in the XXZ spin chains~\cite{PasqS,Mart1,RS-Q} suggests that in the continuum limit
many results about $\LM(1,p)$ can be reformulated in terms
of~$\LUresSL2$ too.  For existence of the KL equivalence in the
context of LCFTs with the Virasoro chiral algebra, there are pieces of evidence
which we bring in series.

In~\cite{[BFGT]}, the tensor products of all irreducible and projective
modules over $\LUresSL2$ were calculated and identified with the fusion of 
irreducible and logarithmic (staggered) modules over $\Vir$. In this
paper, we give an exhaustive description of all indecomposable modules
in the category of finite dimensional $\LUresSL2$-modules and
calculate all their tensor products.  This allows us to identify
indecomposable $\LUresSL2$-modules with indecomposable Kac modules\footnote{By a
Kac module associated with any pair of integers $(r,s)$, $r,s\geq1$,
we call the quotient of the corresponding Feigin--Fuchs module~\cite{FF} by a
singular vector on the level $rs$, see a precise definition below.}
over $\Vir$, by comparison between their subquotient structure, i.e. on
the level of abelian categories, using our results
from~\cite{[BFGT]}. Then, we conjecture fusion rules for
the Kac modules  using tensor product
decompositions for the corresponding  
modules over $\LUresSL2$. Our results agree with ones in~\cite{JR}.

\medskip

\subsection{Results}\label{sec:results}
In order to describe results of this paper, we use quite basic terminology
in category theory~\cite{[Kassel]}. Our results consist in the following.
Let  $\catC_p$ denotes the category of finite dimensional
$\LUresSL2$-modules. Then, $\catC_p$
 is a direct sum of two full subcategories
$\catC_p=\catC^+_p\oplus\catC^-_p$ such that
there are no morphisms between $\catC_p^+$ and $\catC_p^-$ and the
subcategory $\catC_p^+$ is closed under tensor products. 
Categories $\catC^+_p$ and $\catC^-_p$
are equivalent as abelian categories. We let $\nu:\catCpl_p\to\catCmin_p$ denotes
an equivalence functor.
The tensor product of any two objects in $\catC^-_p$ belongs 
to $\catC^+_p$ and tensor product of an object from 
$\catC^+_p$ with an object from $\catC^-_p$ belongs to $\catC^-_p$.
This determines a $\oZ_2$ structure on the tensor
category $\catC_p$. To calculate tensor product for any pair of objects in $\catC_p$,
it is enough to know tensor products in $\catC^+_p$.
Let $Y$ be an object from $\catC^+_p$ and $Y'$, $Y''$ are objects from $\catC^-_p$.
Then, $Y\otimes Y'=\nu(Y\otimes\nu^{-1}(Y'))$ and
$Y'\otimes Y''=\nu^{-1}(Y')\otimes\nu^{-1}(Y'')$.
Therefore, we describe only structure of~$\catC^+_p$ in
detail.

The set of indecomposable modules in the category $\catC^+_p$ consists of
irreducible modules $\XX_{s,r}$, for any pair of integers $1\leq s\leq p$ and
$r\geq1$, their projective covers $\PP_{s,r}$
(which are simultaneously projective and injective objects in~$\catC^+_p$)  and 
 modules $\modN_{s,r}(n)$, $\modNbar_{s,r}(n)$,
$\modM_{s,r}(n)$ and $\modW_{s,r}(n)$, with $1\leq s\leq p-1$ and
$r,n\geq1$. To describe briefly 
the irreducible module $\XX_{s,r}$, we note that it is a tensor product 
of $s$-dimensional irreducible $\UresSL2$-module and $r$-dimensional irreducible
$s\ell(2)$-module (see precise definitions in Sec.~\bref{sec:irreps}).
The projective cover $\PP_{s,r}$ of $\XX_{s,r}$
 has the following subquotient structure:
\begin{equation}\label{schem-proj}
 \xymatrix@=12pt{
    &\stackrel{\XX_{s,1}}{\bullet}\ar[d]_{}
    \\
    &\stackrel{\XX_{p - s, 2}}{\circ}\ar[d]^{}
    \\
    &\stackrel{\XX_{s,1}}{\bullet}
  }\qquad\qquad\qquad
  \xymatrix@=12pt{
    &&\stackrel{\XX_{s,r}}{\bullet}
    \ar@/^/[dl]
    \ar@/_/[dr]
    &\\
    &\stackrel{\XX_{p - s, r - 1}}{\circ}\ar@/^/[dr]
    &
    &\stackrel{\XX_{p - s, r + 1}}{\circ}\ar@/_/[dl]
    \\
    &&\stackrel{\XX_{s,r}}{\bullet}&
  }
\end{equation}
where $r\geq2$ and $1\leq s\leq p-1$.
We note also that $\PP_{p,r}=\XX_{p,r}$.
The set of irreducible and
projective modules is closed under tensor products.

All other indecomposable objects in $\catC^+_p$
 have subquotient
structures in  the following list (see also Thm~\bref{thm:cat-decomp}), where we set $1\leq s\leq p-1$ and
$r,n\geq1$.
\begin{itemize}
 \item[$\smash{{\modW_{s,r}(n)}}$:]
\begin{equation}\label{schem-W}
   \xymatrix@C=10pt@R=12pt{
     \stackrel{\repX_{s,r}}{\circ}\ar@/^/[dr]&
     &\stackrel{\repX_{s,r+2}}{\circ}\ar@/_/[dl]
     \ar@/^/[dr]&&\dots\ar@/_/[dl]\ar@/^/[dr]&
     &\stackrel{\repX_{s,r+2n}}{\circ}
     \ar@/_/[dl]\\
     &\stackrel{\quad\repX_{p-s,r+1}\quad}{\bullet}&&
     \stackrel{\quad\repX_{p-s,r+3}\quad\,}{\bullet}&\dots&
     \stackrel{\quad\repX_{p-s,r+2n-1}\quad}{\bullet}&
   }
 \end{equation}
\item[${\modM_{s,r}(n)}$:]
 \begin{equation}\label{schem-M}
      \xymatrix@C=10pt@R=12pt{
     &\stackrel{\repX_{p-s,r+1}}{\circ}\ar@/_/[dl]
     \ar@/^/[dr]
     &&\stackrel{\repX_{p-s,r+3}}{\circ}\ar@/_/[dl]
     \ar@/^/[dr]
     &&\dots\ar@/_/[dl]\ar@/^/[dr]&
     &\stackrel{\repX_{p-s,r+2n-1}}{\circ}
     \ar@/_/[dl]\ar@/^/[dr]&\\
     \stackrel{\repX_{s,r}}{\bullet}&&
     \stackrel{\repX_{s,r+2}}{\bullet}
     &&\stackrel{\repX_{s,r+4}}{\bullet}
     &\dots&
     \stackrel{\repX_{s,r+2n-2}}{\bullet}&&
     \stackrel{\repX_{s,r+2n}}{\bullet}
   }\kern-40pt
 \end{equation}
\item[${\modN_{s,r}(n)}$:]
\begin{equation}\label{schem-N}
   \xymatrix@C=10pt@R=12pt{
     &\stackrel{\repX_{p-s,r+1}}{\circ}\ar@/_/[dl]
     \ar@/^/[dr]
     &&\stackrel{\repX_{p-s,r+3}}{\circ}\ar@/_/[dl]
     \ar@/^/[dr]
     &&\dots\ar@/_/[dl]\ar@/^/[dr]&
     &\stackrel{\repX_{p-s,r+2n-1}}{\circ}
     \ar@/_/[dl]&\\
     \stackrel{\quad\repX_{s,r}\;}{\quad\bullet}&&
     \stackrel{\;\repX_{s,r+2}\;}{\bullet}
     &&\stackrel{\;\repX_{s,r+4}\;}{\bullet}
     &\dots&
     \stackrel{\;\repX_{s,r+2n-2}\;}{\bullet}&&
   }\kern-40pt
 \end{equation} 
\item[$\smash{{\modNbar_{s,r}(n)}}$:]
 \begin{equation}\label{schem-Nbar}
\xymatrix@C=10pt@R=12pt{
     &\stackrel{\repX_{s,r}}{\circ}\ar@/^/[dr]&
     &\stackrel{\repX_{s,r+2}}{\circ}\ar@/_/[dl]
     \ar@/^/[dr]&&\dots\ar@/_/[dl]\ar@/^/[dr]&
     &\stackrel{\repX_{s,r+2n-2}}{\circ}
     \ar@/_/[dl]\ar@/^/[dr]&&\\
     &&\stackrel{\repX_{p-s,r+1}}{\bullet}&&
     \stackrel{\repX_{p-s,r+3}}{\bullet}&\dots&
     \stackrel{\repX_{p-s,r+2n-3}}{\bullet}&&
     \stackrel{\repX_{p-s,r+2n-1}}{\bullet}&
   }\kern-40pt
 \end{equation}
\end{itemize}

\bigskip

In~\cite{[BFGT]}, it was conjectured that the category $\catC_p^+$ is equivalent as 
a tensor category to the
category of Virasoro algebra representations appearing in
$\LM(1,p)$. Under the equivalence, irreducible and projective modules
are identified in the following way
\begin{equation}\label{identification}
\begin{split}
\repX_{p,2r-1}\to \modR^0_{2r-1},\quad
\repX_{p,2r}\to \modR^0_{2r},\quad
\PP_{s,2r-1}\to \modR^{p-s}_{2r-1},\quad
\PP_{p-s,2r}\to \modR^{s}_{2r},\\
\repX_{s,2r-1}\to(2r-1,s),\qquad
\repX_{s,2r}\to(2r,s), \qquad 1\leq s\leq p,\quad r\geq 1,
\end{split}
\end{equation}
where $(r,s)$ are the irreducible Virasoro modules with the highest
weights 
\begin{equation}\label{Delta}
 \Delta_{r,s} = ((pr-s)^2-(p-1)^2)/4p
\end{equation} 
and the $\modR^{s}_{r}$
are logarithmic  Virasoro modules\footnote{We note that in
order to define $\modR^{s}_{r}$, for $r>1$, as Virasoro modules it is necessary to
say about the value of so-called $\beta$-invariant~\cite{[KytRid]} for
these modules. To determine these numbers is out of the scope of the paper.} (also known as staggered
modules~\cite{[KytRid]}), in notations of~\cite{RPfus}.  Under this
identification, the fusion of the irreducible and staggered $\Vir$-modules is given by
the tensor products of the corresponding $\LUresSL2$-modules.

Let $\repF_{s,m}$, with $1\leq s\leq p$ and $m\in\oZ$, be the Feigin-Fuchs module~\cite{FF} over
$\Vir$ with the lowest conformal dimension $\Delta_{m,p-s} =
\Delta_{1-m,s}$ (see precise definitions in App.~\bref{feigin_fuchs}).
Using the identifications for irreducible modules
in~\eqref{identification}, the indecomposable modules~\eqref{schem-W}-\eqref{schem-Nbar} over
$\LUresSL2$ are then identified with quotients and submodules of the
Feigin-Fuchs modules. The two families of the $\modW$-type
modules from~\eqref{schem-W} and of the $\modN$-type
from~\eqref{schem-N} have the correspondence
\begin{equation}\label{Kac-mod-id}
\begin{split}
\modW_{s,r}(n)\to \repF_{p-s,r}/\repF_{p-s,-r-2n},\\
\modN_{s,r}(n)\to \repF_{s,1-r}/\repF_{s,1-r-2n}.
\end{split}
\end{equation}
Modules from the other two families, the $\modM$-type modules from~\eqref{schem-M} and
the $\modNbar$-type from~\eqref{schem-Nbar}, are contragredient
to the $\modW$- and $\modN$-type modules, respectively. Equivalently, they are identified with submodules in Feigin-Fuchs modules:
each $\modNbar_{s,r}(n)$ consists of $2n$ subquotients of
$\repF_{p-s,r}$ in Fig.~\ref{felder-complex}, see App.~\bref{feigin_fuchs}, that appear
from left to right; each $\modM_{s,r}(n)$ consists of $2n+1$
subquotients of $\repF_{s,1-r}$. We call  these quotients and
submodules of the  Feigin-Fuchs modules by Kac modules.

Having the identification between subquotient structures of
indecomposable qunatum-group modules and Kac modules, it is
interesting to note that dimensions of subquotients over $\LUresSL2$
count conformal levels of corresponding (sub)singular vectors in the
Kac modules. Recall that a singular vector satisfies the
highest-weight conditions and it thus belongs to the left-most node or
to any node of the type `$\bullet$' in
our diagrams  while a subsingular vector
satisfies the highest-weight conditions only in a quotient by one of
the $\bullet$-submodules, i.e. it belongs to a subquotient labeled by
`$\circ$' in our diagrams for modules.  Then, the conformal level of a
(sub)singular vector in a Kac module is given by the sum of dimensions of all
irreducible quantum-group subquotients that are on the left from the
subquotient of the (sub)singular vector in the corresponding diagram
for $\LUresSL2$.

\begin{rem}
On the category $\catC^+_p$, there is a functor ${\cdot}^*$
which maps each object to its contragredient one, with all arrows reversed.
In particular, it acts on indecomposable modules as
$\repX_{s,r}^*=\repX_{s,r}$, $\PP_{s,r}^*=\PP_{s,r}$, 
$\modN_{s,r}(n)^*=\modNbar_{s,r}(n)$, $\modNbar_{s,r}(n)^*=\modN_{s,r}(n)$, 
$\modM_{s,r}(n)^*=\modW_{s,r}(n)$ and $\modW_{s,r}(n)^*=\modM_{s,r}(n)$.
 In addition, the functor ${\cdot}^*$ is a tensor functor
\begin{equation}
 (X\tensor Y)^*=X^*\tensor Y^*.
\end{equation}
\end{rem}

\medskip 

We now describe tensor product decompositions for all indecomposable
modules over $\LUresSL2$.
Formulas for tensor products of indecomposble modules in $\catC^+_p$
are quite cumbersome and to write them we introduce the following notation
\begin{gather}
\gamma_1=(s_1+s_2+1)\!\!\!\!\mod2,\qquad \gamma_2=(s_1+s_2+p+1)\!\!\!\!\mod2,\notag\\
\mathop{\bigoplus{\kern-3pt}'}\limits_{r=a}^{b}f(r) = 
\bigoplus_{r=a}^{b}(1 - \ffrac{1}{2}\delta_{r,a} - \ffrac{1}{2}\delta_{r,b})f(r),\label{not-1}\\
\sg(r)=\begin{cases}\phantom{-}1,\quad r>0,\\ 
      \phantom{-}0,\quad r=0,\\-1,\quad r<0.\end{cases}\notag
\end{gather}
We do not write all possible tensor products because there are simple 
rules, which use commutativity
and associativity of the tensor product, giving all tensor products
from base ones -- the tensor products of simplest indecomposables,
like $\modN_{s,r}(1)$,  with
irreducible modules and of the simplest indecomposables with themselves.
The base tensor products are collected in the following theorem.
\begin{Thm}\label{thm:tens-prod-intro}
\begin{enumerate}
 \item The tensor product of two irreducible modules with $s_1,s_2=1,\dots, p$
and $r\in\oN$ is
\begin{equation*}
\repX_{s_1,r_1}\tensor\repX_{s_2,r_2} =
\bigoplus_{\substack{r=|r_1-r_2|+1\\\step=2}}^{r_1+r_2-1}
\Bigr(\bigoplus_{\substack{s=|s_1-s_2|+1\\\step=2}}^{\substack{
\min(s_1 + s_2 - 1,\\ 2p - s_1 - s_2 - 1)}}\!\!\!\repX_{s,r}
\;\oplus\!\!\! \bigoplus_{\substack{s=2p - s_1 - s_2 +1\\\step=2}}^{p-\gamma_2}
\!\!\!\!\!\!\PP_{s,r}\Bigl)
\end{equation*}
\item The tensor product of an irreducible with a projective module
with $s_1=1,\dots, p$, $s_2=1,\dots, p-1$
and $r\in\oN$ is
\begin{equation*}
\repX_{s_1,r_1}\tensor\PP_{s_2,r_2} = 
\bigoplus_{\substack{r=|r_1-r_2|+1\\\step=2}}^{r_1+r_2-1}\Bigl(
\bigoplus_{\substack{s=|s_1-s_2|+1\\\step=2}}^{\substack{
\min(s_1 + s_2 - 1,\\ 2p - s_1 - s_2 - 1)}}\!\!\!\!\!\!\PP_{s,r}
\oplus 2\!\!\!\!\!\!\bigoplus_{\substack{s=2p-s_1-s_2+1\\\step=2}}^{p-\gamma_2}
\!\!\!\!\!\!\PP_{s,r}\Bigr)
\oplus 2\mathop{\bigoplus{\kern-3pt}'}\limits_{\substack{r=|r_1-r_2|\\\step=2}}^{r_1+r_2}
\bigoplus_{\substack{s=p-s_1+s_2+1\\\step=2}}^{p-\gamma_1}
\!\!\!\!\!\!\PP_{s,r},
\end{equation*}
where we set $\PP_{s,0}=0$.
\item The tensor products of an irreducible module with simplest $\modN$-type
  modules with
$s_1,s_2=1,\dots, p-1$
and $r_1,r_2\in\oN$ are
\begin{equation*}
\begin{split}
&\repX_{s_1,r_1}\tensor\modN_{s_2,r_2}(1) = \bigoplus_{\substack{r=|r_1-r_2|+1\\\step=2}}^{r_1+r_2-1}  
\quad \bigoplus_{\substack{s=2p - s_1 - s_2 + 1\\\step=2}}^{p-\gamma_2}\!\!\!\!\!\!\PP_{s,r}
\oplus \bigoplus_{\substack{r=|r_1-r_2-1|+1\\\step=2}}^{r_1+r_2} \quad 
\bigoplus_{\substack{s= p + s_2 - s_1 + 1\\\step=2}}^{p-\gamma_1}\!\!\!\!\!\!\PP_{s,r}\,\oplus\\
&\kern200pt\oplus\bigoplus_{\substack{s=|s_1-s_2|+1\\\step=2}}^{\substack{\min(s_1 + s_2 - 1,\\ 2p - s_1 - s_2 - 1)}}\!\!\!\begin{cases}
&\modN_{s,r_2-r_1+1}(r_1),\quad r_1 \le r_2\\
&\modW_{p-s,r_1-r_2}(r_2),\quad r_1 > r_2
\end{cases}
\end{split}
\end{equation*}
\item 
 The tensor products of an irreducible module with simplest $\modNbar$-type
  modules with
$s_1,s_2=1,\dots, p-1$
and $r_1,r_2\in\oN$ are
\begin{equation*}
\begin{split}
&\repX_{s_1,r_1}\tensor\modNbar_{s_2,r_2}(1) = 
\bigoplus_{\substack{r=|r_1-r_2|+1\\\step=2}}^{r_1+r_2-1}  \quad 
\bigoplus_{\substack{s=2p - s_1 - s_2 + 1\\\step=2}}^{p-\gamma_2}\!\!\!\!\!\!\PP_{s,r}
\oplus\bigoplus_{\substack{r=|r_1-r_2-1|+1\\\step=2}}^{r_1+r_2} \quad 
\bigoplus_{\substack{s= p + s_2 - s_1 + 1\\\step=2}}^{p-\gamma_1}\!\!\!\!\!\!\PP_{s,r}\,\oplus\\
&\kern200pt\oplus\bigoplus_{\substack{s=|s_1-s_2|+1\\\step=2}}^{\substack{\min(s_1 + s_2 - 1,\\ 2p - s_1 - s_2 - 1)}}\!\!\!\begin{cases}
&\modNbar_{s,r_2-r_1+1}(r_1),\quad r_1 \le r_2\\
&\modM_{p-s,r_1-r_2}(r_2),\quad r_1 > r_2
\end{cases}
\end{split}
\end{equation*}
\item The tensor products of two simplest $\modN$-type modules
with $s_1,s_2=1,\dots, p-1$
and $r_1,r_2\in\oN$ are
\begin{equation*}
\begin{split}
&\modN_{s_1,r_1}(1)\tensor\modN_{s_2,r_2}(1) = \bigoplus_{\substack{r=|r_1-r_2|+1\\\step=2}}^{r_1+r_2-1}
\bigoplus_{\substack{s=|s_1-s_2|+1\\\step=2}}^{p-\gamma_2}\!\!\!\PP_{s,r}
\oplus \bigoplus_{\substack{s=s_1 + s_2 + 1\\\step=2}}^{p-\gamma_2}\!\!\!\!\!\!\PP_{s,r_1+r_2+1}\,\oplus\\
&\oplus\bigoplus_{\substack{r=|r_1-r_2+\sg(s_2-s_1)|+1\\\step=2}}^{r_1+r_2}
\bigoplus_{\substack{s=p-|s_1-s_2|+1\\\step=2}}^{p-\gamma_1}\!\!\!\PP_{s,r} 
\oplus \bigoplus_{\substack{s=|p-s_1-s_2|+1\\\step=2}}^{p - |s_1 - s_2| - 1}\!\!\!\modN_{s,r_1+r_2}(1)
\end{split}
\end{equation*}
\item  The tensor products of two simplest $\modNbar$-type modules
with $s_1,s_2=1,\dots, p-1$
and $r_1,r_2\in\oN$ are
\begin{align*}
&\modNbar_{s_1,r_1}(1)\tensor\modNbar_{s_2,r_2}(1) = 
\bigoplus_{\substack{r=|r_1-r_2|+1\\\step=2}}^{r_1+r_2-1} \bigoplus_{\substack{s=|s_1-s_2|+1\\\step=2}}^{p-\gamma_2}\!\!\!\PP_{s,r}\;
\oplus\!\!\! \bigoplus_{\substack{s=s_1 + s_2 +1\\\step=2}}^{p-\gamma_2}\!\!\!\!\!\!\PP_{s,r_1+r_2+1}\,\oplus\\
&\oplus\bigoplus_{\substack{r=|r_1-r_2+\sg(s_2-s_1)|+1\\\step=2}}^{r_1+r_2}
\bigoplus_{\substack{s=p-|s_1-s_2|+1\\\step=2}}^{p-\gamma_1}\!\!\!\PP_{s,r}
\oplus\bigoplus_{\substack{s=|p-s_1-s_2|+1\\\step=2}}^{p - |s_1 - s_2| - 1}\!\!\!\modNbar_{s,r_1+r_2}(1)
\end{align*}
\item The tensors products of simplest $\modN$-type with simplest
  $\modNbar$-type modules with
$s_1,s_2=1,\dots, p-1$
and $r_1,r_2\in\oN$ are
\begin{align*}
&\modN_{s_1,r_1}(1)\tensor\modNbar_{s_2,r_2}(1) =\bigoplus_{\substack{r=|r_1-r_2|+2\\\step=2}}^{r_1+r_2}
\bigoplus_{\substack{s=|p-s_1-s_2|+1\\\step=2}}^{p-\gamma_1}\!\!\!\PP_{s,r}\,\oplus\\
&\oplus\!\!\! \bigoplus_{\substack{s=p-|s_1 - s_2| +1\\\step=2}}^{p-\gamma_1}\!\!\!\!\!\!
\delta_{\sg(r_1-r_2),\sg(s_1-s_2)}\PP_{s,|r_1-r_2|}
\oplus\kern-10pt\bigoplus_{\substack{r=|r_1-r_2|+1\\\step=2}}^{r_1+r_2+\sg(p-s_1-s_2)}\kern-10pt
\bigoplus_{\substack{s=\min(s_1 + s_2 + 1,\\ 2p - s_1 - s_2 + 1)\\\step=2}}^{p-\gamma_2}\!\!\!\PP_{s,r}\,\oplus\\
&\kern200pt\oplus\bigoplus_{\substack{s=|p-s_1-s_2|+1\\\step=2}}^{p-|s_1 - s_2| - 1}\!\!\!
\begin{cases}
&\modN_{s,r_1-r_2}(1),\quad r_1>r_2\\
&\modNbar_{s,r_2-r_1}(1),\quad r_2>r_1\\
&\repX_{p-s,1},\quad r_1=r_2.
\end{cases}
\end{align*}
\end{enumerate}
\end{Thm}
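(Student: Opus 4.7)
Parts (1) and (2) recover tensor products of irreducible and projective modules established in~\cite{[BFGT]}; they follow from the classical $s\ell(2)$ Clebsch--Gordan rule in the $r$-index combined with the Kazhdan--Lusztig tensor product of $\UresSL2$ in the $s$-index, together with the observation that $\repX\otimes\PP$ is always a direct sum of projectives (because $\PP_{s,r}$ is projective-injective in $\catCpl_p$ and $-\otimes Y$ is exact on both sides, $\LUresSL2$ being a Hopf algebra). I take these as input. For the remaining parts the plan is to reduce everything to (1)--(2) using the defining short exact sequences
\begin{equation*}
 0 \to \repX_{s,r} \to \modN_{s,r}(1) \to \repX_{p-s,r+1} \to 0,
 \qquad
 0 \to \repX_{p-s,r+1} \to \modNbar_{s,r}(1) \to \repX_{s,r} \to 0,
\end{equation*}
together with exactness of the tensor product.

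For part~(3), tensoring the $\modN$-sequence with $\repX_{s_1,r_1}$ produces a short exact sequence whose outer terms are already decomposed by part~(1). Every projective composition factor splits off as a direct summand by injectivity of $\PP_{s,r}$, leaving an extension among irreducibles whose only nontrivial non-split building blocks are the $\Ext^1$-pairs computed in~\cite{[BFGT]}. Matching composition factors pins down the remaining summands uniquely; whether they form $\modN$- or $\modW$-type modules, and which parameters $(s,r,n)$, is read off by comparing $r_1$ with $r_2$ and locating the smallest $r$-index on either side. Part~(4) follows by applying the contragredient functor $({-})^*$, which exchanges $\modN\leftrightarrow\modNbar$ and $\modM\leftrightarrow\modW$. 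For parts~(5)--(6) I would iterate the procedure: tensor the defining sequence of $\modN_{s_1,r_1}(1)$ (resp.\ $\modNbar_{s_1,r_1}(1)$) with the second factor, invoking part~(3) (resp.~(4)) for the outer terms; projective summands again peel off, and the remaining pieces are forced to be a single $\modN$- (resp.\ $\modNbar$-)type summand per admissible $s$ by composition-factor matching.

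The hybrid product~(7), $\modN_{s_1,r_1}(1)\otimes\modNbar_{s_2,r_2}(1)$, is the most delicate: the contragredient functor no longer swaps the two tensor factors, so both $\modN$- and $\modNbar$-type summands can coexist in the output. After splitting off the projective part, the case distinction $r_1\gtrless r_2$ determines which of $\modN$ or $\modNbar$ appears; the isolated irreducible summand $\repX_{p-s,1}$ at $r_1=r_2$ arises as the unique way that paired subquotients $\repX_{p-s,1}$ on the socle and top sides can fail to extend, since no irreducible on their left is available to glue them. The principal obstacle is organizational rather than conceptual: one must verify that, once projective (hence injective) summands split off, the remaining subquotients admit a unique decomposition into the listed indecomposables, and then encode the resulting arithmetic constraints through the parities $\gamma_1,\gamma_2$ and the sign $\sg$. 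Tracking these combinatorial shifts — especially the degenerate case $r_1=r_2$ in part~(7) and the asymmetric ranges of summation — is the main source of technical difficulty.
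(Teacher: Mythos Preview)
Your strategy is categorical (short exact sequences plus Ext information), whereas the paper's proof is computational: for parts~(3)--(4) it writes down explicit highest-weight and cyclic vectors in the tensor product and checks directly that $E$ acting on a top vector lands nontrivially in the socle; for parts~(5)--(7) it computes the Jordan form of the Casimir element on carefully chosen invariant subspaces and reads off the rank-$2$ cells, each of which signals a projective summand.

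For parts~(3)--(4) your outline is close to workable but incomplete at a key point. After splitting off projectives you obtain a short exact sequence $0\to A\to\modII\to B\to 0$ with $A,B$ semisimple, and you assert that ``matching composition factors pins down the remaining summands uniquely''. It does not: the same composition factors are shared by $\modN_{s,r}(n)$ and by $\repX_{s,r}\oplus\modN_{s,r+2}(n-1)$, for instance. What you actually need is that no irreducible summand of $B$ splits off --- equivalently, that the extension class in $\Ext(B,A)$ has full rank in an appropriate sense. This is exactly what the paper verifies by the explicit relation $E\,\toppr^{s,r}_{0,1}\sim \tfrac{r-1}{r}\botpr^{p-s,r+1}_{p-s-1,1}+\tfrac{1}{r}\botpr^{p-s,r-1}_{p-s-1,0}$. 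Your argument needs either an analogous direct check or a categorical reason (e.g.\ that tensoring a nonsplit length-two sequence by $\repX_{s_1,r_1}$ cannot produce irreducible direct summands outside the projective part), and neither is supplied.

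For parts~(5)--(7) the gap is more serious. When you tensor the defining sequence of $\modN_{s_1,r_1}(1)$ with $\modN_{s_2,r_2}(1)$, the outer terms contain, by part~(3), indecomposables of $\modN$- and $\modW$-type in addition to projectives. In the middle term these length-two pieces can glue across the extension to form \emph{new} projectives of Loewy length~$3$ that are invisible in either outer term. Your phrase ``projective summands again peel off'' presupposes you already know which projectives occur, but that is precisely the content to be proven: the same multiset of composition factors is compatible with several different counts of $\PP_{s,r}$ summands versus $\modN$-type summands. The paper resolves this ambiguity by showing, via the block matrices $T^n$, $C_i^n$, $B_i^n$ of the Casimir element, exactly which eigenvalues acquire nontrivial Jordan blocks; this is the step your approach lacks a substitute for.
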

The tensor product of arbitrary two indecomposable modules can be obtained 
from the base tensor products given in the previous theorem and the
following list of rules, see also Thm.~\bref{thm:tensor-third}.
\begin{enumerate}
 \item The tensor product of $\PP_{s,r}$ with an indecomposable 
module is isomorphic to the tensor product of $\PP_{s,r}$ with the direct
sum of all irreducible subquotients constituting the  indecomposable 
module.

\item An arbitrary indecomposable module of the $\modW$-, $\modM$-,
  $\modN$- or $\modNbar$-type is isomorphic to the tensor product of an irreducible
module and a simplest indecomposable module:
\begin{align*}
\modN_{s,r}(n) &= \repX_{1,n}\tensor\modN_{s,r+n-1}(1),\\
\modNbar_{s,r}(n) &= \repX_{1,n}\tensor\modNbar_{s,r+n-1}(1),\\
\modW_{s,r}(n) &= \repX_{1,r+n}\tensor\modN_{p-s,n}(1),\\
\modM_{s,r}(n) &= \repX_{1,r+n}\tensor\modNbar_{p-s,n}(1),
\end{align*}
where $s=1,\dots, p-1$ and $r,n\in\oN$.
\end{enumerate}
Thm.~\bref{thm:tens-prod-intro} with  these rules completes the description of the tensor structure on
$\catC^+_p$. 

Following our previous result~\cite{[BFGT]} about the KL equivalence established for
a subcategory in $\catC^+_p$ containing all simple objects and their
projective covers, we now propose the following conjecture, which was
also mentioned in~\cite{[BFGT]}.
\begin{Conj}\label{conj-main}
The category $\catC^+_p$ is equivalent as a tensor category to the 
representation category of the vertex operator algebra $\Vir$ realized 
in $\LM(1,p)$.
\end{Conj}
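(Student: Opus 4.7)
The plan is to construct an explicit functor $\FunF\colon \catC^+_p \to \Rep(\Vir)$ into the representation category of $\Vir$ realized in $\LM(1,p)$, and to verify it is a tensor equivalence in two stages: first an equivalence of abelian categories, then compatibility with the tensor structures.

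For the abelian equivalence, I would define $\FunF$ on objects via the identifications~\eqref{identification} and~\eqref{Kac-mod-id}: the simples $\repX_{s,r}$ map to the irreducible Virasoro modules $(r,s)$, the projective covers $\PP_{s,r}$ to the staggered modules $\modR^{\bullet}_{\bullet}$, and the four series $\modN_{s,r}(n)$, $\modNbar_{s,r}(n)$, $\modW_{s,r}(n)$, $\modM_{s,r}(n)$ to the corresponding quotients and submodules of the Feigin--Fuchs modules $\repF_{s,m}$. Building on~\cite{[BFGT]}, where the functor is already shown to be an abelian equivalence on the full subcategory generated by simples and projectives, I would extend the comparison to the four Kac-module series by working inductively on $n$: each $\modN_{s,r}(n)$ has a canonical filtration whose successive quotients are irreducibles matching exactly the composition series of the designated submodule of $\repF_{s,1-r}$ read off Felder's complex (Fig.~\ref{felder-complex}), and the required isomorphisms of $\justHom$ and $\mathrm{Ext}^1$ groups are then forced by the short exact sequences $0\to\modN_{s,r}(n-1)\to\modN_{s,r}(n)\to\repX\to0$ and their Virasoro counterparts.

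For the tensor equivalence, the key preliminary is to endow the Virasoro side with a well-defined tensor structure. I would appeal to Huang--Lepowsky--Zhang theory for the logarithmic vertex-tensor structure on $\Rep(\Vir)$, which requires verifying convergence and associativity of intertwining operators on the Kac-module subcategory of $\LM(1,p)$. Once such a fusion product exists, the reduction rules stated just after Thm.~\bref{thm:tens-prod-intro} restrict the comparison to finitely many families of base cases: irreducible $\otimes$ irreducible, irreducible $\otimes$ projective, irreducible $\otimes\modN(1)$ and $\otimes\modNbar(1)$, and $\modN(1){\otimes}\modN(1)$-type products. Each of these would be matched term-by-term against the Nahm--Gaberdiel--Kausch fusion computations of~\cite{JR,RPfus} for $\LM(1,p)$ Kac modules via the dictionary~\eqref{identification}--\eqref{Kac-mod-id}; assembled with rigidity and associativity on both sides, these comparisons upgrade $\FunF$ from an abelian to a tensor equivalence.

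The hardest step will be the vertex-tensor-structure prerequisite itself: the VOA $\Vir$ is not $C_2$-cofinite, so the HLZ hypotheses are not automatic and must be checked directly on the Kac-module subcategory. A potentially more tractable route is to exploit the free-field realisation of $\Vir$ as a screening kernel inside a Heisenberg--lattice VOA and transfer the tensor structure from $\LUresSL2$ through a double-commutant argument generalising~\cite{Tsuchiya} from the triplet--$\UresSL2$ setting to the Virasoro--$\LUresSL2$ setting. A secondary but unavoidable obstacle is pinning down the logarithmic conformal-block data matching the quantum-group structure constants --- in particular, the $\beta$-invariants of the staggered modules of~\cite{[KytRid]} --- whose precise values are required to distinguish non-isomorphic staggered modules with identical subquotient diagrams.
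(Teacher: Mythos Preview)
The statement you are addressing is a \emph{conjecture} in the paper, not a theorem: the paper does not contain a proof of it and does not claim to. The authors explicitly write ``we now propose the following conjecture'' and in the conclusion describe only ``a possible way to prove the conjecture,'' namely by comparing the $\EXT$ algebras of $\catC^+_p$ and $\mathcal{D}_p$ (see Conj.~\bref{ext-alg}). So there is no proof in the paper for your proposal to be compared against.

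Your outline is a reasonable roadmap, and you correctly flag the two genuine obstructions that keep this a conjecture rather than a theorem. First, the existence of a vertex-tensor structure on the Virasoro side: $\Vir$ is not $C_2$-cofinite, the relevant category is not finite, and the HLZ hypotheses are not known to hold for the Kac-module subcategory of $\LM(1,p)$; without this, the target of your functor $\FunF$ is not yet a tensor category, so the phrase ``tensor equivalence'' has no content. Second, the $\beta$-invariants of the staggered modules: the paper itself notes (footnote to~\eqref{identification}) that determining these is ``out of the scope of the paper,'' and without them the assignment $\PP_{s,r}\mapsto\modR^{\bullet}_{\bullet}$ is not even well-defined on objects, since non-isomorphic staggered modules can share the same Loewy diagram. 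Your appeal to~\cite{[BFGT]} for the abelian equivalence on simples and projectives overstates what is established there: that paper matches subquotient structures and fusion rules, not $\justHom$ and $\mathrm{Ext}^1$ spaces functorially. Likewise, matching tensor-product decompositions term-by-term against~\cite{JR} gives agreement at the level of Grothendieck rings and indecomposable summands, but does not by itself produce the coherence data (associators, unitors) needed for a tensor functor.

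In short: your proposal is not wrong as a strategy, but it is a research programme rather than a proof, and the paper treats it as such.
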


Thus, we can conjecture fusion rules for the Kac modules over $\Vir$ using the
identification in~\eqref{Kac-mod-id} together with
Thm.~\bref{thm:tens-prod-intro} and the tensor-products rules (1) and
(2) described above.

The conjecture~\bref{conj-main} is also motivated by the fact that tensor product
decompositions of the indecomposable modules in $\catC^+_p$ coincide
with the fusion proposed in~\cite{JR} for the 
Kac modules from $\LM(1,p)$, using the identification 
 in notations of~\cite{JR}
\begin{equation*}
\begin{split}
\modNbar_{p - s,n - r + 1}(r)\to (r,s + np),\quad \mbox{whenever $2r - 1 < 2n$},\\
\modM_{s,r - n}(n)\to (r,s + np),\quad \mbox{whenever $2r - 1 > 2n$}
\end{split}
\end{equation*}
and
\begin{equation*}
\begin{split}
\modN_{p - s,n - r + 1}(r)\to (r,s + np)^*,\quad \mbox{whenever $2r - 1 < 2n$},\\
\modW_{s,r - n}(n)\to (r,s + np)^*,\quad \mbox{whenever $2r - 1 > 2n$}.
\end{split}
\end{equation*}

\medskip

The paper is organised as follows. In Sec.~\bref{sec:QG-VOA},
we recall a definition of the Hopf algebra $\LUresSL2$ by generators and relations 
and define their irreducible and projective modules.
In Sec.~\bref{sec:LU-rep}, we calculate $\EXT$'s between 
irreducible $\LUresSL2$-modules and obtain from this a classification
theorem of all indecomposable $\LUresSL2$-modules. In Sec.~\bref{sec:tens-prod},
we calculate decomposition of tensor products of all indecomposable
$\LUresSL2$-modules. Sec.~\bref{sec:concl} contains our conclusions.
Some technicalities and general well-known facts are arranged into six Appendices. App.~\bref{feigin_fuchs}
contains necessary information about Feigin--Fuchs modules over $\Vir$. App.~\bref{app:proj-mod-base}
and App.~\bref{app:indecomp-mod-base} contain explicit description of
indecomposable $\LUresSL2$-modules in terms of bases and action. App.~\bref{sec:resolutions}
contains our result about projective resolutions for irreducible
$\LUresSL2$-modules which are used in computation of  $\EXT$'s
groups. In   App.~\bref{app:prod_ind}, we give an exhaustive list  of
tensor products of indecomposable modules. 
 App.~\bref{app:quivers} contains necessary information about quivers
 which we use to prove the classification theorem.

\subsection{Notations}
In the paper, $\oN$ denotes the set of all integer $n\geq1$. We also set 
\begin{equation*}
  \q=e^{\frac{i\pi}{p}},
\end{equation*}
for any integer $p\geq2$,
 and use the standard notation
\begin{equation*}
  [n] = \ffrac{\q^n-\q^{-n}}{\q-\q^{-1}},\quad
  n\in\oZ,\quad [n]! = [1][2]\dots[n],\quad n\in\oN,\quad[0]!=1.
\end{equation*}

For Hopf algebras, we write
$\Delta$, $\epsilon$, and~$S$ for the comultiplication, counit, and
antipode respectively.

\section{Conventions and definitions.\label{sec:QG-VOA}}
In
setting the notation and recalling the basic facts about
$\mathscr{LU}_{\q}\equiv\LUresSL2$ needed below, we largely
follow~\cite{[BFGT]}. We collect the definitions of different quantum
groups in~\bref{sec:def-restr} and~\bref{sec:def-lusz},  and
recall basic facts about their representation theory
in~\bref{sec:irreps} and~\bref{proj-mod}.


\subsection{The restricted quantum group}\label{sec:def-restr}
The quantum group $\UresSL2$ is the ``restricted'' quantum $s\ell(2)$
with $\q = e^{i\pi/p}$ and the  generators $E$, $F$, and
$K^{\pm1}$ satisfying the standard relations for the quantum $s\ell(2)$,
\begin{equation}\label{Uq-com-relations}
  KEK^{-1}=\q^2E,\quad
  KFK^{-1}=\q^{-2}F,\quad
  [E,F]=\ffrac{K-K^{-1}}{\q-\q^{-1}},
\end{equation}
with the additional relations
\begin{equation}\label{root-1-rel}
  E^{p}=F^{p}=0,\quad K^{2p}=\one,
\end{equation}
and the Hopf-algebra structure is given by
\begin{gather}
  \Delta(E)=\one\otimes E+E\otimes K,\quad
  \Delta(F)=K^{-1}\otimes F+F\otimes\one,\quad
  \Delta(K)=K\otimes K,\label{Uq-comult-relations}\\
  S(E)=-EK^{-1},\quad  S(F)=-KF,\quad S(K)=K^{-1},
  \label{Uq-antipode}\\
  \epsilon(E)=\epsilon(F)=0,\quad\epsilon(K)=1.\label{Uq-epsilon}
\end{gather}


\subsubsection{Central idempotents}\label{app:idem}
We recall here a description of primitive central  idempotents in
$\UresSL2$ following~\cite{[FGST]}. Let $\cas$ denotes
 the Casimir element
\begin{equation}\label{eq:casimir}
  \cas=(\q-\q^{-1})^2 EF+\q^{-1}K+\q K^{-1}=(\q-\q^{-1})^2 FE+\q
  K+\q^{-1}K^{-1}.
\end{equation}
\newpage
The $\UresSL2$ has $p+1$ primitive central idempotents $\idem_s$,
$\sum_{s=0}^{p}\idem_s = \one$, which are the following polynomials in $\cas$:
\begin{align*}
  \idem_s&=\ffrac{1}{\psi_s(\beta_s)}\bigl(
  \psi_s(\cas)-\frac{\psi_s^\prime(\beta_s)}{\psi_s(\beta_s)}(\cas
   -\beta_s)\psi_s(\cas)\bigr),\quad
  1\leq s\leq p-1,\\
  \idem_0&=\ffrac{1}{\psi_0(\beta_0)}\psi_0(\cas),\qquad
  \idem_p=\ffrac{1}{\psi_p(\beta_p)}\psi_p(\cas),
\end{align*}
where  $\beta_j=\q^j+\q^{-j}$ and
 \begin{align*}
   \psi_s(x)&=(x-\beta_0)\,(x-\beta_p)
   \smash{\prod_{j=1, j\neq s}^{p-1}}(x-\beta_j)^2,\quad 1\leq s\leq p-1,\\
      \psi_0(x)&=(x-\beta_p)\prod_{j=1}^{p-1}(x-\beta_j)^2, \quad 
      \psi_p(x)=(x-\beta_0)\prod_{j=1}^{p-1}(x-\beta_j)^2.
\end{align*}

\subsection{The centralizer of $\Vir$}\label{sec:def-lusz} Here, we recall the quantum group 
$\LUresSL2$ (i.e. a Hopf algebra) that commutes with the Virasoro
algebra $\Vir$ on the chiral space of states~\cite{[BFGT]} associated
with the logarithmic Virasoro models $\LM(1,p)$.

\subsubsection{Definition} The Hopf-algebra
structure on $\LUresSL2$ is the following. The defining relations
between the $E$, $F$, and $K^{\pm1}$ generators are the same as in $\UresSL2$
and given in~\eqref{Uq-com-relations} and~\eqref{root-1-rel},
and the  $e$, $f$, and $h$ generators have  the usual $s\ell(2)$ relations
\begin{equation}\label{sl2-rel}
  [h,e]=e,\qquad[h,f]=-f,\qquad[e,f]=2h,
\end{equation}
while  ``mixed'' relations are
\begin{gather}
  [h,K]=0,\qquad[E,e]=0,\qquad[K,e]=0,\qquad[F,f]=0,\qquad[K,f]=0,\\
  [F,e]= \ffrac{1}{[p-1]!}K^p\ffrac{\q K-\q^{-1} K^{-1}}{\q-\q^{-1}}E^{p-1},\qquad
  [E,f]=\ffrac{(-1)^{p+1}}{[p-1]!} F^{p-1}\ffrac{\q K-\q^{-1} K^{-1}}{\q-\q^{-1}},
    \label{Ef-rel}\\
  [h,E]=\ffrac{1}{2}EA,\quad[h,F]=- \ffrac{1}{2}AF,\label{hE-hF-rel}
\end{gather}
where we introduce 
\begin{equation}\label{A-element}
  A=\,\sum_{s=1}^{p-1}\ffrac{(u_s(\q^{-s-1})-u_s(\q^{s-1}))K
        +\q^{s-1}u_s(\q^{s-1})-\q^{-s-1}u_s(\q^{-s-1})}{(\q^{s-1}
         -\q^{-s-1})u_s(\q^{-s-1})u_s(\q^{s-1})}\,
        u_s(K)\idem_s
\end{equation}
with $u_s(K)=\prod_{n=1,\;n\neq s}^{p-1}(K-\q^{s-1-2n})$, and
$\idem_s$ are the central primitive idempotents of $\UresSL2$ given in~\bref{app:idem}.

The comultiplication in $\LUresSL2$ is given
in~\eqref{Uq-comult-relations} for the $E$, $F$, and $K$ generators
and
\begin{gather}
  \Delta(e)=e\tensor1+K^p\tensor e
  +\ffrac{1}{[p-1]!} \sum_{r=1}^{p-1}\frac{\q^{r(p-r)}}{[r]}K^pE^{p-r}\tensor E^r
  K^{-r},\label{e-comult}\\
 \Delta(f)= f\tensor 1+K^p\tensor f+\ffrac{(-1)^p}{[p-1]!} 
  \sum_{s=1}^{p-1}\frac{\q^{-s(p-s)}}{[s]}K^{p+s}F^s\tensor F^{p-s}.\label{f-comult}
\end{gather}
An explicit form of $\Delta(h)=\half[\Delta(e),\Delta(f)]$ is very
bulky and we do not give it here. 

The antipode $S$ and the counity $\epsilon$ are given
in~\eqref{Uq-antipode}-\eqref{Uq-epsilon} and
\begin{gather}
  S(e)=-K^pe,\qquad S(f)=-K^pf,\qquad S(h)=-h,\\
  \epsilon(e)=\epsilon(f)=\epsilon(h)=0.\label{relations-end}
\end{gather}

\subsection{Irreducible $\LUresSL2$-modules}\label{sec:irreps}
An irreducible $\LUresSL2$-module $\repX^{\pm}_{s,r}$ is labeled by
$(\pm,s,r)$, with $1\leq s\leq p$ and $r\in\oN$, and has the highest
weights $\pm\q^{s-1}$ and $\frac{r-1}{2}$ with respect to $K$ and $h$
generators, respectively. The $sr$-dimensional module $\XX^{\pm}_{s,r}$
is spanned by elements $\stprp_{n,m}^{\pm}$, $0\leq n\leq s{-}1$,
$0\leq m\leq r{-}1$, where $\stprp_{0,0}^{\pm}$ is the highest-weight
vector and the left action of the algebra on $\XX^{\pm}_{s,r}$ is
given~by
\begin{align}
  K \stprp_{n,m}^{\pm} &=
  \pm \q^{s - 1 - 2n} \stprp_{n,m}^{\pm},\qquad
  &h\, \stprp_{n,m}^{\pm} &=  \half(r-1-2m)\stprp_{n,m}^{\pm},\label{basis-lusz-irrep-1}\\
  E \stprp_{n,m}^{\pm} &=
  \pm [n][s - n]\stprp_{n - 1,m}^{\pm},\qquad
  &e\, \stprp_{n,m}^{\pm} &=  m(r-m)\stprp_{n,m-1}^{\pm},\label{basis-lusz-irrep-2}\\
  F \stprp_{n,m}^{\pm} &= \stprp_{n + 1,m}^{\pm},\qquad
  &f\, \stprp_{n,m}^{\pm} &=  \stprp_{n,m+1}^{\pm},\label{basis-lusz-irrep-3}
\end{align}
where we set $\stprp_{-1,m}^{\pm}=\stprp_{n,-1}^{\pm}
=\stprp_{s,m}^{\pm} =\stprp_{n,r}^{\pm}=0$.


\subsection{Projective $\LUresSL2$-modules}\label{proj-mod}
We now recall subquotient structure of projective covers\footnote{A
projective cover of an irreducible module is a ``maximal''
indecomposable module that can be mapped onto the irreducible.}
$\PP^{\pm}_{s,r}$ over $\LUresSL2$ introduced in~\cite{[BFGT]}. The
$\PP^{\pm}_{s,r}$ module is the projective cover of
$\repX^{\pm}_{s,r}$, for $1\leq s\leq p-1$, and has the subquotient
structure~\eqref{schem-proj} on the left for $r=1$ and on the right
for $r\geq 2$, where one should  replace each irreducible subquotient or submodule
 $\repX_{s,r+2k-1}$ by
$\repX^{\pm}_{s,r+2k-1}$, and $\repX_{s,r+2k}$ by
$\repX^{\pm}_{s,r+2k}$, for any $k\geq1$.
The $\LUresSL2$ action is explicitly described in
App.~\bref{app:proj-mod-base}.

A ``half'' of these projective modules was identified in the
fusion algebra calculated in~\cite{[BFGT]} with  logarithmic or staggered
Virasoro modules realized in $\LM(1,p)$ models.

\begin{rem}\cite{[BFGT]} We note there are no additional
parameters distinguishing nonisomorphic indecomposable
$\LUresSL2$-modules with the same subquotient structure as
in~\eqref{schem-proj}.
\end{rem}

\subsection{Semisimple length of a module} Let $\modN$ be a
$\LUresSL2$-module.  We define a \textit{semisimple filtration} of
$\modN$ as a tower of submodules
\begin{equation*}
  \modN=\modN_0\supset\modN_1\supset\ldots\supset\modN_l=0
\end{equation*}
such that each quotient $\modN_i/\modN_{i+1}$ is semisimple.  The
number $l$ is called the \textit{length} of the filtration.  In the
set of semisimple filtrations of $\modN$, there exists a filtration
with the minimum length~$\ell$.  We call~$\ell$ the \textit{semisimple
length} of $\modN$. The semisimple length is also known as the Loewy
length and the semisimple quotients $\modN_i/\modN_{i+1}$ constitutes
the so-called Loewy layers: the first Loewy layer of a module $\modN$ is
$\modN/J(\modN)$, where $J(\modN)$ is the Jacobson radical of the
module $\modN$, the second Loewy layer involves taking a quotient of the
radical $J(\modN)$ by its own Jacobson radical and so on.

Evidently, an indecomposable module has the semisimple length not less
than~$2$.  Any semisimple module has the semisimple length~$1$.

\section{Classification of $\LUq$-modules\label{sec:LU-rep}}
To describe the category $\catC_p$ of finite-dimensional
$\LUresSL2$-modules, we recall first\--ex\-ten\-sion groups associated with
a pair of irreducible modules and give results of a computation of $n$-extensions
in~\bref{sec:exts}. This allows us to decompose the representation category
$\catC_p$ into full subcategories.
In~\bref{sec:M-W}, we construct a family of indecomposable modules
of the ``Feigin--Fuchs'' type. A classification theorem for the
category $\catC_p$ is presented in~\bref{thm:cat-decomp}.

\subsection{Extension groups}\label{sec:exts}
Here, we compute $n$-extensions between irreducible modules over
$\LUresSL2$ using Serre--Hochschild spectral sequences associated with
a filtration on $\LUresSL2$ given by the subalgebra $\UresSL2$ and the
quotient algebra $U s\ell(2)$. We then use the extensions in order to
construct four families of indecomposable modules in~\bref{sec:M-W}. 

Let $A$ and $C$ be left $\LUresSL2$-modules.  We say that a short
exact sequence of $\LUresSL2$-modules $0\to A\to B\to C\to 0$ is an
\textit{extension} of $C$ by $A$, and we let $\Ext(C,A)$ denote the
set of equivalence classes (see, e.g.,~\cite{[M]}) of first extensions
of $C$ by~$A$. Similarly, we denote $n$-extensions by $\Extn(C,A)$.

\begin{thm}\label{lem:extn-irrep}
  \addcontentsline{toc}{subsection}{\thesubsection. \ $n$-Extensions
    between irreducible modules} For $1 \leq s \leq p-1$ and
  $\alpha\,{=}\,\pm$, 
  the $n$-extension groups for $n>1$ are
    \begin{equation*}
    \Extn(\repX^{\alpha}_{s,1},\repX^{\alpha'}_{s',r'})\cong
    \begin{cases}
      \oC\,\delta_{\alpha',\alpha}\,\delta_{s',s}\,\delta_{r',n+1}, &n-\text{even},\\
      \oC\,\delta_{\alpha',-\alpha}\,\delta_{s',p-s}\,\delta_{r',n+1}, &n-\text{odd},
    \end{cases}
  \end{equation*}
and for $r>1$, we have
    \begin{equation*}
    \Extn(\repX^{\alpha}_{s,r},\repX^{\alpha'}_{s',r'})\cong
    \begin{cases}
      \oC\,\delta_{\alpha',\alpha}\,\delta_{s',s}\,, &\text{even}\,n<r, \;\text{and}\;  r'=r+2k, \;\text{with}\; -\frac{n}{2} \leq k\leq \frac{n}{2},\\
      \oC\,\delta_{\alpha',\alpha}\,\delta_{s',s}\,,
      &\text{even}\,n\geq r, \;\text{and}\;  r'=2k, \;\text{with}\; \frac{n-r+2}{2} \leq k\leq \frac{n+r}{2},\\
      \oC\,\delta_{\alpha',-\alpha}\,\delta_{s',p-s}\,, &\text{odd}\,n<r, \;\text{and}\; r'=r+2k+1,
         \; -\frac{n+1}{2} \leq k\leq \frac{n-1}{2},\\
      \oC\,\delta_{\alpha',-\alpha}\,\delta_{s',p-s}\,,
      &\text{odd}\,n\geq r, \;\text{and}\;  r'=2k+1, \; \frac{n-r+1}{2} \leq k\leq \frac{n+r-1}{2},\\
      0,&\;\text{otherwise},
    \end{cases}
  \end{equation*}
  where $\delta_{a,b}$ is the Kronecker symbol and when $k$ takes half-integer values we assume it goes with the step $1$. 
\end{thm}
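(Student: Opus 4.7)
The plan is to compute $\Extn(\repX^{\alpha}_{s,r},\repX^{\alpha'}_{s',r'})$ via a Hochschild--Serre spectral sequence
\begin{equation*}
E_2^{p,q}=\mathrm{Ext}^{p}_{U s\ell(2)}\bigl(\oC,\mathrm{Ext}^{q}_{\UresSL2}(\repX^{\alpha}_{s,r},\repX^{\alpha'}_{s',r'})\bigr)\Longrightarrow\Extn(\repX^{\alpha}_{s,r},\repX^{\alpha'}_{s',r'})
\end{equation*}
associated with the normal Hopf-subalgebra inclusion $\UresSL2\hookrightarrow\LUresSL2$, whose Hopf quotient is the universal enveloping algebra $U s\ell(2)$. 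The inner $\mathrm{Ext}$ inherits a $U s\ell(2)$-module structure from the $\LUresSL2$-structures on the arguments via the comultiplication.

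I would first read off from (\bref{basis-lusz-irrep-1})--(\bref{basis-lusz-irrep-3}) that $\repX^{\alpha}_{s,r}\cong\repX^{\alpha}_s\otimes V_r$ as an $\UresSL2\otimes U s\ell(2)$-module, with $V_r$ the $r$-dimensional irreducible $s\ell(2)$-module. This tensor factorisation holds because the would-be obstructions --- the $E^{p-1}$-factor in (\bref{Ef-rel}) and the element $A$ in (\bref{hE-hF-rel}) --- act by zero on every $\UresSL2$-irreducible of dimension $s\le p-1$. The small-quantum-group $\mathrm{Ext}$ groups are then computed from the two-periodic projective resolutions recalled in Appendix~\bref{sec:resolutions}: using the subquotient structure of $\PP^{\pm}_s$ in~(\bref{schem-proj}), the syzygy operator satisfies $\Omega\repX^{\pm}_s\cong\repX^{\mp}_{p-s}$ modulo projectives, so that
\begin{equation*}
\mathrm{Ext}^{q}_{\UresSL2}(\repX^{\alpha}_{s,r},\repX^{\alpha'}_{s',r'})\cong\begin{cases}\Homm_{\oC}(V_r,V_{r'}),&q\text{ even, }(\alpha',s')=(\alpha,s),\\ \Homm_{\oC}(V_r,V_{r'}),&q\text{ odd, }(\alpha',s')=(-\alpha,p-s),\\ 0,&\text{otherwise.}\end{cases}
\end{equation*}

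The remainder is to decompose $\Homm_{\oC}(V_r,V_{r'})\cong V_r^*\otimes V_{r'}=\bigoplus_{k}V_{|r-r'|+1+2k}$ by Clebsch--Gordan, compute $\mathrm{Ext}^{p}_{U s\ell(2)}(\oC,V_{|r-r'|+1+2k})$, and verify that the higher differentials $d_r$, $r\ge 2$, vanish. The latter I expect to follow from a $K^p$-weight argument: the parity of $q$ fixes the sign $\alpha'$, which is preserved by all $d_r$, ruling out cancellations across rows of the $E_2$-page; hence $\Extn$ is read off from $E_\infty=E_2$, yielding the arithmetic-progression pattern $r'=r+2k$ in the stated range and the claimed $\delta$-constraints on $(\alpha',s')$.

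The main obstacle is the correct identification of the \emph{twisted} $U s\ell(2)$-module structure on the inner $\mathrm{Ext}$: the non-trivial tail terms $E^{p-r}\otimes E^r K^{-r}$ in (\bref{e-comult}) and $F^{s}\otimes F^{p-s}$ in (\bref{f-comult}), though they vanish on pairs of irreducible modules, do not vanish on the chain complexes computing $\mathrm{Ext}$; it is precisely this twist that forces the non-invariant summands of $\Homm_{\oC}(V_r,V_{r'})$ to contribute and thereby produces strings of non-vanishing $\Extn$ in all degrees, not just the $r=r'$ term one would see from a naive adjoint action. A direct alternative, which bypasses these spectral-sequence subtleties at the cost of longer bookkeeping, is to apply $\Hom_{\LUresSL2}(-,\repX^{\alpha'}_{s',r'})$ to an explicit $\LUresSL2$-projective resolution of $\repX^{\alpha}_{s,r}$ built from the data of Appendix~\bref{sec:resolutions} and compute the resulting cohomology degree by degree.
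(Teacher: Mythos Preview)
Your spectral-sequence framework is the same one the paper uses, but the computation of the inner $\UresSL2$-Ext contains a real error that would make the argument fail.

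The syzygy claim $\Omega\repX^{\pm}_s\cong\repX^{\mp}_{p-s}$ is false. The $\UresSL2$-projective cover $\PP^{\pm}_s$ (not the $\LUresSL2$-one pictured in~\eqref{schem-proj}) has Loewy layers $[\repX^{\pm}_s\,|\,2\,\repX^{\mp}_{p-s}\,|\,\repX^{\pm}_s]$, with \emph{two} copies of $\repX^{\mp}_{p-s}$ in the middle. Hence $\Omega\repX^{\pm}_s$ is a length-two module with top $2\,\repX^{\mp}_{p-s}$, the minimal resolution grows, and in fact $\mathrm{Ext}^{q}_{\UresSL2}(\repX^{\alpha}_s,\repX^{\alpha'}_{s'})$ is $(q{+}1)$-dimensional when nonzero (this is the result from~\cite{[FGST2]} the paper invokes). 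Your displayed formula for the inner Ext is therefore off by exactly this factor; Appendix~\bref{sec:resolutions} is about $\LUresSL2$-resolutions, which are not two-periodic either.

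This missing $(q{+}1)$-dimensional piece is the whole mechanism. The paper observes two things you do not use: (i) the spectral sequence degenerates at $E_2$ simply because $U s\ell(2)$ is semisimple over~$\oC$, so $E_2^{p,q}=0$ for all $p>0$ and no parity argument on differentials is needed; and (ii) as an $s\ell(2)$-module, $\mathrm{Ext}^{n}_{\UresSL2}(\repX^{\alpha}_{s,r},\repX^{\alpha'}_{s',r'})\cong \repX_{n+1}\otimes\repX_r\otimes\repX_{r'}$ (in the nonzero cases), where $\repX_{n+1}$ is the irreducible $(n{+}1)$-dimensional module. The answer is then the multiplicity of the trivial representation in this triple tensor product, which by Clebsch--Gordan is at most one and produces precisely the arithmetic progressions in $r'$ stated in the theorem. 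With your inner Ext you would instead get $\Extn\cong H^0\bigl(s\ell(2),V_r\otimes V_{r'}\bigr)=\delta_{r,r'}\,\oC$ for every admissible $n$, which is wrong. The ``twist'' you flag as the obstacle is a red herring here: the nontrivial tails in $\Delta(e)$, $\Delta(f)$ are exactly what makes $\mathrm{Ext}^{n}_{\UresSL2}$ carry the irreducible $s\ell(2)$-module $\repX_{n+1}$, not what makes non-invariant summands of $\mathrm{Hom}_{\oC}(V_r,V_{r'})$ survive. Your alternative via the explicit $\LUresSL2$-resolution of Appendix~\bref{sec:resolutions} does work and is in fact the second proof the paper gives in~\bref{sec:extn-res-proof}.
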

\begin{proof}
We first recall~\cite{[FGST2]} that the space $\ExtnU$ of $n$-extensions
between irreducible modules over the subalgebra $\UresSL2$ is at most
$(n+1)$-dimensional and there exists a nontrivial $n$-extension only between
$\repX^{\pm}_s$ and $\repX^{\mp}_{p-s}$ for odd $n$ and between 
$\repX^{\pm}_s$ and $\repX^{\pm}_{s}$ for even $n$, where $1 \leq s\leq p-1$ and
we set $\repX^{\pm}_s{=}\repX^{\pm}_{s,1}|_{\UresSL2}$. Moreover,
there is an action of $\LUresSL2$ on projective resolutions for
simple $\UresSL2$-modules and this generates an action of the
quotient-algebra $U s\ell(2)$ on the corresponding cochain complexes and
their cohomologies. Therefore, for an irreducible module $\repX$ and an
$\UresSL2$-module $\modM$, all extension groups
$\ExtUbul(\repX,\modM)$ are $s\ell(2)$-modules. In particular, the
space $\ExtnU(\repX^{\pm}_s,\repX^{\mp}_{p-s})$ is the $(n+1)$-dimensional
irreducible $s\ell(2)$-module for odd $n$.

Next, to calculate the $n$-extension groups between the simple
$\LUresSL2$-modules, we use the Serre-Hochschild spectral sequence
with respect to the subalgebra $\UresSL2$ and the quotient algebra
$U s\ell(2)$. The spectral sequence is degenerate at the second term due
to the semisimplicity of the quotient algebra and we thus obtain
\begin{equation*}
\Extn(\repX^{\alpha}_{s,r},\repX^{\alpha'}_{s',r'}) = H^0(U s\ell(2),
\ExtnU(\repX^{\alpha}_{s,r},\repX^{\alpha'}_{s',r'})),
\end{equation*}
where the right-hand side is the vector space of the
$s\ell(2)$-invariants in the $s\ell(2)$-module
$\ExtnU(\repX^{\alpha}_{s,r},\repX^{\alpha'}_{s',r'})$. This module is
nonzero only in the cases $\alpha'=-\alpha$, $s'=p-s$ for odd~$n$ and $\alpha'=\alpha$, $s'=s$ for even $n$, and isomorphic to
the tensor product $\repX_{n+1}\tensor\repX_r\tensor\repX_{r'}$ of 
$s\ell(2)$-modules, where $\repX_r$ is the $r$-dimensional
irreducible module. Obviously, the tensor product decomposes as
\begin{equation}\label{extn-sl-inv}
\repX_{n+1}\tensor\repX_r\tensor\repX_{r'} \cong \bigoplus_{t=|r-r'|+1}^{r+r'-1}\bigoplus_{k=|t-n-1|+1}^{t+n} \repX_k
\end{equation}
and a simple counting of trivial
$s\ell(2)$-modules in the direct sum~\eqref{extn-sl-inv}
 completes the proof.
\end{proof} 

The reader can find an alternative proof of~\bref{lem:extn-irrep}
in~\bref{sec:extn-res-proof}. The proof is a direct calculation
involving projective resolutions. These resolutions also constitute one
of our results and they are described in App.~\bref{sec:resolutions}.

We note finally that taking $n=1$ in~\bref{lem:extn-irrep} gives an immediate consequence
obtained in~\cite{[BFGT]}.
\begin{cor}\label{lemma:exts}\cite{[BFGT]}\;
For $1 \leq s\leq p-1$, $r\in \oN$ and $\alpha,\alpha'\,{\in}\,\{+,-\}$,
  there are vector-space isomorphisms
  \begin{equation}\label{ext-eq}
    \Ext(\repX^{\alpha}_{s,r},\repX^{\alpha'}_{s',r'})\cong
    \begin{cases}
      \oC,\quad \alpha'=-\alpha, \; s'=p-s, \; r'=r\pm 1,\\
      0,\quad \text{otherwise}.
    \end{cases}
  \end{equation}
There are no nontrivial extensions between $\XX^{\pm}_{p,r}$ and any
irreducible module.
\end{cor}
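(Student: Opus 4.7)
The plan is to obtain this corollary as the specialization of Theorem~\bref{lem:extn-irrep} at $n=1$, and to handle the $s=p$ case separately.

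Setting $n=1$ (odd) kills the ``$n$-even'' branches of Theorem~\bref{lem:extn-irrep}, so only the ``odd $n$'' entries survive and they force $\alpha'=-\alpha$ and $s'=p-s$. In the $r=1$ row the surviving Kronecker factor $\delta_{r',n+1}$ collapses to $\delta_{r',2}$, giving precisely $r'=r+1$ (the $r-1$ alternative being vacuous when $r=1$). In the $r>1$ rows the applicable branch is ``odd $n<r$'' (since $1<r$), which permits $k\in\{-1,0\}$ and therefore $r'\in\{r-1,r+1\}$, still with $\alpha'=-\alpha$ and $s'=p-s$. In each case the extension group is one-dimensional over $\oC$, and every other combination of the parameters returns $0$, so the case split reproduces exactly~\eqref{ext-eq}.

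The concluding sentence about $\XX^{\pm}_{p,r}$ is not covered by Theorem~\bref{lem:extn-irrep}, whose hypothesis is $1\leq s\leq p-1$, and must be argued on its own. Here I would invoke the identity $\PP_{p,r}=\XX_{p,r}$ recorded in~\bref{proj-mod}: the irreducible $\XX^{\pm}_{p,r}$ coincides with its own projective cover and is therefore projective; since the projective and injective objects coincide in~$\catCpl_p$ (as recalled in~\bref{sec:results}), and since $\catCmin_p$ is equivalent as an abelian category to~$\catCpl_p$, the module $\XX^{\pm}_{p,r}$ is injective as well. Consequently $\Ext(\XX^{\pm}_{p,r},\modM)=0=\Ext(\modM,\XX^{\pm}_{p,r})$ for every object $\modM$ of~$\catC_p$, in particular for any irreducible one.

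There is no genuine obstacle; the only mildly delicate point is verifying that the $r=1$ and $r>1$ branches of Theorem~\bref{lem:extn-irrep} glue into the uniform statement $r'=r\pm1$, with the $r-1$ option simply disappearing when $r=1$.
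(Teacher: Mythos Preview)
Your proposal is correct and matches the paper's own treatment: the paper simply remarks that the corollary is the $n=1$ specialization of Theorem~\bref{lem:extn-irrep} and cites~\cite{[BFGT]}, with no further argument. Your case analysis of the $r=1$ and $r>1$ branches is accurate, and your separate handling of $s=p$ via $\PP_{p,r}=\XX_{p,r}$ (hence projectivity and injectivity) is a correct way to cover the final sentence, which the paper leaves implicit; note only that this identity is stated in Sec.~\bref{sec:results} rather than in Sec.~\bref{proj-mod}.
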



\subsection{Indecomposable modules}\label{sec:M-W}
We now construct four infinite families of indecomposable modules over $\LUresSL2$.

Using~\bref{lemma:exts}, we can ``glue'' two irreducible modules
into an indecomposable module only in the case if the irreducibles
have opposite signs, different evenness in the $r$-index and the
sum of the two $s$-indexes is equal to $p$. Thereby, for $1\leq s \leq
p-1$ and integers $r,n\geq1$, we can introduce four types of
indecomposable modules of the semisimple length $2$ classified by their ``shapes'': $\modW$-, $\modM$-,
$\modN$-, and reversed-$\modN$ modules denoted by the symbol $\modNbar$. 
The modules are described in~\eqref{schem-W}-\eqref{schem-Nbar} by their subquotient structure, where $\repX_1{\longrightarrow}\repX_2$ denotes an extension
by an element from the space~$\Ext(\repX_1,\repX_2)$, with $\repX_1$ being an irreducible
subquotient and $\repX_2$ an irreducible submodule.   The subquotient structure uniquely defines these modules, up to an isomorphism,  due to the one-dimensionality of 
the first-extension groups~\eqref{ext-eq}.  We now turn to an explicit description of these modules in terms of bases and action.


$\smash{\boldsymbol{\modW^{{\pm}}_{s,r}(n)}}$: The module
  $\modW^{\pm}_{s,r}(n)$ has the subquotient structure~\eqref{schem-W}
  where  each irreducible subquotient
  $\repX_{s,r+2k}$ should be replaced by $\repX^{\pm}_{s,r+2k}$, and $\repX_{s,r+2k-1}$
  by $\repX^{\mp}_{s,r+2k-1}$, and $n$ is the number of the bottom
  modules (filled dots $\bullet$). We first describe the
  $\LUresSL2$-action on a basis in $\modW^{\pm}_{s,r}(1)$. The basis
  is spanned by $\{\botpr_{n,m}\}_{\substack{0\le n\le p-s-1\\0\le
  m\le r}} \cup\{\leftpr_{k,l}\}_{\substack{0\le k\le s-1\\0\le l\le
  r-1}} \cup\{\rightpr_{k,l}\}_{\substack{0\le k\le s-1\\0\le l\le
  r+1}}$ and identified with the corresponding submodule in
  $\PP^{\mp}_{p-s,r+1}$ explicitly described in
  App.~\bref{app:modW-1-base}. The modules $\modW^{\pm}_{s,r}(n)$,
  with $n>1$, are defined then by taking appropriate submodules in the
  direct sum of $n$ modules $\modW^{\pm}_{s,r+2k}(1)$:
\begin{equation*}
\modW^{\pm}_{s,r}(n) \subset \modW^{\pm}_{s,r}(1)\oplus
\modW^{\pm}_{s,r+2}(1)\oplus\dots
\oplus \modW^{\pm}_{s,r+2n-2}(1)
\end{equation*}
where we take a basis for the subquotient
$\repX^{\pm}_{s,r+2k}$ in $\modW^{\pm}_{s,r}(n)$ as the sum of the
bases in the two subquotients
$\repX^{\pm}_{s,r+2k}$ and $\repX^{\pm}_{s,r+2k}$ in the direct sum
$\modW^{\pm}_{s,r+2k-2}(1)\oplus\modW^{\pm}_{s,r+2k}(1)$. 
We give an explicit action
in App.~\bref{app:modW-base} in the example of the module
  $\modW^{\pm}_{s,r}(2)$.

$\boldsymbol{\modM^{\pm}_{s,r}(n)}$: The module
$\modM^{\pm}_{s,r}(n)$ is defined as the contragredient module to
 the $\modW^{\pm}_{s,r}(n)$ module, which means
 that  all the arrows in the diagram for $\modW^{\pm}_{s,r}(n)$ should
 be reversed in order to get a diagram for
 $\modM^{\pm}_{s,r}(n)$. This module
has the subquotient structure~\eqref{schem-M} 
where  each irreducible submodule $\repX_{s,r+2k}$ should be  replaced by $\repX^{\pm}_{s,r+2k}$,
 and $\repX_{s,r+2k-1}$ by $\repX^{\mp}_{s,r+2k-1}$. For later convenience, we give here its subquotient structure
 \begin{equation*}
   \xymatrix@=12pt{
     &\stackrel{\repX^{\mp}_{p-s,r+1}}{\circ}\ar@/_/[dl]
     \ar@/^/[dr]
     &&\stackrel{\repX^{\mp}_{p-s,r+3}}{\circ}\ar@/_/[dl]
     \ar@/^/[dr]
     &&\dots\ar@/_/[dl]\ar@/^/[dr]&
     &\stackrel{\repX^{\mp}_{p-s,r+2n-1}}{\circ}
     \ar@/_/[dl]\ar@/^/[dr]&\\
     \stackrel{\repX^{\pm}_{s,r}}{\bullet}&&
     \stackrel{\repX^{\pm}_{s,r+2}}{\bullet}
     &&\stackrel{\repX^{\pm}_{s,r+4}}{\bullet}
     &\dots&
     \stackrel{\repX^{\pm}_{s,r+2n-2}}{\bullet}&&
     \stackrel{\repX^{\pm}_{s,r+2n}}{\bullet}
   }
 \end{equation*}
 where $n$ is the number of the top modules (open dots $\circ$).
 The $\LUresSL2$-action on a basis is
 explicitly described in App.~\bref{app:modW-1-base} in the example of
 $\modM^{\pm}_{s,r}(1)$.  The modules $\modM^{\pm}_{s,r}(n)$,
  with $n>1$, are defined then by taking appropriate quotients of the
 direct sum of $n$ modules $\modM^{\pm}_{s,r+2k}(1)$ in accordance
 with the exact sequence
\begin{equation*}
0  \;\longrightarrow\; \bigoplus_{k=1}^{n-1}\XX^{\pm}_{s,r+2k} \;\longrightarrow\; \bigoplus_{k=0}^{n-1}\modM^{\pm}_{s,r+2k}(1)
  \;\longrightarrow\; \modM^{\pm}_{s,r}(n)  \;\longrightarrow\; 0
\end{equation*}
where  the image of each $\XX^{\pm}_{s,r+2k}$ under the embedding has
a basis which is the sum of the bases in the two  submodules
$\repX^{\pm}_{s,r+2k}$ and $\repX^{\pm}_{s,r+2k}$ in the direct sum
$\modM^{\pm}_{s,r+2k-2}(1)\oplus\modM^{\pm}_{s,r+2k}(1)$. 

$\boldsymbol{\modN^{\pm}_{s,r}(n)}$: The module
$\modN^{\pm}_{s,r}(n)$  is defined as the quotient of
  $\modM^{\pm}_{s,r}(n)$ by its submodule $\repX^{\pm}_{s,r+2n}$.
The $\modN^{\pm}_{s,r}(n)$  
  has the subquotient structure~\eqref{schem-N} 
where  each irreducible submodule $\repX_{s,r+2k}$ is replaced by $\repX^{\pm}_{s,r+2k}$,
 and $\repX_{s,r+2k-1}$ by $\repX^{\mp}_{s,r+2k-1}$
 and $n$ is the number of the top modules (open dots $\circ$) and at
  the same time the number of the bottom modules (filled dots
  $\bullet$). The $\LUresSL2$-action on a basis in
  $\modN^{\pm}_{s,r}(n)$ is explicitly described
  in~\bref{app:modW-1-base} in the example of the Weyl module
  $\modN^{\pm}_{s,r}(1)$.

$\smash{\boldsymbol{\modNbar^{{\pm}}_{s,r}(n)}}$: The module
  $\modNbar^{\pm}_{s,r}(n)$  is defined as the contragredient module to
 the $\modN^{\pm}_{s,r}(n)$ module defined just above, i.e. one should reverse all the arrows.
It has the subquotient structure~\eqref{schem-Nbar} where one should replace each irreducible subquotient $\repX_{s,r+2k}$ by $\repX^{\pm}_{s,r+2k}$, and $\repX_{s,r+2k-1}$ by $\repX^{\mp}_{s,r+2k-1}$,
  and $n$ is the number of the bottom modules (filled dots
  $\bullet$) and at the same time the number of the top modules (open
  dots $\circ$). The $\LUresSL2$-action on a basis in
  $\modNbar^{\pm}_{s,r}(n)$ is explicitly described
  in~\bref{app:modW-1-base} in the example of $\modNbar^{\pm}_{s,r}(1)$.

\medskip

The introduced four infinite series of indecomposable modules
$\modW^{{\pm}}_{s,r}(n)$, $\modM^{{\pm}}_{s,r}(n)$,
$\modN^{{\pm}}_{s,r}(n)$, and $\modNbar^{{\pm}}_{s,r}(n)$ are then
used in construction of the  projective
resolutions
and involved in a one-to-one correspondence with the Kac
modules over Virasoro.

\subsection{Classification theorem for the category $\catC_p$}\label{sec:cat-decomp}
Here, we describe the category $\catC_p$ of finite-dimensional
modules over $\LUresSL2$. We use results about
possible extensions between irreducible modules,
Thm.~\bref{lem:extn-irrep}, and the list of indecomposable
modules proposed above to state the following classification theorem.
\begin{Thm}\label{thm:cat-decomp}\mbox{}
Let $\catC_p$ denotes the category of finite-dimensional
  $\LUresSL2$-modules. Then, we have the following:
  \begin{enumerate}
  \item The category $\catC_p$  has the decomposition
    \begin{equation*}
      \catC_p=\bigoplus_{s=1}^{p-1}\Bigl(\catCpl(s)\oplus\catCmin(s)\Bigr)
      \oplus \bigoplus_{r\geq1}\Bigl(\catSpl(r)\oplus\catSmin(r)\Bigr),
    \end{equation*}
    where each direct summand is a full indecomposable subcategory.

  \item Each of the full subcategories $\catSpl(r)$ and~$\catSmin(r)$ is
    semisimple and contains precisely one irreducible module,
    $\repX^{+}_{p,r}$ and~$\repX^{-}_{p,r}$ respectively.
            
  \item Each full subcategory $\catCpl(s)$ contains the infinite
    family of irreducible modules~$\repX^{+}_{s,2r-1}$
    and~$\repX^{-}_{p-s,2r}$\,, with $r\in\oN$, and the following set
    of indecomposable modules:\enlargethispage{12pt}
    \begin{itemize}
    \item the projective modules $\PP^{+}_{s,2r-1}$ and
    $\PP^{-}_{p-s,2r}$\,, where $r\in\oN$;
      
    \item four series of indecomposable modules, for all integer
    $n\geq1$ and $r\in\oN$, given by 
      
      \begin{itemize}
      \item the $\modW^{+}_{s,2r-1}(n)$ and $\modW^{-}_{p-s,2r}(n)$
        modules and the contragredient to them $\modM^{+}_{s,2r-1}(n)$
        and $\modM^{-}_{p-s,2r}(n)$ modules;
        
      \item the $\modN^{+}_{s,2r-1}(n)$ and $\modN^{-}_{p-s,2r}(n)$
        modules and the contragredient to them $\modNbar^{+}_{s,2r-1}(n)$
        and $\modNbar^{\,-}_{p-s,2r}(n)$ modules.
      \end{itemize}
    \end{itemize}

  \item Each full subcategory $\catCmin(s)$ contains the infinite
    family of irreducible modules~$\repX^{+}_{s,2r}$
    and~$\repX^{-}_{p-s,2r-1}$\,, with $r\in\oN$, and the following set
    of indecomposable modules:\enlargethispage{12pt}
    \begin{itemize}
    \item the projective modules $\PP^{+}_{s,2r}$ and
    $\PP^{-}_{p-s,2r-1}$\,, where $r\in\oN$;
      
    \item four series of indecomposable modules, for all integer
    $n\geq1$ and $r\in\oN$, given by 
      
      \begin{itemize}
      \item the $\modW^{+}_{s,2r}(n)$ and $\modW^{-}_{p-s,2r-1}(n)$
        modules and the contragredient to them $\modM^{+}_{s,2r}(n)$
        and $\modM^{-}_{p-s,2r-1}(n)$ modules;
        
      \item the $\modN^{+}_{s,2r}(n)$ and $\modN^{-}_{p-s,2r-1}(n)$
        modules and the contragredient to them $\modNbar^{+}_{s,2r}(n)$
        and $\modNbar^{\,-}_{p-s,2r-1}(n)$ modules.
      \end{itemize}
    \end{itemize}
    This \textbf{exhausts} the list of indecomposable modules in $\catCpl(s)$ and $\catCmin(s)$.
  \end{enumerate}
\end{Thm}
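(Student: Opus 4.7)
The plan is to combine the extension data of Corollary~\bref{lemma:exts} with a direct classification of indecomposables via the Gabriel quiver of each block.

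\textbf{Block decomposition.} In any abelian length category, two simple modules lie in the same indecomposable summand iff they are connected by a chain of nonvanishing $\Ext^1$'s between simples. Corollary~\bref{lemma:exts} exhibits these nonvanishing extensions: $\Ext(\repX^{\alpha}_{s,r},\repX^{-\alpha}_{p-s,r\pm 1})\cong \oC$ for $1\leq s\leq p-1$, and $\XX^{\pm}_{p,r}$ has no nontrivial extensions with any simple. The connected components of this graph are the isolated vertices $\{\XX^{\pm}_{p,r}\}$, giving the semisimple blocks $\catSpl(r),\catSmin(r)$ of part (2), together with the half-infinite chains $\XX^{+}_{s,1},\XX^{-}_{p-s,2},\XX^{+}_{s,3},\XX^{-}_{p-s,4},\ldots$ and their $(\pm)$-flipped analogues, whose simples match exactly those listed in parts (3) and (4). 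Part (2) is then immediate since $\Ext(\XX^{\pm}_{p,r},\XX^{\pm}_{p,r})=0$ forces every object of $\catSpl(r)$ or $\catSmin(r)$ to split into copies of the unique simple.

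\textbf{Gabriel quiver of a main block.} For each main block, the Gabriel quiver is a half-infinite linear quiver on the simples of the block with one arrow each way between consecutive vertices. The relations are read off the Loewy structure of the projective covers~\eqref{schem-proj}: at the leftmost vertex there is a single nontrivial length-$2$ round-trip; at each interior vertex the two length-$2$ round-trips through the two neighbors are proportional; and all other length-$2$ compositions vanish at every vertex. Crucially, these relations imply that no nonzero path of length $\geq 3$ exists in the block algebra, so the semisimple (Loewy) length of every indecomposable is bounded by~$3$.

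\textbf{Classification of indecomposables.} I would then enumerate all indecomposables of Loewy length $\leq 3$: length $1$ yields the simples $\repX$; for length $2$, the one-dimensionality of every $\Ext^1$ combined with the quiver shape forces both the socle and the top to be supported on alternating consecutive simples along the chain, with the top simples interleaving the socle simples, giving exactly the four zig-zag shapes~\eqref{schem-W}--\eqref{schem-Nbar} distinguished by the starting position, the number $n$ of simples in one of the layers, and the choice of top-endpoint (open dot $\circ$) or bottom-endpoint (filled dot $\bullet$) on each side; for length $3$, the relations of the previous paragraph pin down the middle layer to be simultaneously the socle of the top and the top of the socle, which yields precisely the projectives~\eqref{schem-proj}. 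The one-dimensionality of $\Ext^1$ on each link makes the gluing data on each edge unique up to isomorphism, so each shape produces a single isomorphism class of module, matching the explicit constructions in~\bref{sec:M-W}.

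The main obstacle is the length-$2$ step: one must show that any indecomposable of Loewy length $2$ has its top and socle supported on a single \emph{interval} of consecutive simples in the chain, rather than on disconnected pieces (which would yield decomposable modules), and that the four endpoint choices exhaust all possibilities. This is handled by the quiver combinatorics collected in App.~\bref{app:quivers}: any attempt to place a top simple not adjacent to a socle simple splits off as a direct summand by vanishing of $\Ext^1$ between non-adjacent simples, while any attempt to place two top simples over the same socle simple contradicts the vanishing relations at that interior vertex.
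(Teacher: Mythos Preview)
Your overall architecture matches the paper's: block decomposition via $\Ext^1$, a bound on Loewy length, then a quiver classification of the length-$2$ indecomposables. The differences are in how the last two steps are executed, and one of them is a genuine gap.

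\textbf{Length $\geq3$.} The paper does not argue via ``no path of length $\geq3$'' in the Gabriel quiver. It uses instead that every projective $\PP^{\pm}_{s,r}$ is also injective; hence any indecomposable containing a copy of some $\PP^{\pm}_{s,r}$ as a submodule splits it off and must equal it. This is what forces every length-$3$ indecomposable to be one of the projectives and kills length $\geq4$. Your phrase ``the relations pin down the middle layer\ldots\ which yields precisely the projectives'' is not an argument: knowing that the radical cube vanishes does not by itself tell you that a length-$3$ indecomposable has the exact Loewy diagram~\eqref{schem-proj}; you still need the projective$=$injective fact (or an equivalent self-injectivity statement for the block algebra).

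\textbf{Length $2$.} Here the paper takes a cleaner route than the one you sketch. Rather than working inside the Gabriel quiver \emph{with relations} of the block, it restricts to the full subcategory $\catC^{(2),\pm}_n(s)$ of length-$\leq2$ modules with bounded support and builds an explicit equivalence (Lemma~\bref{lemma:mod-An}) to $\Rep(\qK_{2n})$, the representation category of an $A_{2n}$ quiver \emph{without relations}. Gabriel's theorem then applies verbatim: indecomposables correspond to positive roots of $A_{2n}$, whose coordinates are all $0$ or $1$ on an interval. This is precisely what rules out higher multiplicities in the top or socle and produces exactly the four zig-zag families. Your final paragraph tries to get the same conclusion by ad hoc means, and the reasoning slips: the impossibility of ``two top simples over the same socle simple'' is not a consequence of ``vanishing relations at that interior vertex'' (relations constrain length-$2$ paths, not extension classes), but of the one-dimensionality of $\Ext^1$ together with a change of basis in the repeated summand --- and this argument has to be organised systematically for arbitrary dimension vectors, which is exactly what the reduction to $\Rep(\qK_{2n})$ and Gabriel's theorem accomplishes.
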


 The strategy of the proof is
as follows. 
We first note that all projective modules
in~$\lc_p$ are injective modules.  This information suffices to ensure
that indecomposable modules with semisimple (Loewy) length~$3$ are projective
modules and that there are no modules with semisimple length~$4$ or
more.  Therefore, to complete the proof of~\bref{thm:cat-decomp}, it remains
to classify indecomposable modules with semisimple length~$2$.  We do
this in~\bref{sec:semisimple-2} using a correspondence between modules
with semisimple length $2$ and indecomposable representations of the
 quivers $A_N$, for appropriate $N$.

We now turn to a proof of 
Thm.~\bref{thm:cat-decomp}. We remind first the following fact
easily established using the identity $\Extn(\modPr,\modM)=0$ for a
projective module $\modPr$ and any module~$\modM$.
\begin{prop}\cite{[BFGT]}\mbox{}
\begin{enumerate}
\item
Every indecomposable $\LUresSL2$-module with the semisimple length~$3$
is isomorphic to $\PP^{\pm}_{s,r}$\,, for some $s\in \{1, 2, \dots,
p-1\}$ and some finite $r\in\oN$.
\item
There are no indecomposable modules with the semisimple length greater
than $3$.
\end{enumerate}
\end{prop}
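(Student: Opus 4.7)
My strategy is to dispose of part~(2) by a direct projective-cover argument and then reduce part~(1) to the statement that any indecomposable module of semisimple length~$3$ has simple socle; the submodule lattice of $\PP^{\pm}_{s,r}$ then finishes the argument. For part~(2), I would take the projective cover $\pi \colon P \twoheadrightarrow \modN$ of an arbitrary finite-dimensional $\modN \in \catC_p$, write $P = \bigoplus_i P_i$ with $P_i$ indecomposable projective, and recall from Section~\bref{proj-mod} that each $P_i$ is either $\PP^{\pm}_{s,r}$ with $1 \leq s \leq p-1$ (semisimple length~$3$) or the simple projective $\XX^{\pm}_{p,r}$ (length~$1$). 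Hence $P$ has semisimple length at most~$3$, and since the image under $\pi$ of a semisimple filtration of $P$ is a semisimple filtration of $\modN$ of the same length, so does $\modN$.

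For part~(1), let $\modN$ be indecomposable of semisimple length~$3$. I expect the main obstacle to be the step of showing that $\soc(\modN)$ is simple, and I would argue it by contradiction using the Frobenius property of $\catC_p$ (projective~$=$~injective, so every indecomposable injective is one of the listed projectives). Suppose $\soc(\modN) = \bigoplus_j S_j$ has at least two simple summands. Then $I(\modN) = \bigoplus_j I(S_j)$, and the embedding $\modN \hookrightarrow I(\modN)$ composed with each coordinate projection yields $\pi_j(\modN) \subseteq I(S_j)$ containing $S_j$ and hence nonzero. If some $\pi_j(\modN)$ equals $I(S_j)$, then the projectivity of $I(S_j)$ (the identity $\Ext(I(S_j),-) = 0$ invoked in the remark preceding the statement) makes the resulting surjection $\modN \twoheadrightarrow I(S_j)$ split, and indecomposability of $\modN$ forces $\modN \cong I(S_j)$, whose socle is simple, contradicting the assumption. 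Otherwise every $\pi_j(\modN)$ is a proper submodule of its ambient projective, and inspecting the very short submodule lattice of $\PP^{\pm}_{s,r}$ read off from diagram~\eqref{schem-proj} --- namely $0$, the socle, the two length-$2$ uniserial submodules, the radical, and the whole module --- shows that every proper submodule has semisimple length at most $2$. Since $\modN$ embeds into $\bigoplus_j \pi_j(\modN)$, this forces the semisimple length of $\modN$ to be at most $2$, contradicting the hypothesis.

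Once the socle is reduced to a single simple $\soc(\modN) = \XX^{\pm}_{s,r}$, the embedding $\modN \hookrightarrow I(\XX^{\pm}_{s,r})$ concludes the argument. The case $s = p$ is excluded, as $I(\XX^{\pm}_{p,r}) = \XX^{\pm}_{p,r}$ is simple whereas $\modN$ has length~$3$; for $1 \leq s \leq p-1$ one has $I(\XX^{\pm}_{s,r}) = \PP^{\pm}_{s,r}$, and the aforementioned submodule lattice shows that its only length-$3$ submodule is itself, yielding $\modN \cong \PP^{\pm}_{s,r}$. Beyond the socle-simplicity step, all checks reduce to routine bookkeeping inside the explicit Loewy structure of~\eqref{schem-proj}.
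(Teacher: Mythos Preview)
Your proof is correct and follows essentially the approach the paper sketches. The paper gives no detailed argument beyond the remark that projectives are injective together with the identity $\Extn(\modPr,\modM)=0$; your argument is a clean elaboration of precisely that strategy, using the injective hull, the splitting of any surjection onto a projective, and the explicit submodule lattice of $\PP^{\pm}_{s,r}$ read off from~\eqref{schem-proj}.
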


\subsection{Modules with semisimple length $2$}\label{sec:semisimple-2}
To complete the proof of the parts (3) and (4) in~\bref{thm:cat-decomp}, it remains to
classify finite-dimensional modules with the semisimple length
$\ell\,{=}\,2$. We restrict our classification to the subcategory
$\catC_p^+$ because the full category $\catC_p$ is decomposed as
$\catC_p^+\oplus\catC_p^-$ where the two summands are equivalent as
abelian categories.
Indeed, the functor $\nu:\catCpl_p\to\catCmin_p$
mentioned in the introduction section and defined
by $\XX^{+}_{s,2r-1}\mapsto\XX^{-}_{s,2r-1}$, and
$\XX^{-}_{s,2r}\mapsto\XX^{+}_{s,2r}$, and similarly for all
indecomposable objects, gives an equivalence between the abelian
categories $\catCpl_p$ and $\catCmin_p$.

\subsubsection{The category $\lc^{(2)}(s)$} For $1 \leq s \leq p{-}1$,
let $\lc^{(2)}(s)$ be the full subcategory of $\lc^+(s)$ consisting of
$\LUresSL2$-modules with semisimple length $\ell\leq2$. The set of simple
objects in $\lc^{(2)}(s)$  consist of the infinite
    family of irreducible modules~$\repX^{+}_{s,2r-1}$
    and~$\repX^{-}_{p-s,2r}$\,, with $r\geq1$.  Obviously, any
module in $\lc^{(2)}(s)$ can be obtained
 either by
\begin{itemize}
\item the extension of a finite direct sum of semisimple modules
  $n_r\repX^{+}_{s,2r-1}$, with $r\geq 1$ and multiplicities $n_r\geq0$, by a direct sum of 
  $m_{r'}\repX^{-}_{p-s,2r'}$, with $r'\geq 1$ and multiplicities $m_{r'}\geq0$, via a direct sum of
  $x^+\!\in\!\Ext(n_r\repX^{+}_{s,2r-1},m_{r'}\repX^{-}_{p-s,2r'})$
\end{itemize}
or by
\begin{itemize}
\item the extension of a finite direct sum of semisimple modules
  $m_{r'}\repX^{-}_{p-s,2r'}$ by a direct sum of modules
  $n_r\repX^{+}_{s,2r-1}$ via a direct sum of
  $x^-\,{\in}\,\Ext(m_{r'}\repX^{-}_{p-s,2r'},n_r\repX^{+}_{s,2r-1})$.
\end{itemize}

For any finite set of  multiplicities
  $\{m_{r},n_{r};\, r\geq1, \,m_{r},n_{r}\geq0\}$, we choose an extension $\morX^+\,{\in}\,\Ext\left(\bigoplus_{r\geq1}n_{r}\repX^{+}_{s,2r-1},
  \bigoplus_{r\geq1} m_{r}\repX^{-}_{p-s,2r}\right)$. Let
$\modI_{\morX^{+}}(n_r,m_r)\in\ob(\catC^{(2)}(s))$ denotes the 
  module
defined by the extension $\morX^+$,
\begin{equation}\label{general-gluing}
\modI_{\morX^{+}}(n_r,m_r):\qquad  \bigoplus_{r\geq1} n_{r}\repX^{+}_{s,2r-1}
  \xrightarrow{\morX^{+}}\bigoplus_{r\geq1} m_{r}\repX^{-}_{p-s,2r},
\end{equation}
where we denote the dependence on the set
of the multiplicties $\{n_{r}, m_{r}\}$ in the round brackets, and omit braces and range for the $r$-index for brevity.
We define the modules
$\modI_{\morX^{-}}(m_r,n_r)$ similarly taking $\morX^-\,{\in}\,\Ext\left(\bigoplus_{r\geq1}m_{r}\repX^{-}_{p-s,2r},
  \bigoplus_{r\geq1} n_{r}\repX^{+}_{s,2r-1}\right)$. 
 The
  modules $\modW^{+}_{s,2r'-1}(n)$
  and $\modNbar^{{+}}_{s,2r'-1}(n)$ introduced
  in~\bref{sec:M-W}
are particular cases of the
  $\modI_{\morX^{+}}(n_r,m_r)$ modules: the $\modNbar^{+}_{s,2r'-1}(n)$
  corresponds to the multiplicities
  $n_{r}=m_{r}=1$ in~\eqref{general-gluing} for $r' \leq r\leq r'+n-1$ and they are zero otherwise.

Using~\bref{lemma:exts}, we note that a module  $\modI_{\morX^{\pm}}(n_r,m_{r'})$ is a
direct sum of two or more indecomposables if  there exist two (or more) subsets,
each indexed by $(r,r')$, in the set of non-zero values of
the multiplicities $n_r$, $m_{r'}$ such that they are separated  in the $r$- or $r'$-index by the value $2$ or more.
In order to calssify all indecomposable modules, we will thus restrict
to the following  choice of the multiplicities  $n_r$, $m_{r'}$: they are non-zero for all numbers in
regions
$1\leq k\leq r\leq k'$ and $2\leq l \leq
r'\leq l'$, where $l=k\pm1$ and $l'=k'\pm1$, and the multiplicities are zero otherwise.

We define next  full subcategories $\lc^{(2),+}_{n}(s)$ and
$\lc^{(2),-}_{n}(s)$, for $n\geq1$,
generating a filtration of the category $\lc^{(2)}(s)$ as follows. Isomorphism classes of simple objects from $\lc^{(2),+}_n(s)$ consist of the set
$\{\repX^+_{s,2r-1},\repX^-_{p-s,2r};\, 1\leq r\leq n\}$.  Any object of
$\lc^{(2),+}_n(s)$ is either a semisimple module or a module $\modN$
such that $\modN/\modN_1=\bigoplus_{r=1}^n n_r\repX^{+}_{s,2r-1}$ for appropriate
$n_r\geq0$, where $\modN_1$ is the maximal semisimple submodule
(the socle); in other words, an object of $\lc^{(2),+}_n(s)$ is a
 module $\modI_{\morX^{+}}(n_r,m_r)$ introduced in~\eqref{general-gluing} with $n_r=m_r=0$ for $r>n$.
Objects
of $\lc^{(2),-}_n(s)$ are defined similarly with
$\modN/\modN_1=\bigoplus_{r=1}^n m_r\repX^{-}_{p-s,2r}$, with some $m_r\geq0$.  We note that
we have the filtration by full abelian subcategories
\begin{equation*}
\lc^{(2),\pm}_1(s)\,\subset\,\lc^{(2),\pm}_2(s)\,
\subset\lc^{(2),\pm}_3(s)\subset\dots\subset\lc^{(2)}(s),
\end{equation*}
with $\ob(\lc^{(2)}(s))\,{=}\,
{\cup}_{n\geq1}\ob(\lc^{(2),\pm}_n(s))$.

We now reduce the classification of modules with semisimple length~$2$ in each $\lc^{(2),\pm}_{n}(s)$
to the classification of indecomposable representations of the
$A_{2n}$-type  quivers $\qK_{2n}$.  The reader is referred to~\cite{[CB],[ARS]}
and Appendix~\bref{app:quivers} for the necessary facts about quivers.

\begin{lemma}\label{lemma:mod-An}
  Each of the abelian categories $\lc^{(2),+}_n(s)$ and $\lc^{(2),-}_n(s)$ is
  equivalent to the category $\Rep(\qK_{2n})$ of representations of the
  $A_{2n}$-type quiver $\qK_{2n}$.
\end{lemma}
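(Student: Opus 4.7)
The plan is to construct an explicit equivalence functor $\FunF\colon \lc^{(2),+}_n(s) \to \Rep(\qA)$ (with $\qA$ the $A_{2n}$-type quiver) that sends a length-$\le\!2$ module to the quiver representation recording its head, its socle, and its extension class.

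First I would set up the combinatorics. Label the $2n$ vertices of $A_{2n}$ by the isomorphism classes of simples in $\lc^{(2),+}_n(s)$, taking $v^+_r\leftrightarrow\repX^{+}_{s,2r-1}$ and $v^-_r\leftrightarrow\repX^{-}_{p-s,2r}$ for $1\leq r\leq n$. By Corollary~\bref{lemma:exts}, the only non-zero $\Ext$-groups between these simples in the ambient category are $\Ext(\repX^{+}_{s,2r-1},\repX^{-}_{p-s,2r})\cong\oC$ for $r\geq1$ and $\Ext(\repX^{+}_{s,2r-1},\repX^{-}_{p-s,2r-2})\cong\oC$ for $r\geq2$; non-split extensions in the reverse direction have heads of the wrong type and therefore do not belong to $\lc^{(2),+}_n(s)$, so the corresponding Yoneda $\Ext$ computed within the subcategory vanishes. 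Drawing one arrow for each one-dimensional $\Ext$-group yields the bipartite orientation of $A_{2n}$,
\begin{equation*}
v^+_1\to v^-_1 \leftarrow v^+_2 \to v^-_2 \leftarrow \cdots \leftarrow v^+_n \to v^-_n,
\end{equation*}
which we take as $\qA$. Note that in this orientation every path has length at most one, so all representations of $\qA$ automatically have semisimple length~$\le\!2$, matching the defining property of~$\lc^{(2),+}_n(s)$.

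Next I would define $\FunF$ on objects: for $\modN\in\lc^{(2),+}_n(s)$ put $V^+_r=\Homm(\repX^{+}_{s,2r-1},\modN/\soc(\modN))$ and $V^-_r=\Homm(\repX^{-}_{p-s,2r},\soc(\modN))$, so that $\dim V^\pm_r$ are the multiplicities of the simples in the head and socle. The extension class of the short exact sequence $0\to\soc(\modN)\to\modN\to\modN/\soc(\modN)\to 0$ lives in
\begin{equation*}
\Ext\!\Bigl(\bigoplus_r V^+_r\tensor\repX^{+}_{s,2r-1},\;\bigoplus_{r'}V^-_{r'}\tensor\repX^{-}_{p-s,2r'}\Bigr)\;=\;\bigoplus_r\Bigl(\Homm(V^+_r,V^-_r)\oplus\Homm(V^+_r,V^-_{r-1})\Bigr),
\end{equation*}
by additivity of $\Ext$ together with the first step. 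This decomposition assigns exactly one linear map to each arrow of $\qA$, so $\FunF(\modN)\in\Rep(\qA)$ is well defined. A morphism $\modN\to\modN'$ restricts to socles and descends to heads in a way compatible with the extension classes, hence induces a morphism of quiver representations.

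Finally I would show $\FunF$ is an equivalence. Essential surjectivity follows by reversing the dictionary: any representation of $\qA$ prescribes an element of the displayed $\Ext$-group and hence, by Yoneda, a module in $\lc^{(2),+}_n(s)$ with the specified Loewy data; concrete realizations of the indecomposable representations (which, by Gabriel's theorem, are indexed by the positive roots of $A_{2n}$) can be exhibited using the indecomposables $\modN^{+}_{s,r}(n)$, $\modNbar^{+}_{s,r}(n)$, $\modW^{+}_{s,r}(n)$, $\modM^{+}_{s,r}(n)$ constructed in~\bref{sec:M-W}. Fully faithfulness reduces to the standard observation that a morphism between length-$\le\!2$ modules is uniquely and freely determined by a pair of compatible linear maps on head and socle, the compatibility being precisely the commutation of the arrow maps in $\Rep(\qA)$. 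The main obstacle is the coherent bookkeeping of the one-dimensional $\Ext$-groups, so that $\FunF$ is genuinely natural and not merely a bijection on isomorphism classes; this is resolved by fixing once and for all a generator in each $\Ext(\repX^{+}_{s,2r-1},\repX^{-}_{p-s,2r'})$, after which the identification becomes tautological.
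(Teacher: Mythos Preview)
Your head/socle/extension-class approach is different from the paper's: there the equivalence is set up for $\lc^{(2),-}_n(s)$ by exhibiting the diagram
\[
\repX^+_{s,1}\xrightarrow{\varepsilon_1}\modM^+_{s,1}(1)\xleftarrow{\varepsilon_2}\repX^+_{s,3}\xrightarrow{\varepsilon_3}\cdots\xleftarrow{\varepsilon_{2n-2}}\repX^+_{s,2n-1}\xrightarrow{\varepsilon_{2n-1}}\modN^+_{s,2n-1}(1)
\]
inside the subcategory (these objects are exactly its indecomposable projectives) and taking $\FunF(\modI)=\bigl(\Hom(\text{$i$th object},\modI)\bigr)_i$, with arrow maps given by precomposition with the fixed embeddings $\varepsilon_r$. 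Your route can be made to work, but the functor as you wrote it is not an equivalence.

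The concrete error is the use of $\soc$ in $V^+_r=\Hom(\repX^{+}_{s,2r-1},\modN/\soc(\modN))$. The head is $\modN/\rad(\modN)$, and $\rad(\modN)=\soc(\modN)$ only when $\modN$ has no semisimple direct summand. For the simple object $\modN=\repX^{+}_{s,2r-1}$ one has $\soc(\modN)=\modN$, hence $\modN/\soc(\modN)=0$, and your $\FunF$ sends this simple to the zero representation of $\qA$ instead of the simple at $v^+_r$; so $\FunF$ is not faithful. Replacing $\soc$ by $\rad$ in the definition of $V^+_r$ repairs this on objects, but then the extension class you invoke lives in $\Ext(\modN/\rad,\rad)$ while $V^-_r$ is defined via $\soc$; you must still compose with the inclusion $\rad\hookrightarrow\soc$ on $\repX^-$-isotypic components to obtain arrow maps with the correct target and to keep the construction functorial. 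A smaller point: among the indecomposables you cite for essential surjectivity, $\modN^{+}_{s,r}(n)$ and $\modM^{+}_{s,r}(n)$ have $\repX^-$-type heads and so lie in $\lc^{(2),-}_n(s)$, not in $\lc^{(2),+}_n(s)$; the length-$2$ indecomposables in $\lc^{(2),+}_n(s)$ are the $\modNbar^{+}_{s,2r-1}(n)$, $\modW^{+}_{s,2r-1}(n)$, $\modNbar^{\,-}_{p-s,2r}(n)$, $\modW^{-}_{p-s,2r}(n)$.
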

\begin{proof}
  The lemma is based on an observation that morphisms $\varepsilon_r$, with $1\leq r\leq 2n-1$,
   together with objects in
  \begin{multline*}
  \repX^+_{s,1}\xrightarrow{\varepsilon_1}\modM^+_{s,1}(1)\xleftarrow{\varepsilon_2}\repX^+_{s,3}
  \xrightarrow{\varepsilon_3}\modM^+_{s,3}(1)\xleftarrow{\varepsilon_4}\dots\xleftarrow{\varepsilon_{2n-4}}
  \repX^+_{s,2n-3}\xrightarrow{\varepsilon_{2n-3}}\\
  \xrightarrow{\varepsilon_{2n-3}}\modM^+_{s,2n-3}(1)\xleftarrow{\varepsilon_{2n-2}}
  \repX^+_{s,2n-1}\xrightarrow{\varepsilon_{2n-1}}\modN^+_{s,2n-1}(1)
  \end{multline*}
  make up a quiver $\qK_{2n}$ in the category $\lc^{(2),-}_n(s)$; 
  a similar collection of objects and morphisms (one should replace each $\repX^+_{s,2r-1}$ by $ \repX^-_{p-s,2r}$, $\modM^+_{s,2r-1}(1)$ by $\modM^-_{p-s,2r}(1)$, and $\modN^+_{s,2n-1}(1)$ by $\modNbar^+_{s,1}(1)$ and take all morphisms between these objects) make up a quiver $\qK_{2n}$ in the category $\lc^{(2),+}_n(s)$.
  We take then the functors of $\justHom$ to each of the two categories, $\lc^{(2),\pm}_n(s)$ and $\Rep(\qA)$, to
  establish an equivalence. 
  The
  equivalence, e.g., between the categories $\lc^{(2),-}_n(s)$ and
  $\Rep(\qA)$ is given by the functor $\FunF$ that acts on objects as
  \begin{equation}\label{FunF-act}
    \FunF(\modI_{\morX^-}(m_r,n_r))=((V_1,V_2,\dots,V_{2n}),f_{j,j\pm1}),
  \end{equation}
  where
  \begin{align*}
    &V_{2r-1}=\Hom(\repX^+_{s,2r-1},\modI_{\morX^{-}}(m_r,n_r))\,{=}\,\oC^{n_r},&\quad 1\leq &r\leq n,\\
       &V_{2r}=\Hom(\modM^+_{s,2r-1}(1), 
    \modI_{\morX^{-}}(m_r,n_r))\,{=}\,\oC^{m_r},&\quad 1\leq &r\leq n-1, \\
   &V_{2n}=\Hom(\modN^+_{s,2n-1}(1),
    \modI_{\morX^{-}}(m_r,n_r))\,{=}\,\oC^{m_n}&
  \end{align*}
  and, for linearly independent homomorphisms
  $\varepsilon_{2r-1},\varepsilon_{2r}
  \in\Hom(\repX^+_{s,2r\pm1},\modM^+_{s,2r-1}(1))\,{=}\,\oC$, the linear maps
  $f_{2r,2r\pm1}\in\HomC(V_{2r},V_{2r\pm1})$ are defined as
  \begin{equation*}
    f_{2r,2r-1}(\varphi)=\varphi\circ\varepsilon_{2r-1},\quad
        f_{2r,2r+1}(\varphi)=\varphi\circ\varepsilon_{2r},
  \end{equation*}
  for each $\varphi\in V_{2r}$, and with the natural action on morphisms.
  
  The existence of a functor $\FunG$ such that both $\FunG\FunF$ and
  $\FunF\FunG$ are the identity functors is evident from the
  definitions of the categories $\lc^{(2),-}_n(s)$ and $\Rep(\qK_{2n})$.  
\end{proof}


Propositions~\bref{lemma:mod-An} and~\bref{prop:repr-AN}
immediately imply the desired classification of finite-dimensional
$\LUresSL2$-modules with semisimple length $\ell=2$, thus completing
the proof of Thm.~\bref{thm:cat-decomp}.

\bigskip

We now turn to the most important part of the paper which presents
tensor product decompositions of all indecomposable modules over $\LUresSL2$.

\section{Tensor product decompositions}\label{sec:tens-prod}
To formulate the main result of the paper, we remind~\cite{[BFGT]}
that the tensor products between irreducible $\LUresSL2$-modules are
\begin{equation}\label{irred-fusion}
\repX^{\alpha}_{s_1,r_1}\tensor\repX^{\beta}_{s_2,r_2} =
\bigoplus_{\substack{r=|r_1-r_2|+1\\\step=2}}^{r_1+r_2-1}
\Bigr(\bigoplus_{\substack{s=|s_1-s_2|+1\\\step=2}}^{\substack{
\min(s_1 + s_2 - 1,\\ 2p - s_1 - s_2 - 1)}}\!\!\!\repX^{\alpha \beta}_{s,r}
\;\oplus\!\!\! \bigoplus_{\substack{s=2p - s_1 - s_2 +1\\\step=2}}^{p-\gamma_2}
\!\!\!\!\!\!\PP^{\alpha \beta}_{s,r}\Bigl),
\end{equation}
with $1\leq s_1, s_2 \leq p-1$, and $r_1, r_2\geq 1$, and $\alpha,\beta=\pm$.

We also refer the reader to~\eqref{not-1} which collect some notations we use
here intensively.

The following theorem states decompositions of tensor products of
irreducible modules $\repX^{\alpha}_{s_1,r_1}$ with the
(contragredient) Weyl modules $\modN^{\beta}_{s_2,r_2}(1)$ and $\modNbar^{\beta}_{s_2,r_2}(1)$.
\begin{Thm}\label{tens-prod-decomp-1}
For $1\leq s_1, s_2 \leq p-1$, $r_1, r_2\geq 1$, and $\alpha,\beta=\pm$, we have 
isomorphisms of $\LUresSL2$-modules
\begin{multline}\label{xn-tens-prod}
\repX^{\alpha}_{s_1,r_1}\tensor\modN^{\beta}_{s_2,r_2}(1) =
\bigoplus_{\substack{r=|r_1-r_2|+1\\\step=2}}^{r_1+r_2-1}
\,\bigoplus_{\substack{s=2p - s_1 - s_2 +
1\\\step=2}}^{p-\gamma_2}\!\!\!\!\!\!\PP^{\alpha
\beta}_{s,r} \,\oplus \bigoplus_{\substack{r=|r_1-r_2-1|+1\\\step=2}}^{r_1+r_2}
\, \bigoplus_{\substack{s= p + s_2 - s_1 +
1\\\step=2}}^{p-\gamma_1}\!\!\!\!\!\!\PP^{-\alpha \beta}_{s,r}\\
\oplus\bigoplus_{\substack{s=|s_1-s_2|+1\\\step=2}}^{\substack{\min(s_1 +
s_2 - 1,\\ 2p - s_1 - s_2 - 1)}}\!\!\!\begin{cases} &\modN^{\alpha
\beta}_{s,r_2-r_1+1}(r_1),\quad r_1 \le r_2\\ &\modW^{-\alpha
\beta}_{p-s,r_1-r_2}(r_2),\quad r_1 > r_2
\end{cases}
\end{multline}
and the tensor product with the  module contragredient to
$\modN^{\beta}_{s_2,r_2}(1)$ is
\begin{equation}\label{xnbar-tens-prod}
\repX^{\alpha}_{s_1,r_1}\tensor\modNbar^{\beta}_{s_2,r_2}(1) =
\bigr(\repX^{\alpha}_{s_1,r_1}\tensor\modN^{\beta}_{s_2,r_2}(1)\bigl)^*,
\end{equation}
where we set $\PP^*\equiv\PP$, $\modN^*\equiv\modNbar$, and $\modW^*\equiv\modM$.
\end{Thm}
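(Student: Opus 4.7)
The plan is to derive~\eqref{xn-tens-prod} from the defining short exact sequence
\begin{equation*}
0\longrightarrow\repX^{\beta}_{s_2,r_2}\longrightarrow\modN^{\beta}_{s_2,r_2}(1)\longrightarrow\repX^{-\beta}_{p-s_2,r_2+1}\longrightarrow 0
\end{equation*}
read off from~\eqref{schem-N} with $n=1$. Tensoring with the rigid module $\repX^{\alpha}_{s_1,r_1}$ preserves exactness, and decomposing both end terms by~\eqref{irred-fusion} exhibits $\repX^{\alpha}_{s_1,r_1}\tensor\modN^{\beta}_{s_2,r_2}(1)$ as an extension of a sum of $\repX^{-\alpha\beta}_{\ast,\ast}$'s and $\PP^{-\alpha\beta}_{\ast,\ast}$'s by a sum of $\repX^{\alpha\beta}_{\ast,\ast}$'s and $\PP^{\alpha\beta}_{\ast,\ast}$'s. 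Note that the two sign classes are disjoint, so the two sets of projectives do not interact.

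Next I would peel off the projective contributions. Since every $\PP^{\pm}_{s,r}$ is simultaneously projective and injective in $\catC_p^{+}$, the $\PP^{\alpha\beta}$-summands on the socle side split off by injectivity, while the $\PP^{-\alpha\beta}$-summands on the cosocle side lift to split summands by projectivity; their $s$- and $r$-ranges are read off directly from~\eqref{irred-fusion}, yielding the first two direct sums on the right-hand side of~\eqref{xn-tens-prod}. What remains is a module $R$ without projective summands whose socle is a sum of $\repX^{\alpha\beta}_{s,r}$'s and whose cosocle is a sum of $\repX^{-\alpha\beta}_{p-s,r\pm 1}$'s, both indexed in step $2$ by the common $s$-range $|s_1-s_2|+1,\dots,\min(s_1+s_2-1,2p-s_1-s_2-1)$. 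By Cor.~\bref{lemma:exts} the only admissible non-trivial $\Ext^{1}$'s between these irreducibles are one-dimensional and connect $\repX^{\alpha\beta}_{s,r}$ with $\repX^{-\alpha\beta}_{p-s,r\pm 1}$, which are exactly the links building the shapes~\eqref{schem-N} and~\eqref{schem-W}; together with the classification Thm.~\bref{thm:cat-decomp} this forces $R$ to be a direct sum, indexed by $s$, of $\modN$- or $\modW$-type indecomposables. A count of $r$-constituents ($\min(r_1,r_2)$ on the socle side, $\min(r_1,r_2)$ or $\min(r_1,r_2)+1$ on the cosocle side according as $r_1\leq r_2$ or $r_1>r_2$) then selects the $\modN^{\alpha\beta}_{s,r_2-r_1+1}(r_1)$-shape in the first case and the $\modW^{-\alpha\beta}_{p-s,r_1-r_2}(r_2)$-shape in the second.

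I expect the main obstacle to be ruling out a further splitting of $R$ on any single $s$-strand, i.e.\ showing that the glued extension is genuinely non-split. To control this I would compute the socle of $\repX^{\alpha}_{s_1,r_1}\tensor\modN^{\beta}_{s_2,r_2}(1)$ using the rigidity adjunction
\begin{equation*}
\Hom(\repX^{\gamma}_{s,r},\repX^{\alpha}_{s_1,r_1}\tensor\modN^{\beta}_{s_2,r_2}(1))\cong\Hom((\repX^{\alpha}_{s_1,r_1})^{*}\tensor\repX^{\gamma}_{s,r},\modN^{\beta}_{s_2,r_2}(1)),
\end{equation*}
evaluate the right-hand side via~\eqref{irred-fusion} and the known socle of $\modN^{\beta}_{s_2,r_2}(1)$, and run the dual computation for the cosocle. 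Combined with the one-dimensionality of the relevant $\Ext^{1}$-spaces in Thm.~\bref{lem:extn-irrep}, this pins down each indecomposable summand uniquely and excludes any extra irreducible splittings.

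Finally,~\eqref{xnbar-tens-prod} is immediate: applying the contragredient functor ${\cdot}^{*}$ to~\eqref{xn-tens-prod} and using its tensoriality $(X\tensor Y)^{*}=X^{*}\tensor Y^{*}$ from the remark after~\eqref{Kac-mod-id}, together with the self-duality $(\repX^{\alpha}_{s_1,r_1})^{*}\cong\repX^{\alpha}_{s_1,r_1}$ and the identifications $\PP^{*}\cong\PP$, $\modN^{*}\cong\modNbar$, $\modW^{*}\cong\modM$ listed there, reproduces the right-hand side of~\eqref{xnbar-tens-prod} with the announced conventions $\PP^{*}\equiv\PP$, $\modN^{*}\equiv\modNbar$, $\modW^{*}\equiv\modM$.
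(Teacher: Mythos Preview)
Your approach is cleaner and more categorical than the paper's, which instead writes down explicit bases for both tensorands (as in App.~\bref{app:indecomp-mod-base}), constructs the highest-weight and cyclic vectors $\toppr^{s,r}_{0,0}$, $\toppr^{s,r}_{0,1}$ inside the tensor product by hand, and then verifies the key relation
\[
E\,\toppr^{s,r}_{0,1}\;\sim\;\ffrac{r-1}{r}\,\botpr^{p-s,r+1}_{p-s-1,1}\;+\;\ffrac{1}{r}\,\botpr^{p-s,r-1}_{p-s-1,0},
\]
which exhibits each top constituent as glued to \emph{both} of its neighbouring bottom constituents.  Everything up to the determination of the composition factors of $R$ is fine in your argument, and the use of projectivity/injectivity to split off $\modPP$ is exactly what the paper does.

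The genuine gap is the step where you pass from knowing the socle and cosocle of $R$ to concluding that the $s$-strand is a single $\modN^{\alpha\beta}_{s,r_2-r_1+1}(r_1)$ (respectively $\modW^{-\alpha\beta}_{p-s,r_1-r_2}(r_2)$) rather than a direct sum of shorter pieces.  For a fixed $s$, the two modules
\[
\modN^{\alpha\beta}_{s,a}(n)\qquad\text{and}\qquad \bigoplus_{k=0}^{n-1}\modN^{\alpha\beta}_{s,a+2k}(1)
\]
have identical socle, identical cosocle, and identical $\Ext^1$-links between their irreducible constituents; none of the invariants you compute distinguishes them.  Your adjunction argument determines $\dim\Hom(\repX^{\gamma}_{t,u},-)$ and its dual, which fixes the Loewy layers but not the gluing pattern \emph{between} consecutive socle constituents.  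This is precisely why the paper resorts to the explicit formula above: the coefficient $\tfrac{r-1}{r}$ being nonzero for $r\geq2$ is what rules out the split case.  To close the gap within your framework you would need a finer invariant---for instance $\dim\Hom\bigl(\modM^{\alpha\beta}_{s,a}(1),R\bigr)$ via adjunction, which detects whether the arrow $\repX^{-\alpha\beta}_{p-s,a+1}\to\repX^{\alpha\beta}_{s,a+2}$ is present---but computing that $\Hom$ already requires knowing $\repX^{\alpha}_{s_1,r_1}\tensor\modM^{\beta}_{\ast,\ast}(1)$, so the argument becomes circular unless you input at least one explicit check, as the paper does.
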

\begin{proof}
 We consider first the tensor product in~\eqref{xn-tens-prod}. Let the
set $\{\stprp_{n',m'}\}$ denotes the basis in the first tensorand with
the action described in~\bref{sec:irreps}. The second tensorand has
the basis $\{\botpr_{n,m}\}\cup\{\rightpr_{k,l}\}$,
see~\eqref{N-one-basis}, with the $\LUresSL2$-action described in
App.~\bref{app:modW-1-base}. Taking the irreducible submodule
$\repX^{\beta}_{s_2,r_2}$ and subquotient
$\repX^{-\beta}_{p-s_2,r_2+1}$ of the module
$\modN^{\beta}_{s_2,r_2}(1)$, we consider the two (complementary)
subspaces $\repX^{\alpha}_{s_1,r_1}\tensor\repX^{\beta}_{s_2,r_2}$ and
$\repX^{\alpha}_{s_1,r_1}\tensor\repX^{-\beta}_{p-s_2,r_2+1}$ in the
tensor product space with the bases
$\{\stprp_{n',m'}\tensor\botpr_{n,m}\}$ and
$\{\stprp_{n',m'}\tensor\rightpr_{k,l}\}$, respectively, and decompose
them using~\eqref{irred-fusion}. Projectives obtained from these
tensor products are direct summands because any projective
$\LUresSL2$-module is also injective (the contragredient one to a
projective module) and is therefore a direct summand in any module
into which it is embedded. We thus obtain the decomposition
\begin{equation}\label{XN-decomp-gen}
\repX^{\alpha}_{s_1,r_1}\tensor\modN^{\beta}_{s_2,r_2}(1) = \modPP
\oplus \modII,
\end{equation}
where $\modPP$ is isomorphic to the direct sum over all projective
modules in the first row in~\eqref{xn-tens-prod} while the module
$\modII$ has the following relation in the Grothendieck ring
\begin{equation}\label{modII-Grot}
\bigl[\modII\bigr]  = \sum_{\substack{r=|r_1-r_2|+1\\\step=2}}^{r_1+r_2-1}
\sum_{\substack{s=|s_1-s_2|+1\\\step=2}}^{\substack{
\min(s_1 + s_2 - 1,\\ 2p - s_1 - s_2 -
1)}}\!\!\!\repX^{\alpha\beta}_{s,r} \; + 
\sum_{\substack{r=|r_1-r_2-1|+1\\\step=2}}^{r_1+r_2}
\sum_{\substack{s=|s_1+s_2-p|+1\\\step=2}}^{\substack{
\min(p + s_1 - s_2 - 1,\\ p - s_1 + s_2 - 1)}}\!\!\!\repX^{-\alpha\beta}_{s,r},
\end{equation}
where the first sum contributes to the socle $\soc(\modII)$ of
 $\modII$ because this direct sum is the
 submodule in the module $\repX^{\alpha}_{s_1,r_1}\tensor\repX^{\beta}_{s_2,r_2}$
 which is embedded into
 $\repX^{\alpha}_{s_1,r_1}\tensor\modN^{\beta}_{s_2,r_2}(1)$. We next
 show that $\modII$  turns
out to be a direct sum of indecomposables  and the first sum in~\eqref{modII-Grot}
 exhausts the socle of $\modII$, and moreover we show that the radical
 $\rad(\modII)\cong\soc(\modII)$, i.e. $\mathrm{top}(\modII)\cong\modII/\soc(\modII)$ is given
 by the second sum in~\eqref{modII-Grot}.

We give now explicit expressions for cyclic vectors generating the
module $\modII$ in~\eqref{XN-decomp-gen}.
We begin with expressions for highest weight vectors
$\toppr^{s,r}_{0,0}$ of the summands $\repX^{-\alpha \beta}_{s,r}$  in the second direct sum in~\eqref{modII-Grot},
\begin{equation}\label{hwv-top-exp}
\toppr^{p + s_1 - s_2 - 2n - 1,r_1 + r_2 - 2m}_{0,0} = \sum_{i = 0}^n \sum_{j = 0}^m A_i B_j\stprp_{i,j} \tensor \rightpr_{n-i,m-j},
\end{equation}
where $\max(0,s_1-s_2) \le n \le \min(s_1,p-s_2)-1$, and $0 \le m \le
\min(r_1,r_2+1)-1$, and the coefficients
\begin{equation}
A_i = (\alpha\q^{2n+s_2})^i\q^{-i^2}\frac{([n]!)^2[s_1-i-1]![p-s_2-n+i-1]!}{[s_1-n]![p-s_2-n]![i]![n-i]!},
\end{equation}
and
\begin{equation}\label{Bj-def}
B_j = (\alpha^p(-1)^{s_1})^j\frac{(m!)^2(r_1-j-1)!(r_2-m+j)!}{(r_1-m)!(r_2+1-m)!j!(m-j)!}.
\end{equation}

The highest weight vectors  $\botpr^{s,r}_{0,0}$ 
of the summands $\repX^{\alpha \beta}_{s,r}$ in the first direct sum
in~\eqref{modII-Grot} have a similar expression with the substitutions $\rightpr_{n-i,m-j} \to \botpr_{n-i,m-j}$,
 $s_2 \to p-s_2$ and $r_2 \to r_2-1$ to be applied in~\eqref{hwv-top-exp}.

Finally, cyclic vectors generating $\modII$ can be taken as
$\toppr^{s,r}_{0,1}\equiv f\toppr^{s,r}_{0,0}$ with the expression
\begin{equation}
\toppr^{p + s_1 - s_2 - 2n - 1,r_1 + r_2 - 2m}_{0,1} = \sum_{i = 0}^n \sum_{j = 0}^{m+1} A_i C_j\stprp_{i,j} \tensor \rightpr_{n-i,m+1-j},
\end{equation}
where $\max(0,s_1-s_2) \le n \le \min(s_1,p-s_2)-1$, and $0 \le m \le
\min(r_1,r_2+1)-1$.
\begin{equation}
C_j = B_j\alpha^p(-1)^{s_1-1} + B_{j-1}
\end{equation}
and $B_j$ is defined in~\eqref{Bj-def} and we set
$B_{-1}\equiv0$.
The following relation takes then place in the module $\modII$, or in the tensor product~\eqref{xn-tens-prod},
\begin{equation}
E\toppr^{s,r}_{0,1} \sim  \ffrac{r-1}{r}\botpr^{p-s,r+1}_{p-s-1,1} + \ffrac{1}{r}\botpr^{p-s,r-1}_{p-s-1,0},
\end{equation}
where $|s_1 - s_2|+1\le s \le \min(s_1+s_2,2p-s_1-s_2)-1$ and
$|r_1-r_2|+1 \le r \le r_1+r_2-1$, and $\botpr^{s,r}_{n,k}=F^n
f^k\botpr^{s,r}_{0,0}$, with $\botpr^{s,r}_{0,0}$ defined above
after~\eqref{Bj-def}.
Using the definition of $\modN_{s,r}(n)$ and $\modW_{s,r}(n)$ given in~\bref{sec:M-W}
(see also an explicit example of $\modW^{\pm}_{s,r}(2)$ in App. \bref{app:modW-base}),
 we state that the
module $\modII$ is isomorphic to the third direct sum (in the second
row) in~\eqref{xn-tens-prod}. Combining with~\eqref{XN-decomp-gen},
this finishes our proof of the decomposition~\eqref{xn-tens-prod}.

The decomposition~\eqref{xnbar-tens-prod} is obtained in a  very similar
way and involves the basis~\eqref{Nbar-one-basis} in
$\modNbar^{\beta}_{s_2,r_2}(1)$ with the action given also in
App.~\bref{app:modW-1-base}.
\end{proof}

We now turn to a more complicated case of tensor products of two
$\modN$- or $\modNbar$-type modules.
\begin{Thm}\label{tens-prod-decomp-2}
For $1\leq s_1, s_2 \leq p-1$, $r_1, r_2\geq 1$, and $\alpha,\beta=\pm$, we have 
isomorphisms of $\LUresSL2$-modules
\begin{multline}\label{NN-decomp}
\modN^{\alpha}_{s_1,r_1}(1)\tensor\modN^{\beta}_{s_2,r_2}(1) =
\bigoplus_{\substack{r=|r_1-r_2|+1\\\step=2}}^{r_1+r_2-1}\bigoplus_{\substack{s=|s_1-s_2|+1\\\step=2}}^{p-\gamma_2}\!\!\!\PP^{\alpha
\beta}_{s,r} \,\oplus \bigoplus_{\substack{s=s_1 + s_2 +
1\\\step=2}}^{p-\gamma_2}\!\!\!\!\!\!\PP^{\alpha
\beta}_{s,r_1+r_2+1}\,\oplus\\ \oplus
\bigoplus_{\substack{r=|r_1-r_2+\sg(s_2-s_1)|+1\\\step=2}}^{r_1+r_2}\bigoplus_{\substack{s=p-|s_1-s_2|+1\\\step=2}}^{p-\gamma_1}\!\!\!\PP^{-\alpha
\beta}_{s,r} \,\oplus \bigoplus_{\substack{s=|p-s_1-s_2|+1\\\step=2}}^{p -
|s_1 - s_2| - 1}\!\!\!\modN^{-\alpha \beta}_{s,r_1+r_2}(1)
\end{multline}
and the tensor product of  two  modules contragredient to
$\modN^{\pm}_{s,r}(1)$ is
\begin{equation}\label{NbarNbar-decomp}
\modNbar^{\alpha}_{s_1,r_1}(1)\tensor\modNbar^{\beta}_{s_2,r_2}(1) =
\bigl(\modN^{\alpha}_{s_1,r_1}(1)\tensor\modN^{\beta}_{s_2,r_2}(1)\bigr)^*,
\quad \text{with}\; \PP^*\equiv\PP, \; \modN^*\equiv\modNbar,
\end{equation}
and the
tensor product of $\modN$- and $\modNbar$-type modules decomposes as
\begin{multline}\label{NNbar-decomp}
\modN^{\alpha}_{s_1,r_1}(1)\tensor\modNbar^{\beta}_{s_2,r_2}(1)
=\bigoplus_{\substack{r=|r_1-r_2|+2\\\step=2}}^{r_1+r_2}\bigoplus_{\substack{s=|p-s_1-s_2|+1\\\step=2}}^{p-\gamma_1}\!\!\!\PP^{-\alpha
\beta}_{s,r}\;\oplus\!\!\! \bigoplus_{\substack{s=p-|s_1 - s_2|
+1\\\step=2}}^{p-\gamma_1}\!\!\!\!\!\!\PP^{-\alpha
\beta}_{s,|r_1-r_2|}\delta_{\sg(r_1-r_2),\sg(s_1-s_2)}\,\oplus\\
\oplus\bigoplus_{\substack{r=|r_1-r_2|+1\\\step=2}}^{r_1+r_2+\sg(p-s_1-s_2)}\bigoplus_{\substack{s=\min(s_1
+ s_2 + 1,\\ 2p - s_1 - s_2 +
1)\\\step=2}}^{p-\gamma_2}\!\!\!\PP^{\alpha
\beta}_{s,r}\,\oplus\bigoplus_{\substack{s=|p-s_1-s_2|+1\\\step=2}}^{p-|s_1 -
s_2| - 1}\!\!\!
\begin{cases}
&\modN^{-\alpha \beta}_{s,r_1-r_2}(1),\quad r_1>r_2\\
&\modNbar^{-\alpha \beta}_{s,r_2-r_1}(1),\quad r_2>r_1\\
&\repX^{\alpha\beta}_{p-s,1},\quad r_1=r_2.
\end{cases}
\end{multline}
\end{Thm}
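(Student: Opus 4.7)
The plan is to mimic the strategy used for Theorem \bref{tens-prod-decomp-1}: reduce to tensor products with irreducibles via the short exact sequences defining the Weyl modules, extract all projective (hence injective) direct summands, and identify the non-projective remainder by explicit cyclic vectors and Grothendieck-ring bookkeeping. The decomposition \eqref{NbarNbar-decomp} will then follow automatically from \eqref{NN-decomp} by applying the contragredient functor, which is a tensor functor interchanging $\modN\leftrightarrow\modNbar$ and fixing the self-dual $\PP$-modules.

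For \eqref{NN-decomp} I would start from the short exact sequence $0\to\repX^{\beta}_{s_2,r_2}\to\modN^{\beta}_{s_2,r_2}(1)\to\repX^{-\beta}_{p-s_2,r_2+1}\to0$ and tensor it with $\modN^{\alpha}_{s_1,r_1}(1)$. Exactness of the tensor product produces a short exact sequence whose outer terms are given by Thm.~\bref{tens-prod-decomp-1} (using commutativity to switch the tensorand positions). In the Grothendieck ring this determines all composition factors of $\modN^{\alpha}_{s_1,r_1}(1)\tensor\modN^{\beta}_{s_2,r_2}(1)$. Projective modules split off as direct summands because every projective in $\catC_p$ is also injective; after cancellation, the remaining composition factors match precisely those of $\bigoplus_{s}\modN^{-\alpha\beta}_{s,r_1+r_2}(1)$ in the range specified in \eqref{NN-decomp}.

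To upgrade this Grothendieck identity to a module-level isomorphism, I would exhibit, for each admissible $s$ in the last sum, an explicit highest-weight vector in $\modN^{\alpha}_{s_1,r_1}(1)\tensor\modN^{\beta}_{s_2,r_2}(1)$ of weight $\pm\q^{p-s-1}$ and $\tfrac{r_1+r_2-2}{2}$ realized as a sum of basis tensors of the form $\botpr_{n-i,m-j}\tensor\rightpr_{i,j}$ with coefficients of the same flavour as $A_i,B_j,C_j$ in \eqref{hwv-top-exp}--\eqref{Bj-def}. Applying $f$ and $F$ to such a vector must produce a vector lying in the correct subquotient labelled $\repX^{-\alpha\beta}_{p-s,r_1+r_2+1}$, and then applying $e$ should close up the diagram \eqref{schem-N} with $n=1$. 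Since $\dim\Ext(\repX_1,\repX_2)=1$ by Cor.~\bref{lemma:exts}, the extension class is rigid and the module generated is forced to be $\modN^{-\alpha\beta}_{s,r_1+r_2}(1)$. A dimension count then confirms exhaustion of the tensor product.

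For \eqref{NNbar-decomp} the same machinery applies with the dual exact sequence $0\to\repX^{-\beta}_{p-s_2,r_2+1}\to\modNbar^{\beta}_{s_2,r_2}(1)\to\repX^{\beta}_{s_2,r_2}\to0$. The only structural novelty is the trichotomy $r_1>r_2$, $r_1<r_2$, $r_1=r_2$: in the degenerate case $r_1=r_2$ the candidate $\modN$-type summand collapses to its semisimple socle $\XX^{\alpha\beta}_{p-s,1}$, reflecting the absence of room for a nontrivial extension at the boundary of the $r$-range. This is also where the $\sg(r_1-r_2)$ and $\sg(s_1-s_2)$ corrections in the projective sums come from: they keep track of which projectives genuinely appear after cancellation with composition factors absorbed into the non-projective summand.

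The main obstacle will not be the overall strategy, which is dictated by the proof of Thm.~\bref{tens-prod-decomp-1}, but the explicit combinatorics: writing down the correct highest-weight vectors and verifying, in particular for \eqref{NNbar-decomp}, that the signs $\sg(r_1-r_2)$, $\sg(s_1-s_2)$ and the Kronecker $\delta_{\sg(r_1-r_2),\sg(s_1-s_2)}$ emerge correctly from the collision of the $\modN$ and $\modNbar$ patterns at $r_1=r_2$. Everything else reduces to careful bookkeeping of the identity \eqref{irred-fusion} over the four-term filtration induced on $\modN\tensor\modN$, $\modN\tensor\modNbar$ by the socle and radical filtrations of the two tensorands.
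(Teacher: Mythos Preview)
Your overall strategy---filter by the short exact sequence, strip off projectives using injectivity, and identify the remainder via explicit cyclic vectors---is the same architecture the paper uses, but you are skipping the step that carries most of the weight in the paper's argument.

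The gap is in the sentence ``after cancellation, the remaining composition factors match precisely those of $\bigoplus_{s}\modN^{-\alpha\beta}_{s,r_1+r_2}(1)$.'' Knowing the Grothendieck class of the middle term from $[A]+[C]$ does \emph{not} tell you which projectives actually occur as direct summands. You correctly observe that any projective summand of $A$ or $C$ splits off from $B$, but the non-projective summands of $A$ and $C$ (which by Thm.~\bref{tens-prod-decomp-1} are $\modN(n)$- or $\modW(n)$-type modules with $n$ possibly large) can and do glue across the extension into \emph{additional} projectives in $B$. Concretely, the paper finds a further block
\[
\modPP_1=\bigoplus_{\substack{r=|r_1-r_2|+1\\\step=2}}^{r_1+r_2-1}
\bigoplus_{\substack{s=|s_1-s_2|+1\\\step=2}}^{\substack{\min(s_1+s_2-1,\\2p-s_1-s_2-1)}}\PP^{\alpha\beta}_{s,r}
\]
sitting inside the ``non-projective'' remainder $\modII$. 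None of these projectives is visible as a summand of either outer term of your short exact sequence; they arise only from the gluing. Your dimension count at the end cannot distinguish a genuine $\modN(1)$ summand from a pair of irreducibles absorbed into some $\PP$, so it does not close the gap.

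The paper detects $\modPP_1$ by analysing the Jordan form of the Casimir element $\cas=(\q-\q^{-1})^2EF+\q^{-1}K+\q K^{-1}$ on explicit finite-dimensional invariant subspaces $V^n$ of the tensor product. The point is that $\cas$ acts diagonalisably on any module of semisimple length $\leq2$ but has rank-$2$ Jordan blocks on the top/bottom pair of each $\PP^{\pm}_{s,r}$; so the presence or absence of Jordan cells, computed from the block matrices~\eqref{matrix_casimir_1}--\eqref{matrix_casimir_2}, pins down exactly which irreducible subquotients assemble into projectives. Only after this step does the paper exhibit the cyclic vectors for the residual $\modN^{-\alpha\beta}_{s,r_1+r_2}(1)$ summands, as you propose to do. Without an analogue of this Casimir (or some other) argument, your proof does not establish the projective part of \eqref{NN-decomp}, and the same issue recurs for \eqref{NNbar-decomp}.
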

\begin{proof}
We consider first the tensor product~\eqref{NN-decomp}. The first and
the second tensorands have the bases
$\{\botpr^1_{n,m}\}\cup\{\rightpr^1_{k,l}\}$ and
$\{\botpr^2_{n,m}\}\cup\{\rightpr^2_{k,l}\}$ respectively,
see~\eqref{N-one-basis}, with the $\LUresSL2$-action described in
App.~\bref{app:modW-1-base}. Taking the irreducible submodules
$\repX^{\alpha}_{s_1,r_1}$ and $\repX^{\beta}_{s_2,r_2}$ and
subquotients $\repX^{-\alpha}_{p-s_1,r_1+1}$ and
$\repX^{-\beta}_{p-s_2,r_2+1}$ of the modules, we consider the four
subspaces $\repX^{\alpha}_{s_1,r_1}\tensor\repX^{\beta}_{s_2,r_2}$,
$\repX^{\alpha}_{s_1,r_1}\tensor\repX^{-\beta}_{p-s_2,r_2+1}$,
$\repX^{-\alpha}_{p-s_1,r_1+1}\tensor\repX^{\beta}_{s_2,r_2}$, and
$\repX^{-\alpha}_{p-s_1,r_1+1}\tensor\repX^{-\beta}_{p-s_2,r_2+1}$ in
the tensor product space with the bases
$\{\botpr^1_{n,m}\tensor\botpr^2_{k,l}\}$,
$\{\botpr^1_{n,m}\tensor\rightpr^2_{k,l}\}$,
$\{\rightpr^1_{k,l}\tensor\botpr^2_{n,m}\}$, and
$\{\rightpr^1_{n,m}\tensor\rightpr^2_{k,l}\}$, respectively, and
decompose them using~\eqref{irred-fusion}.  Projective modules
obtained from these tensor products are direct summands because they
are also injective.  Irreducible
modules obtained from these tensor products contribute to submodules
or subquotients in indecomposable direct summands.  We thus obtain the
decomposition
\begin{equation}\label{NN-decomp-gen}
\modN^{\alpha}_{s_1,r_1}(1)\tensor\modN^{\beta}_{s_2,r_2}(1) =
\modPP \oplus \modII,
\end{equation}
where $\modPP$ is isomorphic to
the direct sums
\begin{equation*}
\bigoplus_{\substack{r=|r_1-r_2|+1\\\step=2}}^{r_1+r_2+\sg(p-s_2-s_1)}\bigoplus_{\substack{s=\min(s_1
+ s_2 + 1,\\ 2p- s_1 - s_2 +
1)\\\step=2}}^{p-\gamma_2}\!\!\!\!\!\!\PP^{\alpha \beta}_{s,r} \,\oplus
\bigoplus_{\substack{r=|r_1-r_2+\sg(s_2-s_1)|+1\\\step=2}}^{r_1+r_2}\bigoplus_{\substack{s=p-|s_1-s_2|+1\\\step=2}}^{p-\gamma_1}\!\!\!\PP^{-\alpha
\beta}_{s,r}
\end{equation*}
 over projective
modules (not all) in~\eqref{NN-decomp}
while the module
$\modII$ has the following relation in the Grothendieck ring
\begin{multline}\label{modII-Grot-2}
\bigl[\modII\bigr] = 
\sum_{\substack{r=|r_1-r_2|+1\\\step=2}}^{r_1+r_2-1}
\sum_{\substack{s=|s_1-s_2|+1\\\step=2}}^{\substack{\min(s_1 + s_2
    - 1,\\ 2p - s_1 - s_2 - 1)}} \!\!\!\repX^{\alpha \beta}_{s,r} +
\sum_{\substack{r=|r_1-r_2|+1\\\step=2}}^{r_1+r_2+1}
\sum_{\substack{s=|s_1-s_2|+1\\\step=2}}^{\substack{\min(s_1 + s_2
    - 1,\\ 2p - s_1 - s_2 - 1)}} \!\!\!\repX^{\alpha \beta}_{s,r}\\
+\sum_{\substack{r=|r_1-r_2-1|+1\\\step=2}}^{r_1+r_2}
\sum_{\substack{s=|s_1+s_2-p|+1\\\step=2}}^{\substack{\min(p + s_1 -
    s_2 - 1,\\ p - s_1 + s_2 - 1)}} \!\!\!\repX^{-\alpha \beta}_{s,r} +
\sum_{\substack{r=|r_1-r_2+1|+1\\\step=2}}^{r_1+r_2}
\sum_{\substack{s=|s_1+s_2-p|+1\\\step=2}}^{\substack{\min(p - s_1 +
    s_2 - 1,\\ p + s_1 - s_2 - 1)}} \!\!\!\repX^{-\alpha \beta}_{s,r}
\end{multline}
where the first sum obviously contributes to the socle $\soc(\modII)$
of $\modII$ because this direct sum is the
 submodule in the module $\repX^{\alpha}_{s_1,r_1}\tensor\repX^{\beta}_{s_2,r_2}$
 which is embedded into
 $\modN^{\beta}_{s_2,r_2}(1)\tensor\modN^{\beta}_{s_2,r_2}(1)$. We next
 show that $\modII$  turns
out to be a direct sum of indecomposables  and the first sum in~\eqref{modII-Grot-2}
 exhausts the socle (the first level) of $\modII$, and moreover we
 show below that the second level consists of the two last sums (in the second
 row) while the
 third level of $\modII$ is given
 by the second sum in the first row
 of~\eqref{modII-Grot-2}.

We note that if the summands contribute to a subquotient in an
indecomposable module of semisimple length not greater than $2$ (for
example to $\modN_{s,r}(n)$) then the Casimir element~\eqref{eq:casimir} has a diagonal
form on it. If the summands correspond to top subquotients in projective
modules then the Casimir element is non-diagonalizable on them. Thus, the
structure of the tensor product can be studied by
diagonalizability of the Casimir element
\begin{equation*}
  \cas=(\q-\q^{-1})^2 EF+\q^{-1}K+\q K^{-1}.
\end{equation*}

Our next strategy consists of two steps: (1) to study a Jordan cell decomposition of the matrix representig the Casimir element obtaining 
thus a projective module summand $\modPP_1$ in $\modII$ and then (2) to give cyclic vectors generating direct summands in $\modII$ which have the semisimple length not greater than $2$.
  
We assume in what follows that 
\begin{equation}\label{set-params}
0 \le n \le \min(p-s_1,p-s_2)-1, \quad
0 \le m_1 \le r_1+1, \quad 0 \le m_2 \le r_2+1.
\end{equation}

For any triplet $(n,m_1,m_2)$ with values from~\eqref{set-params}, let $T^{n}$ denotes a $(n+1)\times(n+1)$-matrix
representing the action of  $\cas$ restricted to the
subspace spanned by $\{\rightpr^1_{i,m_1}\tensor
\rightpr^2_{n-i,m_2}\}$, where $0 \le i \le n $ and we omit $m_1$
and $m_2$ indexes in the notation for this matrix, i.e. we consider
a decomposition (of the representation of) $\cas = \cas^{(d)}+\cas^{(n)}$ with a
diagonalizable part  $\cas^{(d)}=T^n$ while a non-diagonalizable part
$\cas^{(n)}$ will be given below.
$T^{n}$ is a three-diagonal matrix with the elements
\begin{equation}\label{cas-matr-form}
\begin{split}
&T^n_{i,i-1} = (\q-\q^{-1})^2\betabar[\ibar][\sbar_1-\ibar]\q^{\sbar_2-\sbar_1-2\nbar+4\ibar+2},\\
&T^n_{i,i} = \alphabar \betabar(\q^{-\sbar_1+2\ibar+1}(\q^{\sbar_2}+\q^{-\sbar_2})+\q^{\sbar_2-2\nbar+2\ibar-1}(\q^{\sbar_1}+\q^{-\sbar_1})-(\q+\q^{-1})\q^{\sbar_2-\sbar_1-2\nbar+4\ibar}),\\
&T^n_{i,i+1} =
  (\q-\q^{-1})^2\betabar[\nbar-\ibar][\sbar_2-\nbar-\ibar],\\
&T^n_{i,i\pm k} = 0,\qquad k\geq2.
\end{split}
\end{equation}
where $\sbar_1 = p-s_1$, $\sbar_2 = p-s_2$, $\nbar =
  n$, $\alphabar = -\alpha$, $\betabar = -\beta$, and $\ibar = i$.

The subspace $\{\rightpr^1_{i,m_1}\tensor
\rightpr^2_{n-i,m_2}\}$ is not invariant with respect to the Casimir
element action. We now describe the non-diagonalizable part
$\cas^{(n)}$. Depending on parameters,
there exist two cases:
\begin{enumerate}
\item
When $s_1+s_2-p+n < 0$, the smallest invariant subspace including $\{\rightpr^1_{i,m_1}\tensor \rightpr^2_{n-i,m_2}\}$ is
$V^{n} = \{\rightpr^1_{i,m_1}\tensor \rightpr^2_{n-i,m_2}\}\cup\{\rightpr^1_{j,m_1}\tensor
\botpr^2_{s_2+n-j,m_2-1}\}\cup\{\botpr^1_{k,m_1-1}\tensor \rightpr^2_{s_1+n-k,m_2}\}$,
where $n+1 \le j \le s_2+n $ and $0 \le k \le s_1-1 $.
Let $C_1^n$ denotes a $s_2\times s_2$ matrix representing $\cas$ within the subspace $\{\rightpr^1_{j,m_1}\tensor
\botpr^2_{s_2+n-j,m_2-1}\}$ and $C_2^n$ -- a $s_1\times s_1$ matrix representing $\cas$ within
$\{\botpr^1_{k,m_1-1}\tensor \rightpr^2_{s_1+n-k,m_2}\}$. They have the
elements as in~\eqref{cas-matr-form} with the substitutions $\sbar_1 = p-s_1$, $\sbar_2 = s_2$, $\nbar
= p-s_2+n$, $\ibar = j$, $\alphabar = -\alpha$, $\betabar = \beta$ and $\sbar_1 =
s_1$, $\sbar_2 = p-s_2$, $\nbar = p-s_1+n$, $\ibar = k$, $\alphabar = \alpha$, $\betabar =
-\beta$ respectively. A matrix representing the Casimir element within the whole
invariant subspace $V^n$ then takes the form
\begin{equation}\label{matrix_casimir_1}
  \xymatrix@=1pt{
    T^n&0&0\\
    \zeroRT{u_1}&C_1^n&0\\
    \zeroLB{u_2}&0&C_2^n
  }
\end{equation}
where $\zeroRT{u}$ denotes a matrix with all elements equal zero except
for the right top one which equals
\begin{equation}\label{u1-koeff}
u_1 = -(\q-\q^{-1})^2\beta m_2 \frac{(-1)^p[s_2]}{[p-1]!},
\end{equation}
and $\zeroLB{v}$
denotes a matrix with all elements equal zero except for the left bottom
one which equals 
\begin{equation}\label{u2-koeff}
u_2 = -(\q-\q^{-1})^2\beta m_1 \frac{(-1)^p[s_1]}{[p-1]!}\q^{s_1-s_2-2n+2}.
\end{equation}

\item
When $s_1+s_2-p+n \ge 0$, let $C_1^n$ similarly denotes a matrix
representing the Casimir element $\cas$ within the subspace
$\{\rightpr^1_{j,m_1}\tensor \botpr^2_{s_2+n-j,m_2-1}\}$ and $C_2^n$
-- a matrix representing $\cas$ within $\{\botpr^1_{k,m_1-1}\tensor
\rightpr^2_{s_1+n-k,m_2}\}$, where $n+1 \le j \le p-s_1-1 $ and
$s_2+s_1-p+n+1 \le k \le s_1-1 $. They have the elements as
in~\eqref{cas-matr-form} with $\sbar_1 = p-s_1$, $\sbar_2 = s_2$,
$\nbar = p-s_2+n$, $\ibar = j$, $\alphabar = -\alpha$, $\betabar =
\beta$ and $\sbar_1 = s_1$, $\sbar_2 = p-s_2$, $\nbar = p-s_1+n$,
$\ibar = k$, $\alphabar = \alpha$, $\betabar = -\beta$,
respectively. Let $B_1^n$ and $B_2^n$ also denotes matrices
representing $\cas$ within $\{\botpr^1_{m,m_1}\tensor
\botpr^2_{s_2+s_1-p+n-m,m_2-1}\}$ and $\{\botpr^1_{m,m_1-1}\tensor
\botpr^2_{s_2+s_1-p+n-m,m_2}\}$,
 respectively, where $0 \le m \le
s_1+s_2-p+n $. They have also the
elements~\eqref{cas-matr-form} with $\sbar_1 = s_1$, $\sbar_2 =
s_2$, $\nbar = s_1+s_2-p+n$, $\alphabar = \alpha$, $\betabar = \beta$
and $\ibar = m$. The matrix representing $\cas$ within the invariant
subspace $V^n = \{\rightpr^1_{i,m_1}\tensor
\rightpr^2_{n-i,m_2}\}\cup\{\rightpr^1_{j,m_1}\tensor
\botpr^2_{s_2+n-j,m_2-1}\}\cup\{\botpr^1_{k,m_1-1}\tensor
\rightpr^2_{s_1+n-k,m_2}\}\cup\{\botpr^1_{m,m_1}\tensor
\botpr^2_{s_2+s_1-p+n-m,m_2-1}\}\cup\{\botpr^1_{m,m_1-1}\tensor
\botpr^2_{s_2+s_1-p+n-m,m_2}\}$ takes then the block-structure
\begin{equation}\label{matrix_casimir_2}
  \xymatrix@C=2pt@R=16pt{
&&\\
&&\\
&A^n\;=&\\
&&
}
  \xymatrix@=1pt{
    T^n&0&0&0&0\\
    \zeroRT{u_1}&C_1^n&0&0&0\\
    0&\zeroRT{v_1}&B_1^n&0&0\\
    0&0&0&B_2^n&\zeroLB{v_2}\\
    \zeroLB{u_2}&0&0&0&C_2^n
  }
\end{equation}
where we set $v_1 = (\q-\q^{-1})^2\beta[p-s_1-n-1][s_1+s_2-p+n+1]$
and $v_2 =
(\q-\q^{-1})^2\beta[p-s_2-n-1][s_1+s_2-p+n+1]\q^{s_1+s_2-p+2n}$, and
$u_1$, $u_2$ are given in~\eqref{u1-koeff} and~\eqref{u2-koeff}.
\end{enumerate}

Thus, we have a set of subspaces $V^n$ which are invariant with respect to the Casimir element action
in the whole space of the tensor product.

Next, we assume a decomposition
\begin{equation}\label{I-decomp-gen}
\modII =
\modPP_1 \oplus \modII_1,
\end{equation}
where $\modPP_1$ is a maximum projective submodule, i.e. the direct sum
over all projective covers embedded in $\modII$,
while $\modII_1$ is a module of the semisimple length not greater than
$2$. We should note that
\begin{equation}[\mathrm{top}(\modPP_1)] \subset \sum_{\substack{r=|r_1-r_2|+1\\\step=2}}^{r_1+r_2+1}
\sum_{\substack{s=|s_1-s_2|+1\\\step=2}}^{\substack{\min(s_1 + s_2
    - 1,\\ 2p - s_1 - s_2 - 1)}} \!\!\!\repX^{\alpha \beta}_{s,r},
\end{equation}
where the sum was introduced in~\eqref{modII-Grot-2}, and every
highest weight vector from the sum (rigorously, in the corresponding direct sum of modules
considered as a quotient of $\modII$, of course)
appears  in the direct sum of spaces $\bigoplus_n V^n$.  Therefore, each projective
module that appears in $\modPP_1$ should contribute to a set of
Jordan cells (of rank $2$) in the matrix of $\cas$ on the space $\bigoplus_n V^n$. We thus need to
determine the Jordan cells structure on all subspaces $V^n$.

When $s_1+s_2 > p$, the Casimir element matrix has the block-structure
$A^n$~\eqref{matrix_casimir_2} for all subspaces $V^n$, with $0 \le n
\le \min(p-s_1,p-s_2)-1$. The set of eigenvalues of the $A^n$ matrix
coinsides with the union of the sets of eigenvalues of the blocks
$T^n$, $C_1^n$, $C_2^n$, $B_1^n$ and $B_2^n$. The eigenvalues of these
matrices are
\begin{align*}
&T^n:& \{\q^{2p-s_1-s_2-2k-1}+\q^{-(2p-s_1-s_2-2k-1)}, 0\leq k\leq n\},&\\
&C_1^n:& \{-\q^{s_2+p-s_1-2k-1}-\q^{-(s_2+p-s_1-2k-1)}, 0\leq k\leq p-s_1-n-2\},&\\
&C_2^n:& \{-\q^{s_1+p-s_2-2k-1}-\q^{-(s_1+p-s_2-2k-1)}, 0\leq k\leq p-s_2-n-2\},&\\
&B_1^n:& \{\q^{s_1+s_2-2k-1}+\q^{-(s_1+s_2-2k-1)}, 0\leq k\leq n+s_1+s_2-p\},&\\
&B_2^n:& \{\q^{s_1+s_2-2k-1}+\q^{-(s_1+s_2-2k-1)}, 0\leq k\leq n+s_1+s_2-p\}.&
\end{align*}

 We see that the eigenvalues
\begin{equation}\label{eigenvalues-set}
\{\q^{2p-s_1-s_2-2k-1}+\q^{-(2p-s_1-s_2-2k-1)}, 0\leq k\leq n\}
\end{equation}
are degenerated (see the first and two last rows above). 
Whether there are three eigenvectors corresponding to each triplet of
the degenerated eigenvalues or only two of them the Casimir element is
diagonalizable or not and the corresponding irreducible term is a
subquotient in a non-projective module or in a projective one. We
should note that there are other degenerated eigenvalues --
eigenvalues from the second and third rows which partially coincide,
namely the range $\{-\q^{p-|s_1-s_2|-2k-1}-\q^{-(p-|s_1-s_2|-2k-1)},
0\leq k\leq p-\max(s_1,s_2)-n-2\}$, but the corresponding subspaces
are direct summands as modules over the subalgebra generated by
$\cas$, therefore these eigenvalues can not correspond to Jordan
cells.  The degenerated eigenvalues from the last two rows (see the
fourth and fifth row above) from the range
$\{\q^{s_1+s_2-2k-1}+\q^{-(s_1+s_2-2k-1)}, 0\leq k\leq s_1+s_2-p-1\}$
comlementary to the range~\eqref{eigenvalues-set} correspond to
non-trivial Jordan cells but these cells contribute to projective
modules that are direct summands in the submodule
$\repX^{\alpha}_{s_1,r_1}\tensor\repX^{\beta}_{s_2,r_2}$ of the
tensor-product module and they are
therefore submodules in $\modPP$ and not in $\modII$
(see~\eqref{XN-decomp-gen}).

We analyze next the most degenerate range corresponding to~\eqref{eigenvalues-set}.
We claim that there are only two eigenvectors for each eigenvalue from the
set~\eqref{eigenvalues-set}: one vector is from the subspace
$\{\botpr^1_{m,m_1}\tensor\linebreak[0]\botpr^2_{s_2+s_1-p+n-m,m_2-1}\}$
and it coincides with the corresponding  eigenvector of the matrix $B_1^n$, and the
second one is from $\{\botpr^1_{m,m_1-1}\tensor
\botpr^2_{s_2+s_1-p+n-m,m_2}\}$ and coincides with the eigenvector of the matrix $B_2^n$,
where $0 \le m \le s_1+s_2-p+n $, and consequently there is a Jordan cell
contributing to projective modules in $\modII$. To prove that a third linearly independent eigenvector does not exist, we consider the equation
$(A^n-\lambda I)\mathsf{v} = 0$ (see~\eqref{matrix_casimir_2}), with
$\lambda$ being an eigenvalue from the set~\eqref{eigenvalues-set}. Since there exist at least two solutions of this equation, we
can decrease the number of variables by two.  The remaining equation
however has no a solution.
A similar analysis can be repeated for the case $p - n \le s_1+s_2 \le
p$.

 For $s_1+s_2 < p - n$, the only difference from the previous cases is that the matrix $T^n$ is not diagonalizable which lead to projective modules contributing to $\modPP$ while
the module $\modII$ has no projective modules as direct summands because (1) the socle of a projective direct summand in $\modII$ should be a proper subspace in the submodule $\repX^{\alpha}_{s_1,r_1}\tensor\repX^{\beta}_{s_2,r_2}$, which is spanned by vectors of the type $\botpr\tensor\botpr$, but (2) all Jordan cells of rank 2 in the Jordan form of the $A^n$ matrix given in~\eqref{matrix_casimir_1} are spanned by vectors which have an empty intersection with the submodule $\repX^{\alpha}_{s_1,r_1}\tensor\repX^{\beta}_{s_2,r_2}$ of the tensor-product module.

Combining with the $s\ell(2)$ content of $\modII$ in~\eqref{modII-Grot-2}, we finally get that the set of Jordan cells corresponds to
the decomposition~\eqref{I-decomp-gen} where
\begin{equation}
\modPP_1 = \bigoplus_{\substack{r=|r_1-r_2|+1\\\step=2}}^{r_1+r_2-1}
\bigoplus_{\substack{s=|s_1-s_2|+1\\\step=2}}^{\substack{\min(s_1 + s_2 - 1,\\ 2p - s_1 - s_2 - 1)}}\!\!\!\PP^{\alpha \beta}_{s,r},
\end{equation}
while the module $\modII_1$ has the following relation in the
Grothendieck ring
\begin{equation}
[\modII_1] = \sum_{\substack{s=|p-s_1-s_2|+1\\\step=2}}^{p - |s_1 - s_2| - 1}\!\!\!\repX^{-\alpha \beta}_{s,r_1+r_2}(1)+
\sum_{\substack{s=|s_1 - s_2| + 1\\\step=2}}^{\substack{\min(s_1 + s_2 - 1,\\ 2p - s_1 - s_2 - 1)}}\!\!\!\repX^{\alpha \beta}_{s,r_1+r_2+1}(1).
\end{equation}

Similarly to the proof in~\bref{tens-prod-decomp-1}, we obtain
\begin{equation}
\modII_1 = \bigoplus_{\substack{s=|p-s_1-s_2|+1\\\step=2}}^{p - |s_1 - s_2| - 1}\!\!\!\modN^{-\alpha \beta}_{s,r_1+r_2}(1),
\end{equation}
 where highest weight vectors and cyclic vectors of the subquotients have the expressions
\begin{align*}
&\toppr^{2p - s_1 - s_2 - 2n - 1}_{0,0} = \sum_{i = 0}^n A_i \rightpr^1_{i,0} \tensor \rightpr^2_{n-i,0},\\
&\toppr^{2p - s_1 - s_2 - 2n - 1}_{0,1} = \sum_{i = 0}^n A_i ((-\alpha)^p(-1)^{s_1-1}\rightpr^1_{i,0} \tensor \rightpr^2_{n-i,1}
+\rightpr^1_{i,1}\tensor \rightpr^2_{n-i,0}),
\end{align*}
where $\max(0,p-s_1-s_2) \le n \le \min(p-s_1,p-s_2)-1$, and the coefficients
\begin{equation}
A_i = (\alpha\q^{2n+s_2})^i\q^{-i^2}\frac{([n]!)^2[p-s_1-i-1]![p-s_2-n+i-1]!}{[p-s_1-n]![p-s_2-n]![i]![n-i]!}.
\end{equation}
Since $E\toppr^{2p - s_1 - s_2 - 2n - 1}_{0,0}=0$ and 
\begin{equation*}
E\toppr^{2p - s_1 - s_2 - 2n - 1}_{0,1} \in
\bigoplus_{\substack{s=|p-s_1-s_2|+1\\\step=2}}^{p - |s_1 - s_2| - 1}\!\!\!\repX^{-\alpha \beta}_{s,r_1+r_2}(1)
\end{equation*}
then $\toppr^{2p - s_1 - s_2 - 2n - 1}_{0,1}$ is a cyclic vector of $\modN(1)$ module (see App.~\bref{app:modW-1-base}).

Finally, a  decomposition of the whole tensor product is
\begin{align*}
&\modN^{\alpha}_{s_1,r_1}(1)\tensor\modN^{\beta}_{s_2,r_2}(1) =
\bigoplus_{\substack{r=|r_1-r_2|+1\\\step=2}}^{r_1+r_2-1}\bigoplus_{\substack{s=|s_1-s_2|+1\\\step=2}}^{\substack{\min(s_1
+ s_2 - 1,\\ 2p - s_1 - s_2 - 1)}}\!\!\!\PP^{\alpha \beta}_{s,r}\,\oplus
\bigoplus_{\substack{s=|p-s_1-s_2|+1\\\step=2}}^{p - |s_1 - s_2| -
1}\!\!\!\modN^{-\alpha \beta}_{s,r_1+r_2}(1)\,\oplus\\ &\oplus\!\!\!
\bigoplus_{\substack{r=|r_1-r_2|+1\\\step=2}}^{r_1+r_2+\sg(p-s_2-s_1)}\bigoplus_{\substack{s=\min(s_1
+ s_2 + 1,\\ 2p- s_1 - s_2 +
1)\\\step=2}}^{p-\gamma_2}\!\!\!\!\!\!\PP^{\alpha \beta}_{s,r} \,\oplus
\bigoplus_{\substack{r=|r_1-r_2+\sg(s_2-s_1)|+1\\\step=2}}^{r_1+r_2}\bigoplus_{\substack{s=p-|s_1-s_2|+1\\\step=2}}^{p-\gamma_1}\!\!\!\PP^{-\alpha
\beta}_{s,r}.
\end{align*}
which can be rewritten as in~\eqref{NN-decomp}.
The decompositions~\eqref{NbarNbar-decomp} and~\eqref{NNbar-decomp}
are obtained in a very similar way and we omit it.
\end{proof}

The above results allow us to decompose the tensor product of an arbitrary pair of indecomposable
modules over $\LUresSL2$.
\begin{Thm}\label{thm:tensor-third}
Tensor product of arbitrary two indecomposable  $\LUresSL2$-modules is obtained 
from the base tensor products in~\bref{tens-prod-decomp-1} and~\bref{tens-prod-decomp-2}, and the following list of rules:
\begin{enumerate}
 \item the tensor product of $\PP_{s,r}$ with an indecomposable 
module is isomorphic to the tensor product of $\PP_{s,r}$ with the direct
sum of all irreducible subquotients constituting the  indecomposable 
module.

\item An indecomposable module  with the semisimple length $2$ is the tensor product of irreducible
and simplest indecomposable modules for $s=1,\dots p-1$ and $r,n\in\oN$:
\begin{align}
\modN^{\alpha}_{s,r}(n) &= \repX^{+}_{1,n}\tensor\modN^{\alpha}_{s,r+n-1}(1),
&\modNbar^{\alpha}_{s,r}(n) &= \repX^{+}_{1,n}\tensor\modNbar^{\alpha}_{s,r+n-1}(1),\label{eq:modN-n}\\
\modW^{\alpha}_{s,r}(n) &= \repX^{-}_{1,r+n}\tensor\modN^{\alpha}_{p-s,n}(1),
&\modM^{\alpha}_{s,r}(n) &= \repX^{-}_{1,r+n}\tensor\modNbar^{\alpha}_{p-s,n}(1).\label{eq:modWM-n}
\end{align}
\end{enumerate}
\end{Thm}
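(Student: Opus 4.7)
The plan is to reduce the tensor product of any two indecomposable $\LUresSL2$-modules to the base products of~\bref{tens-prod-decomp-1} and~\bref{tens-prod-decomp-2} in two passes. First, rule~(1) absorbs any projective tensor factor into a sum of tensor products with irreducibles. Second, rule~(2) rewrites every non-projective length-$2$ indecomposable as a tensor product of an irreducible with a simplest indecomposable. Commutativity and associativity of~$\tensor$ then bring everything to base cases.

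Rule~(1) rests on two standard facts. The first is that, over any finite-dimensional Hopf algebra, tensoring with a projective module yields a projective module: the antipode twist $x\tensor v\mapsto\sum x_{(1)}\tensor S(x_{(2)})v$ identifies $\LUresSL2\tensor V$ with diagonal action with a direct sum of copies of the regular representation, so $P\tensor V$ is a direct summand of a free module whenever $P$ is. The second is that each indecomposable projective in $\catC^+_p$ has a unique simple top (the $\PP^{\pm}_{s,r}$ with $1\le s\le p-1$, together with the simple projectives $\XX^{\pm}_{p,r}$), hence a projective in $\catC^+_p$ is determined up to isomorphism by its class in the Grothendieck ring. Since $[\PP_{s,r}\tensor M]=[\PP_{s,r}]\cdot[M]$ depends only on the composition factors of $M$, we obtain $\PP_{s,r}\tensor M\cong\PP_{s,r}\tensor\bigoplus_{i}\XX_{i}$, which is rule~(1).

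Rule~(2) is a direct specialization of Thm.~\bref{tens-prod-decomp-1} at $s_1=1$. Both projective summands in~\eqref{xn-tens-prod} then have empty index ranges, since their lower bounds in $s$ (namely $2p-s_2$ and $p+s_2$) exceed~$p$, and the third sum collapses to the single value $s=s_2$ because $|1-s_2|+1=s_2=\min(s_2,2p-s_2-2)$ for $1\le s_2\le p-1$. Choosing $r_1=n$, $s_2=s$, $r_2=r+n-1$ with positive sign on the first tensorand gives $\repX^{+}_{1,n}\tensor\modN^{\alpha}_{s,r+n-1}(1)=\modN^{\alpha}_{s,r}(n)$ (here $r_1\le r_2$ and $r_2-r_1+1=r$); choosing $r_1=r+n$, $s_2=p-s$, $r_2=n$ with negative sign gives $\repX^{-}_{1,r+n}\tensor\modN^{\alpha}_{p-s,n}(1)=\modW^{\alpha}_{s,r}(n)$ (here $r_1>r_2$ and $r_1-r_2=r$). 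The remaining two identities of~(2) follow by applying the monoidal contragredient functor $(\cdot)^{*}$, using $\repX^{*}=\repX$ together with the swaps $\modN\leftrightarrow\modNbar$ and $\modW\leftrightarrow\modM$ listed in the remark after~\eqref{Kac-mod-id}.

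The final assembly is then routine: for arbitrary indecomposables $A,B$, if either is projective, rule~(1) reduces $A\tensor B$ to base cases; otherwise both have semisimple length at most~$2$, rule~(2) writes each as $\repX\tensor S$ with $S$ a simplest indecomposable (taking $S=\repX$ itself if a factor is semisimple), and associativity and commutativity decompose $A\tensor B$ into products $\repX_1\tensor\repX_2$ (handled by~\eqref{irred-fusion}) and $S_1\tensor S_2$ (handled by~\bref{tens-prod-decomp-2}). The main delicacy I anticipate is bookkeeping of the signs~$\pm$ and of the $r$-shift~$r+n-1$ when rules~(1) and~(2) are composed, so that the final decomposition is independent of the order in which they are applied; this is mechanical once the index ranges in the cited theorems are tracked.
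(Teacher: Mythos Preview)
Your treatment of rule~(2) is correct and matches the paper exactly: the paper merely says it ``easily follows from the classification theorem~\bref{thm:cat-decomp} and Thm.~\bref{tens-prod-decomp-1}'', and you have carried out that specialization explicitly.

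For rule~(1), your overall strategy is sound but the two supporting facts are both misargued. First, $\LUresSL2$ is \emph{not} a finite-dimensional Hopf algebra: it contains $Us\ell(2)$ and has infinitely many non-isomorphic simple modules $\XX^{\pm}_{s,r}$. The projectives $\PP^{\pm}_{s,r}$ are projective objects in the category $\catC_p$ of finite-dimensional modules, not summands of the (infinite-dimensional) regular representation, so the free-module argument does not apply. The correct substitute is the tensor--Hom adjunction $\Hom(P\tensor V,-)\cong\Hom(P,-\tensor V^{*})$, valid for any Hopf algebra and finite-dimensional $V$, which shows $P\tensor V$ is projective in $\catC_p$. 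Second, ``each indecomposable projective has a unique simple top, \emph{hence} a projective is determined by its Grothendieck class'' is a non sequitur: the conclusion requires the Cartan matrix to be injective on the Grothendieck group. That does hold here (the matrix is tridiagonal with diagonal $2$ and off-diagonal $1$, and the recurrence $a_{m+1}+2a_m+a_{m-1}=0$ with $a_0=0$ has no nonzero finitely supported solution), but you have not shown it.

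The paper avoids both issues with a cleaner argument: once $P\tensor M$ is known to be projective, the fact that projectives in $\catC_p$ are also \emph{injective} lets one split a composition filtration of $M$ after tensoring with $P$, yielding $P\tensor M\cong\bigoplus_i P\tensor S_i$ directly, with no appeal to the Cartan matrix.
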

\begin{proof}
 We note first
that the tensor product of a projective module with any indecomposable one must contain only projective modules.  The projectives obtained from this tensor product are direct
summands because any projective $\LUresSL2$-module is also injective
(the contragredient one to a projective module) and is therefore a
direct summand in any module into which it is embedded. This proves the first statement.

The second statement easily follows from the classification theorem~\bref{thm:cat-decomp} and
Thm.~\bref{tens-prod-decomp-1}. 
\end{proof}
This completes the description of the tensor structure on
$\catC_p$. Using the associativity and commutativity of the tensor
product decomposition for $\LUresSL2$, we give an exhaustive list of
tensor products in App.~\bref{app:prod_ind}.

\subsection{Generators} We give finally a set of generators in the
tensor category $\lc_p$:
\begin{equation*}
\XX^{\pm}_{1,1},\quad \XX^{+}_{1,2},\quad \XX^{+}_{2,1},\quad
\modN^{+}_{1,1}(1), \quad \modNbar^+_{1,1}(1).
\end{equation*}
That these objects generate the tensor category $\lc_p$ by successive
application of the tensor product follows easily from the previous three theorems given above.

\medskip

Since the construction of the tensor category $\catC_p$ is complete, we can
confirm a conclusion obtained in~\cite{[BFGT]} that the full
subcategory
\begin{equation*}
\catCpl_p=\bigoplus_{s=1}^{p-1}\catCpl(s)
      \oplus \bigoplus_{\text{odd}\, r\geq 1}\catSpl(r)
 \oplus \bigoplus_{\text{even}\, r\geq 2}\catSmin(r)
\end{equation*}
in the category $\catC_p=\catCpl_p\oplus\catCmin_p$ is closed under
the tensor product operation. All the base tensor products in $\catCpl_p$ are
collected in Thm.~\bref{thm:tens-prod-intro} where we use the notation
  $\XX_{s,r}\equiv\XX^{\alpha(r)}_{s,r}$, with the sign $\alpha(r)=(-1)^{r-1}$, and
  similar notations for all indecomposable modules.


\section{Conclusion}\label{sec:concl}
We have established, in~\bref{sec:results}, that each indecomposable $\LUresSL2$-module from
the category $\catC^+_p$ has an indecomposable counterpart in the
logarithmic $\LM(1,p)$ model, for any integer $p\geq 2$, which can be
described as a category $\mathcal{D}_p$ of representations of the
vertex operator algebra $\Vir$ corresponding to a quotient of the
universal enveloping of the Virasoro algebra with the central charge
$c_{1,p}=13-6/p-6p$.  This means that there exists a functor between
the two categories $\mathcal{F}:\catC^+_p\to\mathcal{D}_p$.  Moreover,
by direct comparison with~\cite{JR}, we see that $\mathcal{F}$ is a
tensor functor. This remarkable result allows us to conjecture that
$\catC^+_p$ and $\mathcal{D}_p$ are equivalent as tensor categories.
A possible way to prove the conjecture is to be reduced to a check
that the category $\mathcal{D}_p$ contains no more indecomposable
objects than $\catC^+_p$. This can be prooved by explicit comparison
of the $\EXT$ algebras for $\catC^+_p$ and $\mathcal{D}_p$.

We give finally several comments on relations between extension groups
for both tensor categories.
For each subcategory $\lc^+(s)$ with $1\leq s\leq p-1$, the basic fact
is that the space $\Extn$ of $n$-extensions between the irreducible
modules is at most one-dimensional (see~\bref{lem:extn-irrep}). We choose
 bases $\{x^+_{r,i}\}$ and $\{x^-_{r,i}\}$, with $i\in\{0,1\}$ and $r\geq1$,  in
the respective spaces
$\oC=\Ext(\repX^{+}_{s,2r-1},$\linebreak[0] $\repX^{-}_{p-s,2(r-i)})$ and
$\oC=\Ext(\repX^{-}_{p-s,2r},\repX^{+}_{s,2(r-i)+1})$, where we set
$x^+_{1,1}\equiv 0$. Next, the vector
space
\begin{equation*}
  \EXT_s=\bigoplus_{n\geq0}\,\bigoplus_{r,r'\geq1}
\Extn(\repX^{+}_{s,2r-1}
  \oplus\repX^{-}_{p-s,2r},
  \repX^{+}_{s,2r'-1}\oplus\repX^{-}_{p-s,2r'})
\end{equation*}
is an associative algebra with respect to the Yoneda product. 
We propose the algebraic structure of~$\EXT_s$.
\begin{Conj}\label{ext-alg}
  The algebra $\EXT_s$ is generated by $x^{\pm}_{r,i}$ with the
  defining relations
  \begin{gather*}
    x^+_{r,i} x^+_{r',j}=x^-_{r,i}x^-_{r',j}
    = x^-_{1,1}x^+_{1,0}=0 \\
    x^+_{r,0}x^-_{r,1}+x^+_{r+1,1}x^-_{r,0} = 0,\quad
    x^-_{r,0}x^+_{r+1,1}+x^-_{r+1,1}x^+_{r+1,0}=0,
  \end{gather*}
where $i,j\in\{0,1\}$  and $r,r'\geq1$.
\end{Conj}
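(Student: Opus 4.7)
The strategy is to translate the conjecture into an explicit computation with projective resolutions, then compare dimensions degree-by-degree. The category $\lc^+(s)$ has only two isomorphism classes of simples up to the $\oZ$-grading by the $r$-index, and Thm.~\bref{lem:extn-irrep} tells us that every $\Extn$-space between simples is at most one-dimensional. Hence the graded components of $\EXT_s$ of bidegree $(n,r,r')$ are each either zero or one-dimensional, and the problem reduces to: (i) identify a set of generators whose Yoneda products exhaust the nonzero components; (ii) verify the claimed quadratic relations; (iii) show no further relations are needed.

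For step~(i), I would start from the minimal projective resolutions of $\repX^{+}_{s,2r-1}$ and $\repX^{-}_{p-s,2r}$ constructed in App.~\bref{sec:resolutions}. By~\bref{lemma:exts} the first syzygy of $\repX^{+}_{s,2r-1}$ has top $\repX^{-}_{p-s,2r-2}\oplus\repX^{-}_{p-s,2r}$, so the generators $x^{+}_{r,0}$ and $x^{+}_{r,1}$ of $\Ext(\repX^{+}_{s,2r-1},\repX^{-}_{p-s,2(r-i)})$ can be realised as the two natural projections from this syzygy; an identical picture gives $x^{-}_{r,i}$. The lifts of these cocycles to chain maps between the projective resolutions are essentially determined (up to scalar) by~\bref{lemma:exts}, because each space of chain maps in a given homological degree is controlled by the $\Hom$'s of the corresponding Loewy layers. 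This explicit description makes the Yoneda products $x^{\epsilon_n}_{r_n,i_n}\cdots x^{\epsilon_1}_{r_1,i_1}$ computable in principle.

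For step~(ii) I would verify the listed relations one by one. The relations $x^{+}_{r,i}x^{+}_{r',j}=0$ and $x^{-}_{r,i}x^{-}_{r',j}=0$ are forced by Thm.~\bref{lem:extn-irrep}: a product of two $x^{+}$-classes lives in an $\Ext^{2}(\repX^{+}_{s,\cdot},\repX^{+}_{s,\cdot})$ whose $r$-parity mismatch puts it outside the support described in~\bref{lem:extn-irrep}, and similarly for $x^{-}x^{-}$, and for the exceptional $x^{-}_{1,1}x^{+}_{1,0}$ the target index $r'=0$ is out of range. The two nontrivial linear relations among $x^{+}_{r,0}x^{-}_{r,1}$, $x^{+}_{r+1,1}x^{-}_{r,0}$, etc., are the content of the conjecture: both of these products land in the same one-dimensional space $\Ext^{2}(\repX^{-}_{p-s,2r},\repX^{-}_{p-s,2r})$, so they are proportional. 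Computing the proportionality constant amounts to lifting the two length-two extensions to chain maps and comparing signs on a single matrix coefficient of the syzygy differential coming from the structure of $\PP^{-}_{p-s,2r}$ described in~\bref{schem-proj}. This is the technical heart of the proof: each relation corresponds to a short exact sequence expressing that a particular length-two Yoneda composite trivialises against the projective cover whose socle and top coincide with the simple module in question.

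For step~(iii), the presented algebra $A$ on the generators $x^{\pm}_{r,i}$ modulo the listed relations has a tractable normal form: every nonzero word is an alternating product $x^{\epsilon_n}\cdots x^{\epsilon_1}$ with the superscripts forced to alternate (by the $x^{+}x^{+}=x^{-}x^{-}=0$ relations) and the lower indices matching consecutively. Using the two linear relations one obtains a canonical form in which at each stage the choice of $i\in\{0,1\}$ is fixed up to a global sign. A direct count then produces at most one nonzero element of $A$ in each degree $(n,r,r')$ with the same support as in~\bref{lem:extn-irrep}. Together with the surjection $A\twoheadrightarrow\EXT_s$ from step~(i)-(ii) and the fact that the latter is one-dimensional in precisely these degrees, the map must be an isomorphism. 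The main obstacle I anticipate is step~(ii): fixing the signs and scalars in the two linear relations requires an honest calculation of chain-map lifts through the resolutions of App.~\bref{sec:resolutions}, and any missing sign would spoil the comparison with the $\Vir$-side, where the analogous $\EXT$-algebra would have to be computed independently (for instance from BGG-type resolutions of Feigin--Fuchs modules) to close the loop with Conjecture~\bref{conj-main}.
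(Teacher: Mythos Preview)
The paper does not prove this statement: it is explicitly labeled a \emph{Conjecture} and appears in the conclusion as a proposal, with the remark that computing the $\EXT$-algebra on the $\Vir$-side ``is a very important problem, which is waiting for its solution.'' So there is no proof in the paper to compare your attempt against; anything you produce here goes beyond what the authors claim.

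As for your outline itself, the architecture is reasonable but there is a genuine gap in step~(iii). You assert a surjection $A\twoheadrightarrow\EXT_s$ ``from step~(i)--(ii)'', but steps (i)--(ii) only establish that there is a well-defined algebra map from the presented algebra $A$ to $\EXT_s$; surjectivity is precisely the statement that $\EXT_s$ is generated in degree~$1$, and nothing you have written forces that. Concretely, you must show that for every nonzero one-dimensional $\Extn$-space identified in Thm.~\bref{lem:extn-irrep} there is an alternating word $x^{\epsilon_n}\cdots x^{\epsilon_1}$ whose Yoneda composite is nonzero. Knowing each $\Extn$ is at most one-dimensional does not help here: the issue is showing your specific product hits it rather than vanishing. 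This requires either an explicit chain-map computation along the resolutions of App.~\bref{sec:resolutions} in every degree (not just degree~$2$), or a structural argument (e.g.\ a Koszulity-type statement for the category $\lc^+(s)$) that you have not supplied. Your step~(ii) also leaves the sign computation as ``the technical heart'' without carrying it out; since the entire content of the two nontrivial relations is that the coefficient is $+1$ rather than some other nonzero scalar, this cannot be skipped.
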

Let $\EXT=\bigoplus_{s=1}^{p-1}\EXT_s$.
We note then that the derived category of representations of the algebra $\EXT$ is
equivalent to the derived category of $\catC^+_p$ 
and conjecturally to the derived category of $\mathcal{D}_p$.
An explicit calculation of the algebra of $\EXT$'s for the category $\mathcal{D}_p$
is a very important problem, which is waiting for its solution.

\medskip

\subsubsection*{Acknowledgments}
We are grateful to B.L.~Feigin, J.~Rasmussen, H.~Saleur,
A.M.~Semikhatov for valuable and stimulating discussions.  The work
of AMG was supported in part by the RFBR grant 10-01-00408, the
RFBR--CNRS grant 09-01-93105, and by the
``Landau'', ``Dynasty'' and ``Science Support'' foundations. The work
of PVB was supported in part by the RFBR grant 10-01-00408, the
RFBR--CNRS grant 09-01-93105 and ``Dynasty'' foundation.
IYuT is grateful to H.~Saleur for
kind hospitality in IPhT where a part of the work was made. IYuT was
supported in part by the
RFBR-CNRS grant 09-02-93106 and the RFBR Grant 08-02-01118.

\appendix

\section{Feigin-Fuchs modules}\label{feigin_fuchs}
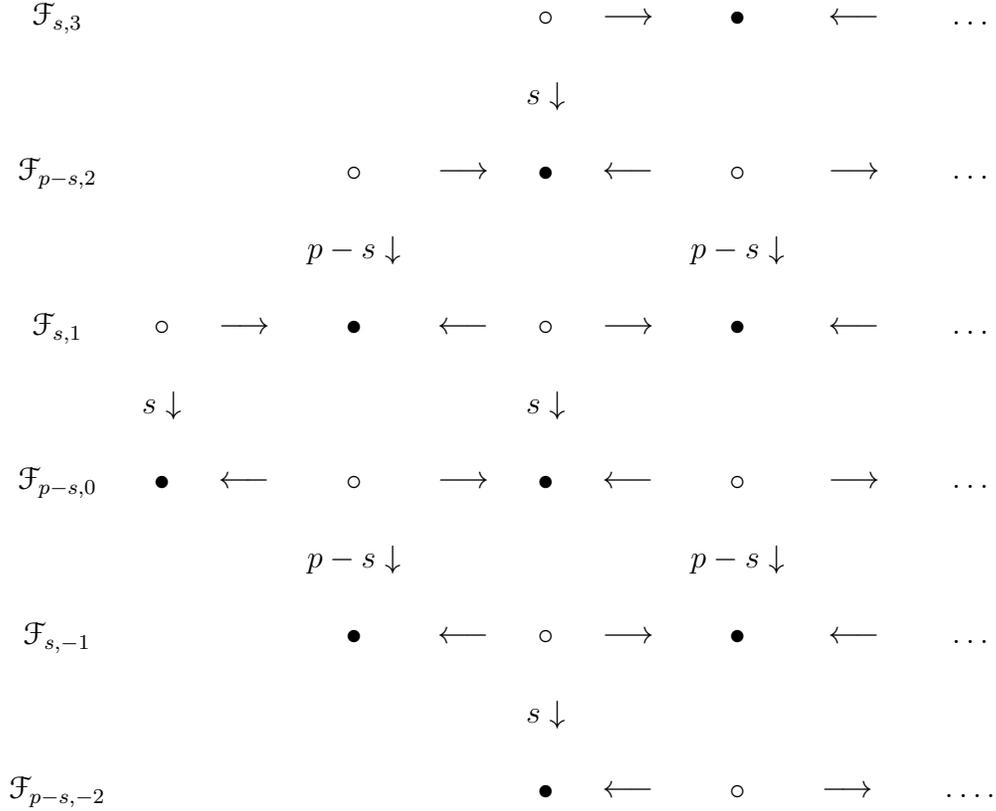
\begin{figure}
\begin{equation*}
  \xymatrix@C=6pt@R=10pt
  {
     \repF_{s,3}&&&&&\circ &\longrightarrow &\bullet &\longleftarrow \qquad \dots\\
     &&&&& s \downarrow &&&\\
     \repF_{p-s,2}&&&\circ &\longrightarrow &\bullet &\longleftarrow &\circ &\longrightarrow \qquad \dots\\
     &&& p-s \downarrow &&&& p-s \downarrow \\
     \repF_{s,1}&\circ &\longrightarrow &\bullet &\longleftarrow &\circ &\longrightarrow &\bullet &\longleftarrow \qquad \dots\\
     &s \downarrow &&&& s \downarrow &\\
     \repF_{p-s,0}&\bullet &\longleftarrow &\circ &\longrightarrow &\bullet &\longleftarrow &\circ  &\longrightarrow \qquad \dots\\
     &&& p-s \downarrow &&&& p-s \downarrow \\
     \repF_{s,-1}&&&\bullet &\longleftarrow &\circ &\longrightarrow &\bullet &\longleftarrow \qquad \dots\\
     &&&&& s \downarrow &&&\\
     \repF_{p-s,-2}&&&&&\bullet &\longleftarrow &\circ &\longrightarrow \qquad \dots.
  }
\end{equation*}
\caption{Felder complex of Feigin--Fuchs modules for a fixed $s$, 
where a down arrow like `$s\!\! \downarrow $' indicates the action of  $s^{\text{th}}$ power of the screening $F
=\oint\rme^{\alpha_-\varphi(z)}dz$ that defines a homomorphism from an upper
module with the second index being $n$ to the lower one with $n-1$. The $s$-morphisms and $(p-s)$-morphisms
are alternate. Filled dots $\bullet$ correspond to irreducible submodules -- they constitute the kernel 
of such homomorphisms. The  conformal
dimension of $\repF_{s,n}$ is $\Delta_{n,p-s} = \Delta_{1-n,s}$.}\label{felder-complex}
\end{figure} 
Here, we remind few simple facts about the well-known Feigin--Fuchs
modules which can be found in~\cite{FF}. 
 The Feigin--Fuchs module
$\repF_{s,n}$ over the Virasoro algebra $\Vir$ is the space generated
by all polynomials $P(\partial \phi)$ from the vertex-operator
$e^{(\frac{1-s}{2}\alpha_- + \frac{n}{2}\alpha_+)\phi}$, where $1 \le
s \le p$, $n\in \oZ$, and $\alpha_+=\sqrt{2p}$ and
$\alpha_-=-\sqrt{2/p}$. 
The (lowest) conformal
dimension in $\repF_{s,n}$ is $\Delta_{n,p-s} = \Delta_{1-n,s}$, where
$\Delta_{r,s}$ is defined in~\eqref{Delta}.
 When $s \ne p$, the $\repF_{s,n}$ has a
chain-type subquotient structure of one of the following patterns:
\begin{equation}
\circ \rightarrow  \bullet \leftarrow  \circ \rightarrow  \bullet \leftarrow  \dots
\end{equation}
or
\begin{equation}
\bullet \leftarrow  \circ \rightarrow  \bullet \leftarrow \circ \rightarrow \dots
\end{equation}
where filled and open dots $\bullet$ and $\circ$ correspond to
 irreducible submodules/subquotients.

For a fixed value of $s$, the Feigin--Fuchs modules form a Felder
complex~\cite{[F]} (see also~\cite{[FHST]}) shown in Fig.~\ref{felder-complex}.


\section{Projective $\LUresSL2$-modules}\label{app:proj-mod-base}
Here, we explicitly describe the $\LUresSL2$ action in the projective
module $\PP^{\pm}_{s,r}$. Let $s$ be an integer $1\leq s\leq p-1$ and
$r\in\oN$.  

For $r > 1$, the projective module $\PP^{\pm}_{s,r}$ has the basis
\begin{equation}\label{left-proj-basis-plus}
  \{\toppr_{n,m},\botpr_{n,m}\}_{\substack{0\le n\le s-1\\0\le m\le r-1}}
  \cup\{\leftpr_{k,l}\}_{\substack{0\le k\le p-s-1\\0\le l\le
  r-2}}
  \cup\{\rightpr_{k,l}\}_{\substack{0\le k\le p-s-1\\0\le l\le
  r}},
\end{equation}
where $\{\toppr_{n,m}\}_{\substack{0\le n\le s-1\\0\le m\le
    r-1}}$ is the basis
corresponding to the top module in~\eqref{schem-proj},\\
$\{\botpr_{n,m}\}_{\substack{0\le n\le s-1\\0\le m\le r-1}}$
to the bottom, $\{\leftpr_{k,l}\}_{\substack{0\le k\le
    p-s-1\\0\le l\le r-2}}$ to the left, and
$\{\rightpr_k\}_{\substack{0\le k\le p-s-1\\0\le l\le r}}$ to
the right module. 

For $r=1$, the basis does not contain
$\{\leftpr_{k,l}\}_{\substack{0\le k\le p-s-1\\0\le l\le
r-2}}$ terms and we imply $\leftpr_{k,l}~\equiv~0$ in the
action.  The $\LUresSL2$-action on $\PP^{\pm}_{s,r}$ is given by
\begin{align*}
  K\toppr_{n,m}&=\pm\q^{s-1-2n}\toppr_{n,m}, \quad 0\le n\le s-1,\quad 0\le m\le r-1,\\
  K\leftpr_{k,m}&=\mp\q^{p-s-1-2k}\leftpr_{k,m}, \quad 0\le k\le p-s-1,\quad 0\le m\le r-2,\\
  K\rightpr_{k,m}&=\mp\q^{p-s-1-2k}\rightpr_{k,m}, \quad 0\le k\le p-s-1,\quad 0\le m\le r,\\
  K\botpr_{n,m}&=\pm\q^{s-1-2n}\botpr_{n,m}, \quad 0\le n\le s-1,\quad 0\le m\le r-1,\\
  E\toppr_{n,m}&=
  \begin{cases}
    \pm[n][s-n]\toppr_{n-1,m}\pm g\botpr_{n-1,m}, &1\le n\le s-1,\\
    \pm g\frac{r-m}{r}\rightpr_{p-s-1,m}\pm g\frac{m}{r}\leftpr_{p-s-1,m-1}, & n=0,\\
  \end{cases}
  \quad 0\le m\le r-1,\\
  E\leftpr_{k,m}&=
  \begin{cases}
	\mp[k][p-s-k]\leftpr_{k-1,m}, &1\le k\le p-s-1, \\
    \pm g(m-r+1)\botpr_{s-1,m}, & k=0,\\
  \end{cases}
  \quad 0\le m\le r-2,\\
  E\rightpr_{k,m}&=
  \begin{cases}
    \mp[k][p-s-k]\rightpr_{k-1,m}, &1\le k\le p-s-1,\\
    \pm gm\botpr_{s-1,m-1}, & k=0,\\
  \end{cases}
  \quad 0\le m\le r,\\
  E\botpr_{n,m}&= \pm[n][s-n]\botpr_{n-1,m}, \quad 1\le n\le
  s-1, \quad 0\le m\le r-1 \quad (\botpr_{-1,m}\equiv 0),\\
  F\toppr_{n,m}&=
  \begin{cases}
    \toppr_{n+1,m}, &0\le n\le s-2,\\
    \frac{1}{r}\rightpr_{0,m+1}-\frac{1}{r}\leftpr_{0,m}, & n=s-1 
    \quad(\leftpr_{0,r-1}\equiv0),\\
  \end{cases}
   \quad 0\le m\le r-1,\\
  F\leftpr_{k,m}&=
  \begin{cases}
    \leftpr_{k+1,m}, &0\le k\le p-s-2,\\
    \botpr_{0,m+1}, & k=p-s-1,\\
  \end{cases}
  \quad 0\le m\le r-2,\\
  F\rightpr_{k,m}&=
  \begin{cases}
    \rightpr_{k+1,m}, & 0\le k\le p-s-2,\\
    \botpr_{0,m}, & k=p-s-1\quad (\botpr_{0,r}\equiv0),\\
  \end{cases}
  \quad 0\le m\le r,\\
  F\botpr_{n,m}&= \botpr_{n+1,m}, \quad 1\le n\le s-1,
  \quad 0\le m\le r-1 \quad (\botpr_{s,m}\equiv 0).
\end{align*}
where $g=\frac{(-1)^{p}[s]}{[p-1]!}$.
In thus introduced basis, the $s\ell(2)$-generators $e$, $f$ and $h$
act in $\PP^{\pm}_{s,r}$ as in the direct sum
$\XX^{\pm}_{s,r}\oplus\XX^{\mp}_{p-s,r-1}\oplus\XX^{\mp}_{p-s,r+1}
\oplus \XX^{\pm}_{s,r}$
(see~\eqref{basis-lusz-irrep-1}-\eqref{basis-lusz-irrep-3}), where for
$r=1$ we set $\XX^{\mp}_{p-s,0}\equiv 0$.

\section{Indecomposable $\LUresSL2$-modules: examples}\label{app:indecomp-mod-base}
Here, we explicitly describe the $\LUresSL2$ action in the
modules $\modM^{\pm}_{s,r}(1)$, $\modW^{\pm}_{s,r}(1)$,
$\modN^{\pm}_{s,r}(1)$, $\modNbar^{\pm}_{s,r}(1)$, and $\modW^{\pm}_{s,r}(2)$.

\subsection{Modules with $n=1$}\label{app:modW-1-base}
 Let $s$ be an integer $1\leq s\leq p-1$ and
$r\in\oN$.  
We note that there are
subquotients $\modM^{\mp}_{p-s,r-1}(1)$ and $\modNbar^{\pm}_{s,r}(1)$
and submodules $\modW^{\mp}_{p-s,r-1}(1)$ and $\modN^{\pm}_{s,r}(1)$
in the $\PP^{\pm}_{s,r}$ module. We thus use the formulas
in App.~\bref{app:proj-mod-base} to describe these modules:
\begin{itemize}
\item the module $\modM^{\mp}_{p-s,r-1}(1)$ has the basis
\begin{equation}\label{M-one-basis}
  \{\toppr_{n,m}\}_{\substack{0\le n\le s-1\\0\le m\le r-1}}
  \cup\{\leftpr_{k,l}\}_{\substack{0\le k\le p-s-1\\0\le l\le
  r-2}}
  \cup\{\rightpr_{k,l}\}_{\substack{0\le k\le p-s-1\\0\le l\le
  r}},
\end{equation}

\item the module $\modW^{\mp}_{p-s,r-1}(1)$ has the basis
\begin{equation}\label{W-one-basis}
  \{\botpr_{n,m}\}_{\substack{0\le n\le s-1\\0\le m\le r-1}}
  \cup\{\leftpr_{k,l}\}_{\substack{0\le k\le p-s-1\\0\le l\le
  r-2}}
  \cup\{\rightpr_{k,l}\}_{\substack{0\le k\le p-s-1\\0\le l\le
  r}},
\end{equation}

\item the module $\modN^{\pm}_{s,r}(1)$ has the basis
\begin{equation}\label{N-one-basis}
  \{\botpr_{n,m}\}_{\substack{0\le n\le s-1\\0\le m\le r-1}}
  \cup\{\rightpr_{k,l}\}_{\substack{0\le k\le p-s-1\\0\le l\le
  r}},
\end{equation}

\item the module $\modNbar^{\pm}_{s,r}(1)$ has the basis
\begin{equation}\label{Nbar-one-basis}
  \{\toppr_{n,m}\}_{\substack{0\le n\le s-1\\0\le m\le r-1}}
  \cup\{\rightpr_{k,l}\}_{\substack{0\le k\le p-s-1\\0\le l\le
  r}},
\end{equation}
\end{itemize}
and the algebra action on these modules coincides with the action on
the space $\PP^{\pm}_{s,r}$ with uninvolved basis vectors set
identically to zero. The action on $\PP^{\pm}_{s,r}$ is given above
after~\eqref{left-proj-basis-plus}.

\subsection{A module with $n=2$}\label{app:modW-base}
The indecomposable module $\modW^{\pm}_{s,r}(2)$ has the basis
\begin{equation}\label{modW-two-basis}
  \{\botpr^1_{k,l}\}_{\substack{0\le k\le p-s-1\\0\le l\le r}}
  \cup\{\botpr^2_{k,l}\}_{\substack{0\le k\le p-s-1\\0\le l\le
  r+2}}
  \cup\{\leftpr_{k,l}\}_{\substack{0\le k\le s-1\\0\le l\le
  r-1}}
  \cup\{\midpr_{k,l}\}_{\substack{0\le k\le s-1\\0\le l\le
  r+1}}
  \cup\{\rightpr_{k,l}\}_{\substack{0\le k\le s-1\\0\le l\le
  r+3}}.
\end{equation}

The $\LUresSL2$-action on $\modW^{\pm}_{s,r}(2)$ is given by
\begin{align*}
  K\midpr_{k,l}&=\pm\q^{s-1-2k}\toppr_{k,l}, \quad 0\le k\le s-1,\quad 0\le l\le r+1,\\
  K\leftpr_{k,l}&=\pm\q^{s-1-2k}\leftpr_{k,l}, \quad 0\le k\le s-1,\quad 0\le l\le r-1,\\
  K\rightpr_{k,l}&=\pm\q^{s-1-2k}\rightpr_{k,l}, \quad 0\le k\le s-1,\quad 0\le l\le r+3,\\
  K\botpr^1_{k,l}&=\mp\q^{p-s-1-2k}\botpr^1_{k,l}, \quad 0\le k\le p-s-1,\quad 0\le l\le r,\\
  K\botpr^2_{k,l}&=\mp\q^{p-s-1-2k}\botpr^2_{k,l}, \quad 0\le k\le p-s-1,\quad 0\le l\le r+2,\\
  E\midpr_{k,l}&=
  \begin{cases}
    \pm[k][s-k]\midpr_{k-1,l}, &1\le k\le s-1,\\
    \mp g(l-r-2)\botpr^1_{p-s-1,l-1} \mp gl\botpr^2_{p-s-1,l}, & k=0,\\
  \end{cases}
  \quad 0\le l\le r+1,\\
  E\leftpr_{k,l}&=
  \begin{cases}
	\pm[k][s-k]\leftpr_{k-1,l}, &1\le k\le s-1, \\
    \mp g(l-r)\botpr^1_{p-s-1,l}, & k=0,\\
  \end{cases}
  \quad 0\le l\le r-1,\\
  E\rightpr_{k,l}&=
  \begin{cases}
    \pm[k][s-k]\rightpr_{k-1,l}, &1\le k\le s-1,\\
    \mp gl\botpr^2_{p-s-1,l-1}, & k=0,\\
  \end{cases}
  \quad 0\le l\le r+3,\\
  E\botpr^1_{k,l}&= \mp[k][p-s-k]\botpr^1_{k-1,l}, \quad 0\le k\le
  p-s-1, \quad 0\le l\le r \quad (\botpr^1_{-1,l}\equiv 0),\\
  E\botpr^2_{k,l}&= \mp[k][p-s-k]\botpr^2_{k-1,l}, \quad 0\le k\le
  p-s-1, \quad 0\le l\le r+2 \quad (\botpr^2_{-1,l}\equiv 0),\\
  F\midpr_{k,l}&=
  \begin{cases}
    \midpr_{k+1,l}, &0\le k\le s-2,\\
    \botpr^1_{0,l}+\botpr^2_{0,l+1}, & k=s-1\quad (\botpr^1_{0,r+1}\equiv0),\\
  \end{cases}
   \quad 0\le l\le r+1,\\
  F\leftpr_{k,l}&=
  \begin{cases}
    \leftpr_{k+1,l}, &0\le k\le s-2,\\
    \botpr^1_{0,l+1}, & k=s-1,\\
  \end{cases}
  \quad 0\le l\le r-1,\\
  F\rightpr_{k,l}&=
  \begin{cases}
    \rightpr_{k+1,l}, & 0\le k\le s-2,\\
    \botpr^2_{0,l}, & k=s-1\quad (\botpr^2_{0,r+3}\equiv0),\\
  \end{cases}
  \quad 0\le l\le r+3,\\
  F\botpr^1_{k,l}&= \botpr^1_{k+1,l}, \quad 0\le k\le p-s-1,
  \quad 0\le l\le r \quad (\botpr^1_{p-s,l}\equiv 0),\\
  F\botpr^2_{k,l}&= \botpr^2_{k+1,l}, \quad 0\le k\le p-s-1,
  \quad 0\le l\le r+2 \quad (\botpr^2_{p-s,l}\equiv 0).
\end{align*}
where $g=\frac{(-1)^{p}[s]}{[p-1]!}$.
In thus introduced basis, the $s\ell(2)$-generators $e$, $f$ and $h$
act in $\modW^{\pm}_{s,r}(2)$ as in the direct sum
$\XX^{\pm}_{s,r}\oplus \XX^{\pm}_{s,r+2}\oplus \XX^{\pm}_{s,r+4}
\oplus\XX^{\mp}_{p-s,r+1}\oplus\XX^{\mp}_{p-s,r+3}$,
see~\eqref{basis-lusz-irrep-1}-\eqref{basis-lusz-irrep-3}.

\section{Projective resolution and higher extension groups}\label{sec:resolutions}
Here, we construct projective
resolutions for
irreducible modules. These resolutions involve the modules
$\modW^{\pm}_{s,r}$ and $\modN^{\pm}_{s,r}$
introduced in~\bref{sec:M-W}.
 Inspection based on the definition of the projective covers in~\bref{proj-mod} with their subquotient structure~\eqref{schem-proj}, and the definition of indecomposable modules from~\bref{sec:M-W} shows that the mappings
defined in the following lemma give rise to a projective resolution.
\begin{Prop}\label{lem:proj-res-irrep}
  For each $1\,{\leq}\, s \,{\leq}\,p{-}1$ and $\alpha\,{=}\,\pm$, \\
  \mbox{}$\bullet$\; the module $\repX^{\alpha}_{s,1}$
  has the projective resolution
  \begin{equation}\label{proj-res-irrep-triv}
    \ldots\xrightarrow{\partial_4}\modPr^{-\alpha}_{p-s,4}\xrightarrow{\partial_3}\modPr^{\alpha}_{s,3}
    \xrightarrow{\partial_2}
    \modPr^{-\alpha}_{p-s,2}\xrightarrow{\partial_1}\modPr^{\alpha}_{s,1}
    \stackrel{\partial_0}{\surjection}\repX^{\alpha}_{s,1}
  \end{equation}
  where for even $n$ the boundary morphism is given by the throughout mapping
  \begin{equation*}
    \partial_n:\modPr^{\alpha}_{s,n+1}
    \surjection\modN^{-\alpha}_{p-s,n}(1)\embedding
   \modPr^{-\alpha}_{p-s,n}\,,
  \end{equation*}
   and for odd $n$, the $n$th term and the boundary morphism
  $\partial_n$ are given by changing $\alpha$ to $-\alpha$ and $s$ to $p{-}s$;\\
  \mbox{}$\bullet$\; for even $r\geq2$ the module $\repX^{\alpha}_{s,r}$
  has the projective resolution
  \begin{multline}\label{proj-res-irrep}
  \ldots\xrightarrow{\partial_{r+2}}
  \modPr^{-\alpha}_{p-s,3}\oplus\modPr^{-\alpha}_{p-s,5}\oplus\dots\oplus
    \modPr^{-\alpha}_{p-s,2r+1}\xrightarrow{\partial_{r+1}}
    \modPr^{\alpha}_{s,2}\oplus\dots\oplus\modPr^{\alpha}_{s,2r}
    \xrightarrow{\boldsymbol{\partial_r}}\\
    \xrightarrow{\boldsymbol{\partial_r}}\modPr^{-\alpha}_{p-s,1}\oplus\modPr^{-\alpha}_{p-s,3}\oplus\dots\oplus
    \modPr^{-\alpha}_{p-s,2r-1}\xrightarrow{\partial_{r-1}}\modPr^{\alpha}_{s,2}
    \oplus\modPr^{\alpha}_{s,4}\oplus\dots\oplus\modPr^{\alpha}_{s,2r-2}
	\xrightarrow{\partial_{r-2}}\\
    \xrightarrow{\partial_{r-2}}\ldots\xrightarrow{\partial_3}\modPr^{\alpha}_{s,r-2}
    \oplus\modPr^{\alpha}_{s,r}\oplus\modPr^{\alpha}_{s,r+2}
    \xrightarrow{\partial_2}\modPr^{-\alpha}_{p-s,r-1}\oplus
    \modPr^{-\alpha}_{p-s,r+1}\xrightarrow{\partial_1}\modPr^{\alpha}_{s,r}
    \stackrel{\partial_0}{\surjection}\repX^{\alpha}_{s,r}
      \end{multline}
      which consists of two parts separated by the homomorphism $\boldsymbol{\partial_{r}}$:
      \begin{itemize}
  \item[$\boldsymbol{*}$] on the right from $\boldsymbol{\partial_{r}}$, i.e. $n<r$, the $n$th term with even $n$ is given by
  \begin{equation*}
    \xrightarrow{\partial_{n+1}}
    \underbrace{\modPr^{\alpha}_{s,r-n}\oplus\modPr^{\alpha}_{s,r-n+2}\oplus\dots\oplus
      \modPr^{\alpha}_{s,r+n}}_{n+1}\xrightarrow{\partial_{n}}
  \end{equation*}
  with the boundary morphism given by the throughout mapping in
  \begin{equation*}
    \partial_n:\underbrace{\modPr^{\alpha}_{s,r-n}
\oplus\dots\oplus
      \modPr^{\alpha}_{s,r+n}}_{n+1}
    \surjection\modW^{\alpha}_{s,r-n}(n)\embedding
    \underbrace{\modPr^{-\alpha}_{p-s,r-n+1}
      \oplus\modPr^{-\alpha}_{p-s,r-n+3}
      \oplus\dots\oplus\modPr^{-\alpha}_{p-s,r+n-1}}_{n},
  \end{equation*}
  and for odd $n$, the $n$th term and the boundary morphism
  $\partial_n$ are given by changing $\alpha$ to $-\alpha$ and $s$ to $p{-}s$; 

  \item[$\boldsymbol{*}$] on the left from $\boldsymbol{\partial_{r}}$ in~\eqref{proj-res-irrep}, i.e. $n\geq r$, the $(r+k)$th term with even $k\geq0$ is
    \begin{equation*}
    \xrightarrow{\partial_{r+k+1}}
    \underbrace{\modPr^{\alpha}_{s,k+2}\oplus\modPr^{\alpha}_{s,k+4}\oplus\dots\oplus
      \modPr^{\alpha}_{s,k+2r}}_{r}\xrightarrow{\partial_{r+k}}
  \end{equation*}
  with the boundary morphism given by the throughout mapping in
  \begin{equation*}
    \partial_{r+k}:\underbrace{\modPr^{\alpha}_{s,k+2}
\oplus\dots\oplus
      \modPr^{\alpha}_{s,k+2r}}_{r}
    \surjection\modN^{-\alpha}_{p-s,k+1}(r)\embedding
    \underbrace{\modPr^{-\alpha}_{p-s,k+1}
      \oplus\modPr^{-\alpha}_{p-s,k+3}
      \oplus\dots\oplus\modPr^{-\alpha}_{p-s,2r+k-1}}_{r},
  \end{equation*}
  and for odd $k$, the $(r+k)$th term and the boundary morphism
  $\partial_{r+k}$ are given by changing $\alpha$ to $-\alpha$ and $s$ to $p{-}s$; 
  \end{itemize}
    \mbox{}$\bullet$\; for odd $r\geq3$ the module $\repX^{\alpha}_{s,r}$
  has the projective resolution as in~\eqref{proj-res-irrep} with the substitution  $\alpha\to-\alpha$ and $s\to p{-}s$ in all terms and morphisms on the left from $\partial_{r-2}$.
\end{Prop}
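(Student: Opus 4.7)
The plan is to construct the resolution inductively by computing successive syzygies (kernels of the boundary maps) and identifying each as a specific indecomposable of type $\modW$ or $\modN$ from the classification Thm.~\bref{thm:cat-decomp}. The core principle is elementary: once the $n$th syzygy $Z_n$ is pinned down, the $(n{+}1)$st term of the resolution is the projective cover of $Z_n$, i.e., the direct sum of projective covers of the top subquotients of $Z_n$, and the boundary map $\partial_{n+1}$ factors as the standard cover-then-include composition through $Z_n$. This gives both the shape of the next term and the announced factorisation through $\modW$ or $\modN$.

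First I would set $\partial_0$ to be the projective cover $\modPr^{\alpha}_{s,r}\surjection \repX^{\alpha}_{s,r}$, whose kernel is the radical $Z_0 = J(\modPr^{\alpha}_{s,r})$. From~\eqref{schem-proj}, $Z_0$ is a length-$2$ module with socle $\repX^{\alpha}_{s,r}$ and top $\repX^{-\alpha}_{p-s,2}$ (when $r=1$) or $\repX^{-\alpha}_{p-s,r-1}\oplus\repX^{-\alpha}_{p-s,r+1}$ (when $r\geq 2$). Combining Cor.~\bref{lemma:exts} (one-dimensionality of first extensions) with Thm.~\bref{thm:cat-decomp} (uniqueness of indecomposables with prescribed subquotient data), one identifies $Z_0\cong \modN^{\alpha}_{s,1}(1)$ in the first case and $Z_0\cong \modW^{-\alpha}_{p-s,r-1}(1)$ in the second. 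The projective covers of these modules are $\modPr^{-\alpha}_{p-s,2}$ and $\modPr^{-\alpha}_{p-s,r-1}\oplus\modPr^{-\alpha}_{p-s,r+1}$, which are precisely the next terms appearing in~\eqref{proj-res-irrep-triv} and~\eqref{proj-res-irrep}, and the factorisation through $\modN$ or $\modW$ is by construction.

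The inductive step asserts that in the even-$r\geq 2$ case the $n$th syzygy for $n<r$ is of $\modW$-type with parameters $(s,r-n,n)$ for even $n$ and $(p-s,r-n,n)$ for odd $n$, with signs alternating correspondingly; at $n=r$ and beyond the syzygies become $\modN$-type with fixed ``width'' $r$ and the same sign alternation. The $r=1$ case similarly produces an alternating chain of $\modN^{\alpha}_{s,1}(1)$- and $\modN^{-\alpha}_{p-s,1}(1)$-type syzygies. Each inductive step reduces to computing the kernel of a projective cover of the form $\bigoplus\modPr\surjection \modW(n)$ or $\bigoplus\modPr\surjection \modN(n)$: using the explicit actions from App.~\bref{app:proj-mod-base} and App.~\bref{app:indecomp-mod-base}, this kernel is again a length-$2$ module whose socle and top match the claimed $\modW$- or $\modN$-module, which is then uniquely determined by Thm.~\bref{thm:cat-decomp}.

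The main obstacle is the transition at $n=r$, where the widening $\modW$-pattern must truncate to the fixed-width $\modN$-pattern. This happens because the leftmost top subquotient of the $(r{-}1)$st syzygy has second index $1$ and there is no irreducible $\repX_{\cdot,0}$ for it to extend to, so one ``leg'' of the diagram~\eqref{schem-W} is chopped off, leaving the shape of~\eqref{schem-N}. By the uniqueness part of Thm.~\bref{thm:cat-decomp}, the truncated syzygy must be $\modN^{-\alpha}_{p-s,1}(r)$ (up to the appropriate sign), after which every further syzygy inherits the fixed width $r$. The odd-$r\geq 3$ case is handled by the same argument together with the indicated sign flip after $\partial_{r-2}$. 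Once this transition is verified, exactness of the whole resolution is a bookkeeping check, equivalent to a term-by-term Grothendieck-ring identity between the subquotient contents of the projective covers and of the $\modW$- and $\modN$-type modules involved.
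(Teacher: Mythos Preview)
Your proposal is correct and is essentially the same approach as the paper's, which is simply ``inspection based on the definition of the projective covers in~\bref{proj-mod} with their subquotient structure~\eqref{schem-proj}, and the definition of indecomposable modules from~\bref{sec:M-W}''; you have just spelled out in detail what that inspection amounts to, namely the iterative computation of syzygies and their identification as $\modW$- or $\modN$-type modules via their subquotient diagrams. Your remark that the transition at $n=r$ is forced because $\modPr^{-\alpha}_{p-s,1}$ has only the three-step Loewy structure (no $\repX^{\alpha}_{s,0}$ in the middle layer) is exactly the mechanism behind the switch from the widening $\modW$-pattern to the stable $\modN$-pattern, and your appeal to Thm.~\bref{thm:cat-decomp} for the uniqueness of each syzygy is logically clean since Thm.~\bref{lem:extn-irrep} has the independent spectral-sequence proof.
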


\begin{rem}
One can easily obtain injectve resolutions dualising the statement in~\bref{lem:proj-res-irrep} -- by reversing all the homomorphisms and replacing all the modules of $\modW$- and $\modN$-types by corresponding conrtagredient ones, $\modM$- and $\modNbar$-modules.
\end{rem}

\subsection{Higher extensions via the resolution}\label{sec:extn-res-proof}
We now use the projective resolutions to calculate $\Extn$ between
simple modules. They are collected in~\bref{lem:extn-irrep}.
 
 The contravariant functor $\Hom(-,\repX^{\alpha'}_{s',r'})$ applied to the
  projective resolution~\eqref{proj-res-irrep-triv} of $\repX^{\alpha}_{s,1}$ described in~\bref{lem:proj-res-irrep} gives the cochain complex
  \begin{equation*}
  0\xrightarrow{\delta_0}\Hom(\modPr^{\alpha}_{s,1},\repX^{\alpha'}_{s',r'})
    \xrightarrow{\delta_1}\Hom(\modPr^{-\alpha}_{p-s,2},\repX^{\alpha'}_{s',r'})
    \xrightarrow{\delta_2}\Hom(\modPr^{\alpha}_{s,3},\repX^{\alpha'}_{s',r'})\xrightarrow{\delta_3}\ldots,
  \end{equation*}
 where an odd $n$th term is non-trivial only if $\alpha'=\alpha$, $s'=s$ and $r'=n$, and it is one-dimensional, while
  an even $n$th term is non-trivial only if $\alpha'=-\alpha$, $s'=p-s$ and $r'=n$, and it is also one-dimensional.
 We thus have
   that all coboundary morphisms $\delta_i=0$, $i\geq0$.   The cohomologies $\ker(\delta_{n+1})/\im(\delta_{n})$ of this complex 
  give then 
   $\Extn(\repX^{\alpha}_{s,1},\repX^{\alpha'}_{s',r'})$.

  Applying the contravariant functor $\Hom(-,\repX^{\alpha'}_{s',r'})$ to the
  projective resolution~\eqref{proj-res-irrep} of $\repX^{\alpha}_{s,r}$, for even $r$, gives the cochain
  complex\pagebreak[3]
  \begin{multline*}
    \bP:
    0\xrightarrow{\delta_0}\Homm(\modPr^{\alpha}_{s,r},\repX^{\alpha'}_{s',r'})
    \xrightarrow{\delta_1}
    \Homm(\modPr^{-\alpha}_{p-s,r-1}\oplus\modPr^{-\alpha}_{p-s,r+1},\repX^{\alpha'}_{s',r'})
    \xrightarrow{\delta_2}\ldots\xrightarrow{\delta_{r-1}}\\
    \xrightarrow{\delta_{r-1}}\Homm(\modPr^{-\alpha}_{p-s,1}\oplus\dots\oplus
    \modPr^{-\alpha}_{p-s,2r-1},\repX^{\alpha'}_{s',r'})
\xrightarrow{\boldsymbol{\delta_r}}\Homm(\modPr^{\alpha}_{s,2}\oplus\dots\oplus\modPr^{\alpha}_{s,2r},
\repX^{\alpha'}_{s',r'})\\
\xrightarrow{\delta_{r+1}}
\Homm(\modPr^{-\alpha}_{p-s,3}\oplus\modPr^{-\alpha}_{p-s,5}\oplus\dots\oplus
    \modPr^{-\alpha}_{p-s,2r+1},\repX^{\alpha'}_{s',r'})
\xrightarrow{\partial_{r+2}}\ldots
  \end{multline*}
  where we set $\Homm\equiv\Hom$ and  the $(n\!+\!1)$th term for even $n<r$ is given by
  \begin{equation}\label{extn-odd}
    \dots\;\xrightarrow{\delta_n}\Hom(
    \underbrace{\modPr^{\alpha}_{s,r-n}\oplus\modPr^{\alpha}_{s,r-n+2}\oplus\dots\oplus
      \modPr^{\alpha}_{s,r+n}}_{n+1},\repX^{\alpha'}_{s',r'})
    \xrightarrow{\delta_{n+1}}\;\dots
  \end{equation}
and for odd $n<r$, the $(n\!+\!1)$th term is 
  \begin{equation}\label{extn-even}
    \dots\;\xrightarrow{\delta_n}\Hom(
    \underbrace{\modPr^{-\alpha}_{p-s,r-n}
      \oplus\modPr^{-\alpha}_{p-s,r-n+2}
      \oplus\dots\oplus\modPr^{-\alpha}_{p-s,r+n}}_{n+1},\repX^{\alpha'}_{s',r'})
    \xrightarrow{\delta_{n+1}}\;\dots
  \end{equation}
  The term in~\eqref{extn-odd} is non-zero only if $s'=s$,
  $\alpha'=\alpha$ and $r'=r+2k$, with $-n/2 \leq k\leq n/2$, and is
  isomorphic to $\oC$ while the term in~\eqref{extn-even} is non-zero
  when $s'=p-s$, $\alpha'=-\alpha$ and $r'=r+2k+1$, with $-(n+1)/2
  \leq k\leq (n-1)/2$, and is also isomorphic to $\oC$. The two cases
  have zero intersection and we thus have that all coboundary
  morphisms $\delta_i=0$, $i\geq1$.  The cohomologies
  $\ker(\delta_{n+1})/\im(\delta_{n})$ of the complex $\bP$ give then the
  $n$-extension groups
  $\Extn(\repX^{\alpha}_{s,r},\repX^{\alpha'}_{s',r'})$ for all
  $n<r$. The higher-extension groups are calculated using similar
  analysis of the cohomologies of $\bP$ for $n\geq r$.

  To calculate $\Extn(\repX^{\alpha}_{s,r},\repX^{\alpha'}_{s',r'})$ for odd $r$, we proceed similarly using  the
  projective resolution for $\repX^{\alpha}_{s,r}$ in the odd-$r$ case described in~\bref{lem:proj-res-irrep}.

We note that the $n$-extensions between all irreducible modules computed using the projective resolutions are in remarkable coincidence
with the result~\eqref{extn-sl-inv} obtained by the direct calculation which uses the spectral sequence. 

\section{Tensor products of indecomposable modules}\label{app:prod_ind}
We use here Thm.~\bref{thm:tensor-third} and~\eqref{eq:modN-n}-\eqref{eq:modWM-n} to give  an exhaustive list of
tensor products of all indecomposable modules with the semisimple length not greater than $2$.
\begin{align*}
&\repX^{\alpha}_{s_1,r_1}\tensor\modN^{\beta}_{s_2,r_2}(n)=\left(\repX^{\alpha}_{s_1,r_1}\tensor\repX^{\beta}_{1,n}\right)\tensor\modN^+_{s_2,r_2+n-1}(1),\\
&\repX^{\alpha}_{s_1,r_1}\tensor\modNbar^{\beta}_{s_2,r_2}(n)=\left(\repX^{\alpha}_{s_1,r_1}\tensor\repX^{\beta}_{1,n}\right)\tensor\modNbar^+_{s_2,r_2+n-1}(1),\\
&\repX^{\alpha}_{s_1,r_1}\tensor\modW^{\beta}_{s_2,r_2}(n)=\left(\repX^{\alpha}_{s_1,r_1}\tensor\repX^{\beta}_{1,r_2+n}\right)\tensor\modN^-_{p-s_2,n}(1),\\
&\repX^{\alpha}_{s_1,r_1}\tensor\modM^{\beta}_{s_2,r_2}(n)=\left(\repX^{\alpha}_{s_1,r_1}\tensor\repX^{\beta}_{1,r_2+n}\right)\tensor\modNbar^-_{p-s_2,n}(1),\\
&\modN^{\alpha}_{s_1,r_1}(n_1)\tensor\modN^{\beta}_{s_2,r_2}(n_2) = \left(\repX^{\alpha}_{1,n_1}\tensor\repX^{\beta}_{1,n_2}\right)\tensor\left(\modN^+_{s_1,r_1+n_1-1}(1)\tensor\modN^+_{s_2,r_2+n_2-1}(1)\right),\\
&\modN^{\alpha}_{s_1,r_1}(n_1)\tensor\modNbar^{\beta}_{s_2,r_2}(n_2) = \left(\repX^{\alpha}_{1,n_1}\tensor\repX^{\beta}_{1,n_2}\right)\tensor\left(\modN^+_{s_1,r_1+n_1-1}(1)\tensor\modNbar^+_{s_2,r_2+n_2-1}(1)\right),\\
&\modN^{\alpha}_{s_1,r_1}(n_1)\tensor\modW^{\beta}_{s_2,r_2}(n_2) = \left(\repX^{\alpha}_{1,n_1}\tensor\repX^{\beta}_{1,r_2+n_2}\right)\tensor\left(\modN^+_{s_1,r_1+n_1-1}(1)\tensor\modN^-_{p-s_2,n_2}(1)\right),\\
&\modN^{\alpha}_{s_1,r_1}(n_1)\tensor\modM^{\beta}_{s_2,r_2}(n_2) = \left(\repX^{\alpha}_{1,n_1}\tensor\repX^{\beta}_{1,r_2+n_2}\right)\tensor\left(\modN^+_{s_1,r_1+n_1-1}(1)\tensor\modNbar^-_{p-s_2,n_2}(1)\right),\\
&\modNbar^{\alpha}_{s_1,r_1}(n_1)\tensor\modNbar^{\beta}_{s_2,r_2}(n_2) = \left(\repX^{\alpha}_{1,n_1}\tensor\repX^{\beta}_{1,n_2}\right)\tensor\left(\modNbar^+_{s_1,r_1+n_1-1}(1)\tensor\modNbar^+_{s_2,r_2+n_2-1}(1)\right),\\
&\modNbar^{\alpha}_{s_1,r_1}(n_1)\tensor\modW^{\beta}_{s_2,r_2}(n_2) = \left(\repX^{\alpha}_{1,n_1}\tensor\repX^{\beta}_{1,r_2+n_2}\right)\tensor\left(\modNbar^+_{s_1,r_1+n_1-1}(1)\tensor\modN^-_{p-s_2,n_2}(1)\right),\\
&\modNbar^{\alpha}_{s_1,r_1}(n_1)\tensor\modM^{\beta}_{s_2,r_2}(n_2) = \left(\repX^{\alpha}_{1,n_1}\tensor\repX^{\beta}_{1,r_2+n_2}\right)\tensor\left(\modNbar^+_{s_1,r_1+n_1-1}(1)\tensor\modNbar^-_{p-s_2,n_2}(1)\right),\\
&\modW^{\alpha}_{s_1,r_1}(n_1)\tensor\modW^{\beta}_{s_2,r_2}(n_2) = \left(\repX^{\alpha}_{1,r_1+n_1}\tensor\repX^{\beta}_{1,r_2+n_2}\right)\tensor\left(\modN^-_{p-s_1,n_1}(1)\tensor\modN^-_{p-s_2,n_2}(1)\right),\\
&\modW^{\alpha}_{s_1,r_1}(n_1)\tensor\modM^{\beta}_{s_2,r_2}(n_2) = \left(\repX^{\alpha}_{1,r_1+n_1}\tensor\repX^{\beta}_{1,r_2+n_2}\right)\tensor\left(\modN^-_{p-s_1,n_1}(1)\tensor\modNbar^-_{p-s_2,n_2}(1)\right),\\
&\modM^{\alpha}_{s_1,r_1}(n_1)\tensor\modM^{\beta}_{s_2,r_2}(n_2) = \left(\repX^{\alpha}_{1,r_1+n_1}\tensor\repX^{\beta}_{1,r_2+n_2}\right)\tensor\left(\modNbar^-_{p-s_1,n_1}(1)\tensor\modNbar^-_{p-s_2,n_2}(1)\right).
\end{align*}
Explicit decompositions of these tensor products easily follow by
applying~\eqref{irred-fusion} and Thm.~\bref{tens-prod-decomp-1},
and Thm.~\bref{tens-prod-decomp-2}. 

\medskip

\section{The $A_N$ quivers}\label{app:quivers}
We here recall basic notions about quivers~\cite{[CB],[ARS]}.

\subsection{Quivers and their representations}\label{sec:quivers}
A quiver is an oriented graph, that is, a quadruple $(I, A, s, t)$,
consisting of a finite set $I$ of vertices, a finite set $A$ of
oriented edges (arrows), and two maps $s$ and $t$ from $A$ to $I$.  An
oriented edge $a \in A$ starts at the vertex $s(a)$ and terminates at
$t(a)$.

A \textit{representation} of a quiver $Q$ (over $\oC$) is a collection
of finite-dimensional vector spaces $V_i$ over $\oC$, one for each
vertex~$i \in I$ of~$Q$, and $\oC$-linear maps $f_{ij}: V_i \to V_j$,
one for each oriented edge $\stackrel{i}{\bullet}\xrightarrow{a}
\stackrel{j}{\bullet}$.  \ The \textit{dimension} of a representation
$\rho$ of $Q$ is an element of $\oZ[I]$ given by the dimensions of
$V_i$, $i\in I$: $\dim(\rho)= \sum_{\substack{i\in
    I}}\dim_{\oC}(V_i)i$.

A \textit{morphism} from a representation $\rho$ of a quiver $Q$ to
another representation $\rho'$ of $Q$ is an $I$-graded $\oC$-linear
map $\phi=\bigoplus_{i\in I}\phi_i:\bigoplus_{i\in
  I}V_{i}\to\bigoplus_{i\in I}V_{i}'$ satisfying $f_{ij}'\phi_i=\phi_j
f_{ij}$ for each oriented edge $\stackrel{i}{\bullet}\xrightarrow{a}
\stackrel{j}{\bullet}$.  This gives the category of representations of
the quiver $Q$, to be denoted by $\Rep(Q)$ in what follows.

If a quiver $Q$ has no oriented cycles, isomorphism classes of simple
objects in $\Rep(Q)$ are in a one-to-one correspondence with vertices
of $Q$.  The simple object corresponding to a vertex $i\in I$ is given
by the vector spaces
\begin{align*}
  V_j&=
 \begin{cases}
   \oC,\quad j\!=\!i,\\
   0,\quad \text{otherwise}
 \end{cases}\\
 \intertext{and $\oC$-linear maps}
 f_{ij}&=0,\quad \text{for all}\; i,j\in I.
\end{align*}
A quiver is said to be of \textit{finite} type if the underlying
nonoriented graph is a Dynkin graph of finite type. 
A quiver is said
to be of \textit{simply laced} type if it does not have a pair of
vertices connected by more than one arrow.

\subsection{The category $\Rep(\qK)$ of representations of the
  $A_N$ quiver}\label{subsec:rep-AN} 
The $A_N$ quiver $\qK_N$ is simply-laced and has $N$ vertices connected by $N-1$ single edges,
 that is,
 $\qK_N =
 (\{1,2,\dots,N\},\linebreak[0]\{g_i\},s,t)$, where $s(g_i)=i+1$ and $t(g_i)=i$ for odd $i$, and  $s(g_i)=i$ and $t(g_i)=i+1$ for even $i$:
 \begin{equation*}
 \qK_N:\qquad
   \xymatrix{
     {\stackrel{1}{\bullet}}
     &{\stackrel{2}{\bullet}}\ar@<+0.8ex>[l]_{g_1}\ar@<-0.8ex>[r]^{g_2} 
     &{\stackrel{3}{\bullet}}
     &{\stackrel{4}{\bullet}}\ar@<+0.8ex>[l]_{g_3}\ar@<-0.8ex>[r]^{g_4} 
     &\dots
     &{\stackrel{N}{\bullet}}\ar@<+0.8ex>[l]_{g_{N-1}} 
   }
 \end{equation*}
 where we also assume that $N$ is even, for simplicity.
 A representation $\rho$ of $\qK_N$ is a collection
 $((V_1,V_2,\dots,V_N),(f_{2,1},f_{2,3},f_{4,3},f_{4,5},\dots,f_{N,N-1}))$ consisting of $N$
 vector
 spaces $V_j$ and $N-1$ linear maps $f_{j,j\pm1}\in
 \mathrm{Hom}_{\oC}(V_j,V_{j\pm1})$, where $2\leq j\leq N$ is even.  The dimension of~$\rho$ is given by
 $\dim(\rho)=(\dim_{\oC}(V_1), \dim_{\oC}(V_2), \dots, \dim_{\oC}(V_N))$.  Simple objects in
 the category $\Rep(\qK_N)$ are given by the $N$ representations
 $\rho_r=((\delta_{ir}\oC),(0,\dots,0))$.
  We now
recall the classification of indecomposable representations of the
$A_N$ quiver $\qK_N$, summarized in~\bref{prop:repr-AN} below.

There is a correspondence between indecomposable representations of a
quiver and the set $\Delta_{+}$ of positive roots of the Lie algebra
corresponding to the Dynkin graph associated with the quiver.  This
correspondence is one-to-one for a quiver of simply laced finite type
\cite{[Gab],[BGP]}.  Namely, a representation $\rho$ of a quiver $Q$
is indecomposable if and only if $\dim(\rho)\in\Delta_{+}$ and,
conversely, for every $\alpha\in\Delta_{+}$, there is, up to an
isomorphism, a unique indecomposable representation $\rho$ of the
quiver $Q$ such that $\dim(\rho)=\alpha$.

The nonoriented graph associated with the quiver $\qK_N$ is
the finite Dynkin graph $A_N$.  It is well known that
$\alpha\in\Delta_+$ is a positive root of $A_N$ if
$\alpha=(\alpha_i)$ with $\alpha_i\in\{0,1\}$ and at least one $\alpha_i=1$ for $1\leq i\leq N$.
 In particular, $\alpha_r=(\delta_{ir})$, for $1\leq r\leq N$, are the simple roots.
 The simple roots $\alpha_r$ correspond to the respective simple objects
$\rho_r$ in the
category $\Rep(\qK_N)$.  The other positive roots $\alpha$ are in a one-to-one
correspondence with indecomposable representations of dimension $\alpha$:
$\rho(\alpha)=((\alpha_i\oC),(f_{2,1},f_{2,3},f_{4,3}\dots))$ with maps $f_{i,i\pm1}$ defined in an obvious way. We summarize these results in the
following well-known proposition (see, e.g., \cite{[FMV]}).

\begin{prop}\label{prop:repr-AN}\mbox{}
  \begin{enumerate}
    
  \item If $\alpha\notin\Delta_{+}$, then the set of indecomposable
    representations of $\qK_N$ with the dimension $\alpha$ is
    empty.\pagebreak[3]
    
  \item If $\alpha\!\in\!\Delta_{+}$, then an
    indecomposable representation of $\qK_N$ with the dimension
    $\alpha$ is either the representation $\rho(\alpha)=((\alpha_i\oC),(f_{2,1},f_{2,3},f_{4,3}\dots))$ with maps $f_{i,i\pm1}$ defined in an obvious way, where $\alpha=(\alpha_i)$.
 \end{enumerate}
\end{prop}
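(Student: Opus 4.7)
My plan is to split the statement into three sub-claims and address them separately, exploiting that $\qK_N$ is both simply-laced and linear (tree-shaped). This is a special case of Gabriel's theorem, but for type $A_N$ it admits a direct proof without invoking reflection functors in their full generality.

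First I would handle existence and indecomposability: given $\alpha\in\Delta_+$, whose support $\{i:\alpha_i=1\}$ is a connected interval $[a,b]\subset\{1,\dots,N\}$, the representation $\rho(\alpha)$ puts a copy of $\oC$ at each vertex of $[a,b]$ and zero elsewhere, with every connecting map on the interval taken to be $\id_{\oC}$. Any endomorphism of $\rho(\alpha)$ is a tuple of scalars $(\lambda_i)$ on the nonzero spaces, and compatibility with the identity maps $f_{i,i\pm1}$ immediately forces $\lambda_a=\lambda_{a+1}=\dots=\lambda_b$. Hence $\End(\rho(\alpha))\cong\oC$, which is local, so $\rho(\alpha)$ is indecomposable. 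Uniqueness up to isomorphism among representations of dimension $\alpha$ whose maps are all nonzero on the interval follows by a basis change.

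Next I would prove that every indecomposable has a dimension vector in $\Delta_+$, which simultaneously shows that representations with $\alpha\notin\Delta_+$ are decomposable and produces the classification. Here I would proceed by induction on $|\alpha|:=\sum_i\alpha_i$. If the support of $\dim(\rho)$ is disconnected, the defining arrows never cross the gap, so $\rho$ splits as a direct sum according to the connected components of its support, reducing to smaller dimension. If the support is a connected interval but some $\alpha_i\geq 2$, I would locate an interior vertex $i$ where the combined pair of adjacent maps fails to have full column or row rank and use linear algebra over the $\oC$-vector spaces $V_i$ to peel off a direct summand of the form $\rho(\beta)$ for some positive root $\beta$, again reducing $|\alpha|$ by one. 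Iterating the decomposition and applying the Krull--Schmidt theorem gives that every finite-dimensional representation decomposes into the $\rho(\beta)$'s, $\beta\in\Delta_+$, which by Krull--Schmidt forces any indecomposable to be one such $\rho(\beta)$, establishing both parts (1) and (2).

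The main obstacle will be the rank-reduction step in the induction, because picking a direct summand inside a representation of arbitrary dimensions $(\alpha_i)$ requires working simultaneously with the two linear maps incident to each vertex, whose orientations alternate along $\qK_N$. The trick is to exploit that each vertex is incident to at most two arrows and to choose bases of the $V_i$ that simultaneously adapt to $\ker f_{i,i-1}\oplus\im f_{i,i-1}$ and to the corresponding decomposition on the adjacent edge; the simply-laced alternating orientation of $\qK_N$ ensures no obstruction to doing this consistently along the interval. Once this splitting lemma is established, the induction closes and the classification in~\bref{prop:repr-AN} follows.
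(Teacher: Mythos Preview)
The paper does not actually prove this proposition: it records it as a well-known special case of Gabriel's theorem, citing \cite{[Gab],[BGP]} for the general correspondence between indecomposables and positive roots of simply-laced Dynkin quivers, and \cite{[FMV]} for the specific $A_N$ formulation. So there is no argument in the paper to compare against beyond ``invoke Gabriel''.

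Your route is genuinely different: you aim for a self-contained linear-algebra proof tailored to the linear shape of $\qK_N$, bypassing reflection functors entirely. The existence and indecomposability part (computing $\End(\rho(\alpha))\cong\oC$ via the chain of identity maps) is clean and correct, as is the observation that disconnected support forces a splitting. What your approach buys is elementarity and independence from the BGP machinery; what the paper's citation buys is brevity and the ability to treat all ADE types uniformly.

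There is, however, a gap in your inductive step. Your criterion for locating a splitting --- an interior vertex where ``the combined pair of adjacent maps fails to have full column or row rank'' --- is not the right invariant. Consider the source vertex $2$ in $1\xleftarrow{}2\xrightarrow{}3$ with dimension vector $(1,2,1)$: both outgoing maps $f_{2,1},f_{2,3}:\oC^2\to\oC$ are surjective (full row rank), and the combined map $(f_{2,1},f_{2,3}):\oC^2\to\oC^2$ can be an isomorphism, yet the representation still decomposes (as $\rho(1{,}1{,}0)\oplus\rho(0{,}1{,}1)$ in that case). The decomposition is governed not by individual ranks but by the relative position of kernels and images along the chain. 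The standard fix is to work from an \emph{end} of the support interval (a source or sink of the restricted quiver): at a sink with incoming map not injective, split off a simple; if injective, absorb the sink into its neighbour and recurse on a shorter quiver. Dually at a source. This terminates and yields the interval decomposition directly. Once you reformulate the splitting lemma this way, your induction closes and the argument is complete.
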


\end{document}